\documentclass[10pt]{article}

%%%%%%%%%%%% These two commands are limitations of font package such that the math fonts can be called normally.%%%%%%%
\usepackage{bm}
\usepackage{tikz}
\usepackage{scalefnt}
\usetikzlibrary{arrows,calc,scopes}
\usetikzlibrary{shapes.geometric}
\usetikzlibrary{decorations.pathreplacing,decorations.markings,decorations.pathmorphing}
\usetikzlibrary{plotmarks}
\usetikzlibrary{matrix}
\usepackage{color}
\usepackage{amsthm,amsfonts,amsmath,amssymb}
\usepackage{pxfonts}
\usepackage{euscript}
\usepackage{hyperref}

\usepackage{xypic}
\usepackage{graphicx}
\usepackage{epstopdf} % for pdflatex
\usepackage{enumerate}
\usepackage{bbold}

%change footnote symbol
\usepackage[symbol]{footmisc}
\usepackage{appendix}

\addtolength{\hoffset}{-1.3cm}
\addtolength{\voffset}{-1.3cm}
\addtolength{\textwidth}{2.6cm}
\addtolength{\textheight}{2.6cm}

\newcommand\void[1]       {}

%%%%%%%%%%%%%%%%%%%%%%%%%%%%%%%%%%%%%%%%%%%%%%%%
%%%%%%%%%%%%%%%%%%%%%%%%%%%%%%%%%%%%%%%%%%%%%%%%%%%%% 
%%Theorem Style%%
\newtheorem{thm}{Theorem}
\newtheorem{prp}[thm]{Proposition}
\newtheorem{crl}[thm]{Corollary}
\newtheorem{lem}[thm]{Lemma}

\newtheorem{prop-ph}[thm]{Proposition$^{\mathrm{ph}}$}
\newtheorem{cor-ph}[thm]{Corollary$^{\mathrm{ph}}$}
\newtheorem{lemma-ph}[thm]{Lemma$^{\mathrm{ph}}$}
\newtheorem{thm-ph}[thm]{Theorem}

\theoremstyle{definition}
\newtheorem{dfn}[thm]{Definition}
\newtheorem{expl}[thm]{Example}
\newtheorem{notation}[thm]{Notation}
\newtheorem{defn-ph}[thm]{Definition$^{\mathrm{ph}}$}

\theoremstyle{remark}
\newtheorem{rmk}[thm]{Remark}

\numberwithin{equation}{section}
\numberwithin{thm}{section}
%%%%%%%%%%%%%%%%%%%%%%%%%%%%%%%%%%%%%%%%%%%%%%%%%%%%%%%%
%%Abbreviation%%
%%Abbreviation of commands%%

\newcommand\be            {\begin{equation}}
\newcommand\ee            {\end{equation}}
\newcommand\bea           {\begin{eqnarray}}
\newcommand\eea         {\end{eqnarray}}
\newcommand\bnu          {\begin{enumerate}}
\newcommand\enu          {\end{enumerate}}
\newcommand{\bmm}{\begin{matrix}}
\newcommand{\emm}{\end{matrix}}
\newcommand{\pf}{\begin{proof}}
\newcommand{\epf}{\end{proof}}

\newcommand\id            {\mathrm{id}}

\newcommand\rVec    {\mathrm{Vec}}
\newcommand\rep     {\mathrm{Rep}}

\newcommand\one    {\mathbb{1}}

\newcommand\bC            {\mathbb{C}}
\newcommand\bN            {\mathbb{N}}

\newcommand\bZ            {\mathbb{Z}}
\newcommand\Zb            {\mathbb{Z}}

\newcommand\CC           {\EuScript{C}}
\newcommand\CD           {\EuScript{D}}

\newcommand\CH         {\EuScript{H}}

\newcommand\CM          {\EuScript{M}}
\newcommand\CN         {\EuScript{N}}

\newcommand\CP         {\EuScript{P}}

\newcommand\CR         {\EuScript{R}}

\newcommand\CX         {\EuScript{X}}

\newcommand\fM           {\mathfrak{M}}
\newcommand{\fU}          {\mathfrak{U}}
\newcommand\fZ           {\text{\usefont{U}{euf}{m}{n}Z}}

\newcommand\cO            {\mathcal{O}}
\newcommand\cX            {\mathcal{X}}

\newcommand\bk    {\mathbb{k}}
\newcommand\bfm   {\mathbf{m}}
\newcommand\bfe   {\mathbf{e}}

 %{\overset{n}{\sim}}

%% math font
\DeclareMathAlphabet{\mathcal}{OMS}{cmsy}{m}{n} % use the original mathcal font

 %the big \langle in the bra
 %the big \vert on the right
\newcommand{\BLvert}{\Biggl\vert\bmm\vspace{-3pt}\scalefont{0.6}} %the big \vert on the right
 % the big \vert in the middle of a bra-ket inner product.
\newcommand{\Brangle}{\emm\Biggr\rangle}

%draw a straight line with label i

%draw a curved line with label i

\newcommand{\Psix}[3][1]{
\begin{tikzpicture}[scale=0.8]
\node[name=s, regular polygon, regular polygon sides=6, minimum size=1cm, outer sep=0pt ,draw] at (0,0) {}; %draw the plaquette
%
%The for loop below draw the legs of the plaquett with labels.
\foreach \anchor/\x/\y /\xx/\yy /\b in
{corner 1/0.17/0.17*1.732/-0.11/0.18/1, corner 2/-0.17/0.17*1.732/0.07/0.18/2, corner 3/-0.34/0/-0.15/-0.18/3, corner 4/-0.17/-0.17*1.732/-0.22/-0.05/4, corner 5/0.17/-0.17*1.732/0.2/-0.05/5, corner 6/0.34/0/0.15/-0.18/6}
{
 \draw[shift=(s.\anchor)] (0,0) -- (\x,\y) node at(\xx,\yy) {$#2_{\text{\scalebox{0.7}{$\b$}}}$};
 \ifnum #1=1
 \draw[shift=(s.\anchor),<-,>=stealth', line width=0.01pt] (s.\anchor) -- (\x,\y);
 \fi
 }
%
%The for loop below adds labels to the edges of the plaquette
\foreach \anchor/\xx/\yy /\a in
{side 1/0/-0.18/1, side 2/-0.18/0.05/2, side 3/0.15/0.05/3, side 4/0/-0.18/4, side 5/-0.18/0.05/5, side 6/0.15/0.05/6}
 \draw[shift=(s.\anchor)]  node at(\xx,\yy) {$#3_{\text{\scalebox{0.7}{$\a$}}}$};
%The following loop adds an arrow to each edge conterclockwisely
\ifnum #1=1{
  \foreach \anchorr/\anchorf in
   {corner 1/corner 2, corner 2/corner 3, corner 3/corner 4, corner 4/corner 5, corner 5/corner 6, corner 6/corner 1}
   \draw[shift=(s.\anchorr), ->, >=stealth', line width=0.01pt]  (s.\anchorr) -- (s.\anchorf);}
 \else {
  \foreach \anchorb/\anchorw in
   {corner 1/corner 2, corner 3/corner 4, corner 5/corner 6} {
   \node[fill=black, circle, minimum size=2.5, inner sep=0, outer sep=0, draw] at(s.\anchorb) {};
   \node[fill=white, circle, minimum size=2.5, inner sep=0, outer sep=0, draw] at(s.\anchorw) {};}
}
\fi
\end{tikzpicture}
}

\newcommand{\Ygraph}[4][1]{
\begin{tikzpicture}[scale=0.6,baseline]
  \draw [->,>=stealth',line width=0.01pt] (30:0.1) -- (0,0) ; %draw an arrow thin enough
    \draw (30:1) -- (0,0) ; %draw the line
  \draw [->,>=stealth',line width=0.01pt] (150:0.1) -- (0,0); %draw an arrow thin enough
    \draw (150:1) -- (0,0); %draw the line
  \node at(-0.5,0.01) {$#2$};
  \node at(0.5,0.01) {$#4$};
  \ifnum #1=1 {
   \draw [->,>=stealth',line width=0.01pt] (0,-1/20) -- (0,0); %draw an arrow thin enough
   \node at(-0.3,-0.5) {$#3$};
  }
  \else{
    \draw [<-,>=stealth',line width=0.01pt] (0,-1/2) -- (0,0); %draw an arrow thin enough
    \node at(-0.4,-0.5) {$#3$};
    }
  \fi
    \draw (0,-1) -- (0,0); %draw the line
  \end{tikzpicture}
  }

\makeatletter
\newcommand\xleftrightarrow[2][]{%
  \ext@arrow 9999{\longleftrightarrowfill@}{#1}{#2}}
\newcommand\longleftrightarrowfill@{%
  \arrowfill@\leftarrow\relbar\rightarrow}
\makeatother

\tikzset{middlearrow/.style={
        decoration={markings,
            mark= at position 0.5 with {\arrow{#1}} ,
        },
        postaction={decorate}
    }
}

\newcommand{\Cross}{$\mathbin{\tikz [x=1.4ex,y=1.4ex,line width=.2ex, red] \draw (0,0) -- (1,1) (0,1) -- (1,0);}$}%
\newcommand{\Checkmark}{$\color{teal}\checkmark$}

\newcommand{\Se}[1]{$S_{\text{bdy}}^{\mathbf{e}}(
  \begin{tikzpicture}[x=7.2mm,y=4.16628mm,baseline]

  \tikzset{
    pointb/.style={circle,inner sep=0pt,minimum size=0.5mm,color=black,fill=black,line width=0.5mm,draw
    }
  }

  \tikzset{
    pointw/.style={circle,inner sep=0pt,minimum size=0.5mm,color=white,fill=white,line width=0.5mm,draw
    }
  }

  \tikzset{->-/.style={decoration={markings,mark=at position #1 with {\arrow{stealth}}},postaction={decorate}}}

% \fill[] (-0.53,-1.2)--(5.53,-1.2)--(5.53,8.7)--(-0.53,8.7)--cycle;
\fill[top color=white,bottom color=cyan!20,draw=white] (0.9,-0.2) rectangle   (2.1,0.6);

% \draw[cyan!50,line width=0.14cm] plot[smooth,tension=0.9] coordinates {(1,6) (0.55,4.5) (1,3.05) (2,2.95) (2.5,1.5)};
% \node[line width=0.1cm,blue!70,regular polygon, draw, regular polygon sides = 6, minimum size=11.35mm,rotate=90] at (2.5,1.5) {};
% \node[line width=0.1cm,blue!70,regular polygon, draw, regular polygon sides = 6, minimum size=11.35mm,rotate=90] at (1,6) {};
%\node[below, blue, scale=1.3] at (2.5,1.9) {$m$}; 
% \node[above, blue, scale=1.3] at (1,5.6) {$m$}; 

\draw[line width=0.08cm, gray!50] (1,0)--(1.5,0.5)--(2,0);

\draw[purple!50,line width=0.09cm] plot[smooth,tension=0.9] coordinates {(0.95,-0.1)(1.5,0.5) (2.05,-0.1)};

\foreach \i in {2,...,3} {
         \node[pointw] at (1*\i-1,0) {};};

     \node[pointb] at (1.5,0.5) {};
\end{tikzpicture}#1)$}

\newcommand{\Sm}[1]{$S_{\text{bdy}}^{\mathbf{m}}(
\begin{tikzpicture}[x=7.2mm,y=4.16628mm,baseline]
  %x=10mm,y=5.7866mm
  %11.5467
  %9.4
    % some styles

    \tikzset{
      pointb/.style={circle,inner sep=0pt,minimum size=0.5mm,color=black,fill=black,line width=0.5mm,draw
      }
    }
  
    \tikzset{
      pointw/.style={circle,inner sep=0pt,minimum size=0.5mm,color=white,fill=white,line width=0.5mm,draw
      }
    }

    \tikzset{->-/.style={decoration={markings,mark=at position #1 with {\arrow{stealth}}},postaction={decorate}}}

 % \fill[] (-0.53,-1.2)--(5.53,-1.2)--(5.53,8.7)--(-0.53,8.7)--cycle;
\fill[top color=white,bottom color=cyan!20,draw=white] (1.4,-.2) rectangle   (2.6,0.6);

 % \draw[cyan!50,line width=0.14cm] plot[smooth,tension=0.9] coordinates {(1,6) (0.55,4.5) (1,3.05) (2,2.95) (2.5,1.5)};
 % \node[line width=0.1cm,blue!70,regular polygon, draw, regular polygon sides = 6, minimum size=11.35mm,rotate=90] at (2.5,1.5) {};
 % \node[line width=0.1cm,blue!70,regular polygon, draw, regular polygon sides = 6, minimum size=11.35mm,rotate=90] at (1,6) {};
 %\node[below, blue, scale=1.3] at (2.5,1.9) {$m$}; 
 % \node[above, blue, scale=1.3] at (1,5.6) {$m$}; 

 \draw[line width=0.08cm, gray!50] (1.5,0.5)--(2,0)--(2.5,0.5);
 \foreach \i in {2,...,3} {
           \node[pointb] at (1*\i-0.5,0.5) {};};
  \node[pointw] at (2,0) {};

  \draw[cyan!50,line width=0.09cm] plot[smooth,tension=0.9] coordinates { (1.6,-0.1) (2,0.5)(2.4,-0.1)};

  \end{tikzpicture}#1)$}

%%%%%%%%%%%%%%%%%%%%%%%%%%%%%%%%%%%%%%%%%%%%%%%%%%%%%

\begin{document}

\begin{center} \LARGE
%Categories of topological orders and boundary-bulk relation
The boundary phase transitions of the 2+1D $\Zb_N$ topological order via topological Wick rotation
%of taking centers of higher categories
%with values given by the center of higher categories
%between higher categories
%and the functoriality of center
%between higher categories
% theory Center functor in higher category theory realized as a holographic boundary to bulk map in topological order
\\
%The boundary to bulk map in topological order\\ as the center functor between higher categories
%(??? I still think the second one is more clear. One do not know what you try to say from the first version)
\end{center}

\vskip 2em
\begin{center}
{Yalei Lu$^{a,c,e,}$
\footnote[2]{luleisure@gmail.com}
Holiverse Yang$^{b,c,d}$\footnote[3]{hpyang1996@163.com},
%Liang Kong$^{b}$,
%Wei-Qiang Chen$^{b}$, 
}
\\[1.8em]
$^a$ Key Laboratory for Magnetism and Magnetic Materials of MOE, Lanzhou University, 730000 Lanzhou, China.\\
$^b$ Shenzhen Institute for Quantum Science and Engineering, Southern University of Science and Technology, Shenzhen 518055, China. \\
$^c$ International Quantum Academy, Shenzhen 518048, China.\\
$^d$ Guangdong Provincial Key Laboratory of Quantum Science and Engineering, Southern University of Science and Technology, Shenzhen 518055, China.\\
$^e$ Department of Physics, \\
Southern University of Science and Technology, Shenzhen 518055, China
\\[0.8em]

\end{center}

\vskip 2.5em

\begin{abstract}
%Recently, a unified mathematical theory of the 1d gapped/gapless quantum phase has been developed. 
%In particular, we have a powerful tool to study the topological phase transitions between 1d gapped boundaries of an anomaly-free topological order, whose critical point is a gappable non-chiral gapless boundary. 
%In this work, we will study a self-dual critical point of a pure boundary topological phase transition of the 2d $\bZ_N$ topological order by constructing an enriched fusion category that describes the gappable non-chiral gapless boundary.
%And 
%This boundary phase transition corresponds to a gappable non-chiral gapless boundary of $\mathbb{Z}_N$ topological order. %prove the idea,

In this work, we show %by a concrete example% 
that a critical point of a 1d self-dual boundary phase transition between two gapped boundaries of the $\mathbb{Z}_N$ topological order can be described by a mathematical structure called an enriched fusion category. 
The critical point of a boundary phase transition can be viewed as a gappable non-chiral gapless boundary of the $\mathbb{Z}_N$ topological order. A mathematical theory of the gapless boundaries of 2d topological orders developed by Kong and Zheng (\href{https://arxiv.org/abs/1905.04924}{arXiv:1905.04924} and \href{https://arxiv.org/abs/1912.01760}{arXiv:1912.01760}) 
%provides a systematic way of describing and classifying all gapless boundaries of 2d topological orders. 
tells us that all macroscopic observables on the gapless boundary form an enriched unitary fusion category, which can be obtained by a holographic principle called the ``topological Wick rotation." Using this method, we obtain the enriched fusion category that describes a critical point of the phase transition between the $\bfe$-condensed boundary and the $\bfm$-condensed boundary of the $\bZ_N$ topological order. To verify this idea, we also construct a lattice model to realize the critical point and recover the mathematical data of this enriched fusion category. The construction further shows that the categorical symmetry of the boundary is determined by the topological defects in the bulk, which indicates the holographic principle indirectly. This work shows, as a concrete example, that the mathematical theory of the gapless boundaries of 2+1D topological orders is a powerful tool to study general phase transitions.

\end{abstract}

\tableofcontents

\section{Introduction}
The study of topological orders (TOs) has attracted much attention in condensed matter physics and mathematical physics, see review \cite{Wen2017,Wen2019}. %\cite{Wen2020,KW2020,kong2020,KZ2022liquids,KWZ2022,XZ2022
The existence of topological order appears to indicate that phases and phase transitions in nature are much richer than Landau's symmetry breaking theory. 
Therefore, the new topological phases and phase transitions require a new mathematical description beyond group theory.
%Therefore, the new topological phases and phase transitions require a new mathematical framework beyond group theory to describe them.
In this work, we provide a concrete example to show that all physical observables of a topological phase transition that occurs on the boundary of 2d\footnote[1]{we use $nd$ to represent $n$ spatial dimension and $(n+1)D$ to represent $n+1$ spacetime dimension.} $\Zb_N$ TO form a new mathematical structure called an \textit{enriched fusion category} \cite{MP2019,KZ2018}. 

A critical point of a 1+1D topological phase transition can be viewed as a gappable CFT-type gapless phase. 
%Even more interesting is that these exotic topological phases have recently been generalized to the 1+1D\footnote[1]{we use $nd$ to represent $n$ spatial dimension and $(n+1)D$ to represent $n+1$ spacetime dimension.} CFT-type gapless case \cite{KZ2018,JSW2020,Verresen2020,KZ2020,KZ2021,CJKYZ2020,JW2020metallic,JW2020,VTJP2021,WJX2021,chatterjee2022,MMT2022}.
%If a gapless phase is not stable against certain symmetry-preserving perturbations, meaning that it is gappable, then it can be viewed as a (multi) critical point that describes continuous phase transitions between 1d gapped quantum phases.
%In this paper, we focus on a 1d gappable non-chiral CFT-type gapless phase (or a phase transition described by a non-chiral CFT) with the non-invertible gravitational anomaly $\fZ_1(\rep(\Zb_N))$. 
If such a phase transition occurs at the boundary of a non-trivial 2d topological order, then it should be a gappable gapless boundary of the TO.
So, a 1+1D purely boundary phase transition should be nothing but a gappable non-chiral gapless boundary of the 2d anomaly-free topological order.
In other words, this 1d gapless phase cannot be realized by an 1d local Hamiltonian lattice model, which means that this 1d gappable non-chiral gapless phase is \textit{anomalous} \cite{wen2013classifying,kong2014braided,jiwen2021unified}.
%Moreover, the gapless boundary has an anomaly \cite{wen2013classifying,kong2014braided,jiwen2021unified}. The anomaly directly corresponds to the topological order in the bulk, which means the gapless phase on the boundary can't be realized by an 1d local Hamiltonian lattice model. 
Recently, many works reveal a new type of holographic dualities between 2+1D topological orders with a gapped boundary and 1+1D (potentially gapless) quantum liquids in different contexts \cite{AMF2016,KZ2018,AFM2020,KZ2020,JW2020,KZ2021,WJX2021,LDOV2021,KWZ2022,chatterjee2022algebra,chatterjee2022,XZ2022,MMT2022,LJ2022,BCDP2022}. 
The story about purely boundary phase transition belongs to a generalized case: the duality between gapped domain walls of two 2+1D TOs and anomalous gapless boundaries of one of the TOs \cite{KZ2021}. It should be stressed here that the gapless phase in our paper has a non-trival bulk (TO), but the one in the former type of holographic dualities doesn't. This is clarified in Section \ref{sec:boundary}.

A mathematical theory of a gapless boundary of a 2d anomaly-free TO was developed in \cite{KZ2018,KZ2020,KZ2021}.
According to this theory, all long-wave-length-limit physical observables on a gapped/gapless boundary form an enriched fusion category.
Therefore, the mathematical structure of a critical point of a boundary topological phase transition is an enriched fusion category.
\begin{rmk}
    To be more precise, the mathematical structure on a 1d gapped/gapless boundary can be split into two parts: a \textit{local quantum symmetry} $U$ which encodes the information of local observables; and a $\mathrm{Mod}_U$-enriched fusion category  ${}^{\mathrm{Mod}_U}\CM$ which encodes all the topological defects. The enriched fusion category  ${}^{\mathrm{Mod}_U}\CM$ is called the $topological$ $skeleton$ \cite{KZ2022liquids}.
\end{rmk}

%An enriched fusion category is constructed to describe the critical point of pure boundary phase transition between two different gapped boundaries of $\bZ_2$ topological order by mathematical theory. And all the ingredients $M_{x,y}$ of the enriched fusion category match with the partition functions from a concrete lattice model that realizes the critical point of the pure boundary phase transition.
The first concrete example can be found in \cite{CJKYZ2020}, whose authors use the enriched fusion category to mathematically describe a critical point of a boundary phase transition of the $\Zb_2$ TO (toric code). %based on the fact that boundary phase transition is the gappable non-chiral gapless boundary of 2d TO.
In this work, we demonstrate this idea in a more general case: 
A self-dual boundary phase transition between two gapped boundaries $\rep(\bZ_N)$ ($\bfm$-condensed) and $\rVec_{\bZ_N}$ ($\bfe$-condensed) of the $\bZ_N$ topological order can be mathematically described by an enriched fusion category. 
Here, the word ``self-dual" means that the critical point has an emergent $\Zb_2$ symmetry from the $\bfe$-$\bfm$ duality.
%More explicitly, we construct an enriched fusion category to mathematically describe a gappable non-chiral gapless boundary which corresponds to a critical point of this phase transition. 
And we construct a lattice model to realize this gappable gapless boundary phase, including all data from the corresponding enriched fusion category. 
The process of the construction magically shows that the local operators of the boundary are determined by the string operators of the bulk. 
It is natural because the anyon traveling to the boundary can cause the fluctuation of the boundary \cite{CJKYZ2020,JW2020,JW2020metallic,chatterjee2022}.

In Section \ref{sec:preminaliers}, we briefly review some mathematical basics of a gapped 2d anomaly-free TO, including the categorical description of a 2d anomaly-free TO by a unitary modular tensor category (UMTC), and the anyon condensation theory of 2d anomaly-free TO \cite{Kong2014}.
We also give two examples of unitary modular tensor categories: the $\bZ_N$ parafermion UMTC $\mathrm{PF}_N$ and the $\bZ_N$ quantum double UMTC $\fZ_1(\rep(\bZ_N))$, both of which are related to this work.
In Section \ref{sec:gappable_boundary}, we first review the mathematical description of gapless boundaries introduced in \cite{KZ2020,KZ2021}, including a refresher of the rational conformal field theory (RCFT) and the ingredients of an enriched fusion category.
Then, we find a gapped domain wall between the double $\bZ_N$ parafermion TO and the $\bZ_N$ quantum double TO by anyon condensation theory.
Finally, by a trick called the \textit{topological Wick rotation} \cite{KZ2020}, this domain wall can be holographically dual to a gappable non-chiral gapless boundary:
\begin{align}
    (V_{\mathrm{PF}_N}\otimes_{\bC}\overline{V_{\mathrm{PF}_N}},{}^{\fZ_1(\mathrm{PF}_N)}\fZ_1(\mathrm{PF}_N)_B)
\end{align}
of the $\bZ_N$ quantum double TO $\fZ_1(\rep(\bZ_N))$, where 
\begin{itemize}
    \item $V_{\mathrm{PF}}$ is the $\bZ_N$ parafermion vertex operator algebra (VOA) whose module category is $\mathrm{PF}_N$.
    $V_{\mathrm{PF}_N}\otimes_{\bC}\overline{V_{\mathrm{PF}_N}}$ is the full field algebra (FFA) that describes the local quantum symmetry on the 1+1D world sheet of this gapless boundary.
    \item $B=\bigoplus_{u=0}^{[\frac{N}{2}]}\fM_{2u,0}\boxtimes \fM_{2u,0}$ is a condensable algebra such that the condensed phase of the double $\bZ_N$ parafermion TO via $B$ is the $\bZ_N$ quantum double TO.
    %\item  that mathematically describe the particle-like excitation on the 1d gapped domain wall between the double $\bZ_N$ parafermion TO and the $\bZ_N$ quantum double TO.  
    \item ${}^{\fZ_1(\mathrm{PF}_N)}\fZ_1(\mathrm{PF}_N)_B$ is the enriched fusion category that describes the topological skeleton of the gappable non-chiral gapless boundary, where 
    \begin{itemize}
        \item $\fZ_1(\mathrm{PF}_N)_B$ is the category of right $B$-modules in $\fZ_1(\mathrm{PF}_N)$. 
        In physics, the boundary topological excitations (or boundary conditions) on this gapless boundary are described by these right $B$-modules.
        \item $\fZ_1(\mathrm{PF}_N)$ contains the internal homs $M_{x,y}$ between boundary topological excitations, or equivalently, the domain walls between two boundary CFT's with boundary conditions $x$ and $y$.
    \end{itemize}
\end{itemize}
We also compute the partition functions of $M_{x,y}$ exactly. It is expected that the enriched fusion category  $(V_{\mathrm{PF}_N}\otimes_{\bC}\overline{V_{\mathrm{PF}_N}},{}^{\fZ_1(\mathrm{PF}_N)}\fZ_1(\mathrm{PF}_N)_B)$ indeed describes a critical point of a self-dual boundary phase transition of $\bZ_N$ TO. 
Therefore, in Section \ref{sec:lattice}, we provide a lattice model realization of a critical point of this boundary phase transition to recover all the ingredients in ${}^{\fZ_1(\mathrm{PF}_N)}\fZ_1(\mathrm{PF}_N)_B$. 
The low-energy effective theory of this critical model is described by a set of modular covariant partition functions. They are all agree with the mathematical result. Finally, some examples for $\bZ_2$, $\bZ_3$, and $\bZ_4$ cases are listed. 
The results in \cite{CJKYZ2020} can be regarded as a special case $\bZ_2$ of our work. 
  
This work shows that the categorical language can effectively describe the new-type phase transitions beyond the Laudau picture, and a holographic principle known as the ``topological Wick rotation" plays a key role in the process. We believe the mathematical theory developed in \cite{KZ2018,KZ2020,KZ2021} is powerful and provides a potential tool to study general phase transitions.

\section{Categorical Preliminaries}\label{sec:preminaliers}
Our goal in this paper is to find a gapless boundary that can describe a critical point of a self-dual topological phase transition on the boundary of the 2d $\bZ_N$ topological order.
%We believe that a 1d gapless (whether it is chiral or non-chiral) phase can be described by the so-called enriched fusion category \cite{KZ2020,KZ2021}.
Before we discuss an anomalous 1d gapless phase, it is necessary to review some mathematical basics of 2d gapped TOs, including a categorical description of a 2d anomaly-free TO and the anyon condensation theory. 
In particular, the anyon condensation theory, which provides a description of the domain walls between two TOs, plays an important role in the construction of the gapless boundary we need.
%Based on these foundations, we can move to a mathematical description of gapless boundaries and then we introduce the notion of gappable non-chiral gapless boundaries that might describe a critical point of topological phase transitions.

%In Section \ref{sec:maths_TO}, we briefly review the mathematical theories of 2d gapped TOs.
%In Section \ref{sec:math_gapless}, we review a mathematical description of 1d gapless boundaries by enriched fusion categories and rational conformal field theory.
In this section, we briefly review the notion of a unitary modular tensor category, which describes the particle-like excitations of a 2d anomaly-free TO and the anyon condensation theory.

\subsection{Unitary modular tensor categories}\label{sec:UMTC}
A gapped quantum liquid without symmetry is called a topological order (TO) \cite{Wen1990,ZW2015,SM2016}.
It is known that a 2d anomaly-free TO can be described by a unitary modular tensor category (UMTC) $\CC$ with a central charge $c$ \cite{Kitaev2006,KZ2022}.
Therefore, we denote a TO by a pair $(\CC,c)$.
For the sake of brevity, we won't give an explicitly mathematical definition of a UMTC here. 
Instead, we introduce the ingredients and some properties of a UMTC $\CC$.
\begin{itemize}
    \item $\CC$ has finitely many simple objects/$anyons$ $x,y,z,\dots$.
    We denote $\mathrm{Irr}(\CC)$ the set of isomorphism classes of simple objects.
    Each composite anyon can be written as a direct sum of simple anyons, i.e. $a=x\oplus z\oplus\cdots$.

    \item For two anyons $x,y$, there is a finite-dimensional Hilbert space $\hom_{\CC}(x,y)$ called the hom space of $x$ and $y$.
    An element in a hom space is a $0D$ defect on time axis, so we call it an \textit{instanton}.
    For each triple $x$, $y$ and $z$, there is a map
    \begin{align*}
        \circ:\hom_{\CC}(y,z)\otimes_{\bC}\hom_{\CC}(x,y)\to\hom_{\CC}(x,z),
    \end{align*}
    called the composition map, which describes the fusion of instantons along the time axis.
    The composition map is associative.

    \item There is a tensor product functor $\otimes:\CC\times\CC\to\CC$, that describes the fusion of two anyons $x,y$ into the anyon $x\otimes y$.
    The fusion of anyons should be associative, so that for all $x,y,z\in\CC$ there exists an isomorphism $\alpha_{x,y,z}:x\otimes (y\otimes z)\to (x\otimes y)\otimes z$ satisfying the necessary coherence conditions.

    \item There is a distinguished simple object $\mathbb{1}$ called the tensor unit (or the trivial anyon).

    \item Each object $x\in\CC$ has a dual $x^*$, together with the creation and annihilation morphisms $v_x:x^*\otimes x\to \one$ and $u_x:\one\to x\otimes x^*$ satisfying some coherence conditions.
    
    \item $\CC$ has a braiding structure, i.e. for all $x,y\in \CC$ there is an isomorphism $c_{x,y}:x\otimes y\to y\otimes x$  satisfying Yang-Baxter equations.
    The braiding is non-degenerate; that is, the $S$-matrix, whose entry is the trace of double braiding of simple objects, is non-degenerate.
    
    \item Each object $x\in\CC$ has a twist (topological spin), which is an isomorphism $\theta_x:x\to x$ representing the self statistics of anyon.
    
    \item $\CC$ has a unitary structure: for each morphism $f:x\to y$, there is an adjoint morphism $f^{\dagger}:y\to x$ such that $(g\otimes h)^{\dagger}=g^{\dagger}\otimes h^{\dagger}$ for any $g:v\to w$ and $h:x\to y$.
    And the coherence data $\alpha,l,r,c$ are both unitary.
\end{itemize} 

The simplest example of UMTCs is the category of finite-dimensional Hilbert space denoted by $\mathbf{H}$.
It has a unique simple object, the one-dimensional Hilbert space $\bC$.
In particular, the pair $(\mathbf{H},0)$ describes the trivial 2d TO.

\begin{rmk}
    For readers who are interested in the bootstrap analysis that an anomaly-free 2d TO can be described by a UMTC, we recommend \cite{KZ2022}.
    For readers who are interested in the rigorous mathematical definition of a UMTC, we recommend \cite{Muger2000,EGNO2016}.
\end{rmk}

\subsubsection{UMTC \texorpdfstring{$\fZ_1(\rep(\bZ_N))$}{Z(rep(Z N))}}\label{sec:ZrepZN}
Consider the category $\rep(\bZ_N)$ of finite-dimensional representations of the cylic group $\bZ_N$. 
The Drinfeld center $\fZ_1(\rep(\bZ_N))$ of the $\rep(\bZ_N)$ is a UMTC \cite{BK2001}, which describes the particle-like topological excitations of the $\bZ_N$ quantum double model \cite{Kitaev2003}. We list some of its ingredients below.

\begin{itemize}
    \item The simple objects are  $\cO_{\alpha,\beta}$ labeled by $\alpha,\beta$ with $0\leq \alpha,\beta<N$.
    In physics, $\bfe:=\cO_{1,0}$ denotes the elementary charge and $\bfm:=\cO_{0,1}$ denotes the elementary flux.
    Hence, we use $\mathbf{e}^{\alpha}\mathbf{m}^{\beta}$ to denote the simple object $\cO_{\alpha,\beta}$ later.

    \item The fusion rule of two simple objects $\cO_{\alpha_1,\beta_1}$ and $\cO_{\alpha_2,\beta_2}$ is 
    \begin{align}
    \cO_{\alpha_1,\beta_1}\otimes \cO_{\alpha_2,\beta_2}\simeq \cO_{\alpha_1+\alpha_2,\beta_1+\beta_2}.
    \end{align}
    The tensor unit is $\one:=\cO_{0,0}$.
    \item The braiding of $\cO_{\alpha_1,\beta_1}$ and $\cO_{\alpha_2,\beta_2}$ is given by:
    \begin{align*}
        c_{\alpha_1,\beta_1;\alpha_2,\beta_2}:\cO_{\alpha_1,\beta_1}\otimes \cO_{\alpha_2,\beta_2}\xrightarrow{\mathrm{e}^{\frac{2\pi\mathrm{i}}{N}\alpha_1\beta_2}}\cO_{\alpha_2,\beta_2}\otimes \cO_{\alpha_1,\beta_1}.
    \end{align*}

    \item The twist for simple object $\cO_{\alpha,\beta}$ is given by:
    $\theta_{\alpha,\beta}=\mathrm{e}^{\frac{2\pi \mathrm{i}}{N}\alpha\beta}$ .
\end{itemize}

%For the sake of subsequent proofs, we review the notion of \textbf{pointed braided fusion categories} and the \textbf{pre-metric groups} and prove that the category $\fZ_1(\rep(\bZ_N))$ is equivalent to the pointed modular tensor category $\CC(\bZ_N\times\bZ_N,q,\tau)$ in Appendix \ref{Sec:PBFC}.

The $\bZ_N$ quantum double TO $(\fZ_1(\rep(\bZ_N)),0)$ has many gapped boundaries which are classified by the subgroups of $\bZ_N$ \cite{beigi2011}.
The unitary fusion category (UFC) $\rep(\bZ_N)$ describes the 1d $\bfm$-condensed boundary, which corresponds to the subgroup $\bZ_N$ itself.
And the 1d $\bfe$-condensed boundary is $\rVec_{\bZ_N}$, the category of finite-dimensional $\bZ_N$-graded vector spaces, which corresponds to the trivial subgroup $\{e\}$.

\begin{expl}
    When $N=2$, the UMTC $\fZ_1(\rep(\bZ_2))$ describes the toric code model \cite{Kitaev2003}.
    It has 4 simple objects denoted by $\one$, $\bfe$, $\bfm$ and $\mathbf{f}$ with the fusion rules:
    \begin{align}
        \bfe\otimes\bfe\simeq \one\simeq\bfm\otimes\bfm,\quad\quad \bfe\otimes\bfm\simeq \mathbf{f}.
    \end{align}
    It has two gapped boundaries, $\bfm$-condensed boundary $\rep(\bZ_2)$ and $\bfe$-condensed boundary $\rVec_{\bZ_2}$.
\end{expl}

In this paper, we construct a gapless boundary that describes a critical point of a topological phase transition between $\rep(\bZ_N)$ and $\rVec_{\bZ_N}$.

\begin{rmk}
    As fusion categories, $\rep(\bZ_N)$ and $\rVec_{\bZ_N}$ are equivalent.
    But in general, when $G$ is a finite group, $\rep(G)$ is not equivalent to $\rVec_{G}$.
    For example, consider the permutation group $S_3$, $\rep(S_3)$ has 3 simple objects, but $\rVec_{S_3}$ has 6 simple objects.
    The equivalence holds when $G$ is abelian.
\end{rmk}

\begin{rmk}
    In Landau's paradigm, the topological phase transition between two specific 1d gapped boundaries $\rep(\bZ_N)$ and $\rVec_{\bZ_N}$ in this work can be viewed as a completely $spontaneous$ $symmetry$ $breaking$ process.
    %However, there are also many partially symmetry broken gapped boundaries of $\bZ_N$ TO, \textcolor{blue}{which are described by the UFCs $\rep(\bZ_N)_{F_H\mid F_H}$ for different subgroup $H$ of $\bZ_N$, and $F_H:=\mathrm{Fun}(\bZ_N/H)$ is a function algebra in $\rep(\bZ_N)$.
    %Many different topological phase transitions may occur between these gapped boundaries.}
   %The critical point of these phase transitions should also be described by a gappable non-chiral gapless boundary.
   % We will study them and the phase diagram in future works.
\end{rmk}

\subsubsection{UMTC \texorpdfstring{$\mathrm{PF}_N$}{PF N} and \texorpdfstring{$\fZ_1(\mathrm{PF}_N)$}{ZPF N}}
In this subsection, we first introduce the parafermion UMTC $\mathrm{PF}_N$.
It is the category of modules over a unitary rational vertex operator algebra $V_{\mathrm{PF}_N}$ called the parafermion VOA \cite{DW2011,DL2012,DLWY2010,ADJR2018}.
For the details of the parafermion VOA and the mathematical details of this UMTC, one can refer to Appendix.~\ref{sec:math_PFN}.
Here we list some ingredients and properties of this UMTC.
\begin{itemize}
    \item Simple objects are denoted by  $\mathfrak{M}_{\ell,m}$, with $0\leq \ell\leq N$,  $0\leq m < 2N$ and $\ell+m\equiv 0\mod 2$.
    There is an equivalence relation between simple objects: 
    \begin{align*}
        \fM_{\ell,m}&\sim \fM_{\ell',m'},\quad\mathrm{if}\quad \ell=N-\ell',m=N+m'.\\
    \end{align*}
    So, there are $N(N+1)/2$ inequivalent simple objects in $\mathrm{PF}_N$.
    
    \item The fusion rule of two simple objects $\fM_{\ell,m}$ and $\fM_{\ell',m'}$ is given by:
    \begin{align}
         \fM_{\ell,m}\otimes \fM_{\ell',m'}=\bigoplus\limits^{\min(\ell,\ell')}_{r=\max(\ell+\ell'-N,0)}\fM_{\ell+\ell'-2r,m+m'}.
    \end{align}
    \item The double braiding of two simple objects $\fM_{\ell,m}$ and $\fM_{\ell',m'}$ is given by:
    \begin{align}\label{eq:double_braiding_PF}
        c_{\ell' m',\ell m}\circ c_{\ell m,\ell' m'}=\bigoplus_{s=\max(\ell+\ell'-N,0)}^{\min(\ell,\ell')}\mathrm{e}^{2\pi \mathrm{i}(h_{\ell+\ell'-2s}-h_{\ell}-h_{\ell'})}\mathrm{e}^{-\frac{\pi\mathrm{i}}{N}mm'}\id_{\ell+\ell'-2s,m+m'},
    \end{align}
    where $h_t:=\frac{t(t+2)}{4(N+2)}$.
    \item The twist of the simple object $\fM_{\ell,m}$ is:
    \begin{align*}
        \theta_{\ell m}=\mathrm{e}^{2\pi\mathrm{i}(h_{\ell}-\frac{m^2}{4N})}.
    \end{align*}
\end{itemize}

The quantum dimension of $\mathrm{PF}_N$ is $\frac{N(N+2)}{4\sin^2(\frac{\pi}{N+2})}$.

\begin{expl}
    When $N=2$, we have $\mathbf{Is}\simeq \mathrm{PF_2}$, where $\mathbf{Is}$ is the module category over the Ising VOA consisting of the simple objects $\{\one,\sigma,\psi\}$ with the fusion rules:
    \begin{align*}
        \psi\otimes\psi\simeq \one, \quad\quad \sigma\otimes\sigma\simeq \one\oplus\psi.
    \end{align*}
    We have the following correspondence of the simple objects between these two categories:
    \begin{align*}
        \mathbb{1}=\fM_{0,0},\quad \sigma=\fM_{1,1},\quad \psi=\fM_{2,0}. 
    \end{align*}
    It is not hard to check by the above formulas that the fusion rules, braiding, and the twists of simple objects of $\mathrm{PF_2}$ coincide with those of $\mathbf{Is}$.
    
\end{expl}

Now we introduce the Drinfeld center $\fZ_1(\mathrm{PF}_N)$ of the parafermion UMTC which describes the double $\bZ_N$ parafermion TO.
Since $\mathrm{PF}_N$ is non-degenerate, we have $\fZ_1(\mathrm{PF}_N)\simeq \mathrm{PF}_N\boxtimes \overline{\mathrm{PF}_N}$ \cite{Muger2003subfactor}.
Here, the time reversal $\overline{\mathrm{PF}_N}$ has the same objects and fusion rules with $\mathrm{PF}_N$ but has a different braiding defined by $\tilde{c}_{x,y}:=c^{-1}_{y,x}$ for $x,y\in\fZ_1(\mathrm{PF}_N)$.
And $\boxtimes$ is the Deligne tensor product of $\bC$-linear categories whose physical significance is stacking of 2d TOs.
We list some ingredients of $\fZ_1(\mathrm{PF}_N)$ bellow.
\begin{itemize}
    \item Simple objects in $\mathrm{PF}_N\boxtimes \overline{\mathrm{PF}_N}$ are $\fM_{\ell,m}\boxtimes \overline{\fM_{\ell',m'}}$ where $\fM_{\ell,m}$ is a simple object in $\mathrm{PF}_N$ and $\overline{\fM_{\ell',m'}}$ is a simple object in $\overline{\mathrm{PF}_N}$.
    Recall that $\mathrm{PF}_N$ has $N(N+1)/2$ inequivalent simple objects, hence there are $N^2(N+1)^2/4$ inequivalent simple objects in $\fZ_1(\mathrm{PF}_N)$.
    
    \item For two simple objects $\fM_{\ell_1,m_1}\boxtimes \overline{\fM_{\ell_1',m_1'}}$ and $\fM_{\ell_2,m_2}\boxtimes \overline{\fM_{\ell_2',m_2'}}$, the fusion rule is given by 
    \begin{align*}
        (\fM_{\ell_1,m_1}\boxtimes \overline{\fM_{\ell_1',m_1'}}) \otimes_{\fZ_1(\mathrm{PF}_N)} (\fM_{\ell_2,m_2}\boxtimes \overline{\fM_{\ell_2',m_2'}})=(\fM_{\ell_1,m_1}\otimes \fM_{\ell_2,m_2})\boxtimes (\overline{\fM_{\ell_1',m_1'}}\otimes \overline{\fM_{\ell_2',m_2'}}).
    \end{align*}
\end{itemize}

\begin{expl}\label{eg:double_ising}
    When $N=2$, $\fZ_1(\mathrm{PF}_2)=\fZ_1(\mathbf{Is})=\mathbf{Is}\boxtimes\overline{\mathbf{Is}}$, where $\fZ_1(\mathbf{Is})$ is the double Ising category that describes the double Ising TO.
    $\fZ_1(\mathbf{Is})$ has $9$ simple objects and consists of $x\boxtimes y$ for all $x\in\mathbf{Is}$ and $y\in\overline{\mathbf{Is}}$.
    For instance, $\one \boxtimes \one$ and $\psi\boxtimes \psi$  are both simple objects of $\fZ_1(\mathbf{Is})$.
\end{expl}

\subsection{Anyon condensation theory}\label{sec:any_cond}
For topological order, a phase transition phenomenon called anyon condensation may occur \cite{BSS2002,BSS2003,BS2009,BW2010}.
%When the parameters of the system, such as temperature, are changed so that condensation happens, the degrees of freedom of the system decrease rapidly. 
After anyon condensation, the Hilbert space of the condense phase are the energy-favorable subspace of the original Hilbert space.
%A condensation is a process of selecting an energy-favorable subspace from the original Hilbert space H.
Hence, there are fewer topological defects and the system is still gapped. Also, there should be a domain wall between the original phase and the condensed phase.
Some defects can move across the domain wall and others will be confined to it. The anyon condensation has a systematical mathematic description developed in \cite{Kong2014}.
This theory tells us how to condense bosonic anyons in a known TO and obtain a new TO.
In addition, we have an explicit description of the domain wall between these two TOs.
Domain walls between TOs play an important role in the description of gapless boundaries, which we will see in the following section.

\begin{figure}
    \begin{center}
        \begin{tikzpicture}[scale=1.1]
            \draw [fill=gray!30, draw=white] (-3,0) rectangle (0,3);
            \draw [fill=gray!15, draw=white] (0,0) rectangle (3,3);
            \draw[line width=0.5mm] (0,0)--(0,3);
            \node  at (-1.5,1.5){$\CC$};
            \node  at (1.5,1.5){$\CC_A^{loc}$};
            \node [below]at(0,0){$\CC_A$};
            \draw [fill] (-1,2) circle[radius=0.04] node[above] {$a$};
            \draw [->] (-1,2)--(0.9,2);
            \draw [fill] (1,2) circle [radius=0.04] node[above] {$x$};
            \draw [fill] (-1,1) circle[radius=0.04] node[below] {$b$};
            \draw [fill] (0,1) circle[radius=0.04] node[right] {$m$};
            \draw [->, line join=round, decorate, decoration={zigzag,segment length=4,amplitude=1.1,post=lineto,post length=2pt}]  (-0.9,1) -- (-0.1,1);
        \end{tikzpicture}
    \end{center}  
    \caption[]{An illustration of anyon condensation process. $\CC$ is the original phase, and $\CC_A^{loc}$ is the condensed phase. The domain wall between these two phases is $\CC_A$. The anyon $a$ in $\CC$ can freely across the domain wall; it becomes the anyon $x$ in $\CC_A^{loc}$ after condensation. The anyon $b$ in $\CC$ cannot move across the domain wall and is confined to the domain wall; it becomes the anyon $y$ in $\CC_A$.}
    \label{fig:any_cond}
\end{figure}
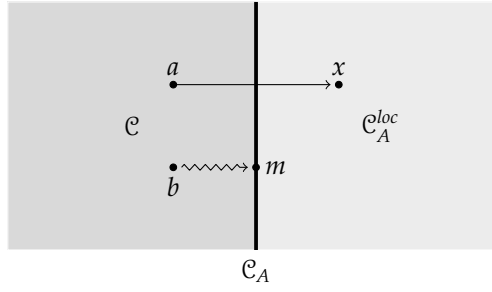

As shown in Fig.~\ref{fig:any_cond}, given a 2d anomaly-free TO described by a UMTC $\CC$\footnote{Here we omit the central charge $c$ because during the anyon condensation process, the central charge does not change}, we need to find a $condensable$ $algebra$ $A$ in $\CC$ that controls which anyons in $\CC$ can move across the domain wall freely and which anyons should be confined to the domain wall.
Anyons that can move freely across the wall  correspond to the so-called $local$ $A$-modules in $\CC$.
We denote the category of local $A$-modules in $\CC$ as $\CC_A^{loc}$. 
Mathematically, it has been proved that $\CC_A^{loc}$ is still a UMTC \cite{BEK2000,KO2002}.
Hence $\CC_A^{loc}$ describes a new 2d anomaly-free TO, which is the condensed phase.
Therefore, the local $A$-modules in $\CC$ are the topological defects in the condensed phase.
The 1d gapped domain wall between $\CC$ and $\CC_A^{loc}$ are described by a UFC $\CC_A$, which is the category of right $A$-modules.
For mathematical ingredients of the anyon condensation theory, please refer to Appendix.~\ref{sec:any_cond_maths}.
For the bootstrap analysis of condensed phase and gapped domain wall, please see \cite{Kong2014}.

\begin{expl}
In double Ising TO $(\fZ_1(\mathbf{Is}),0)$, we can find three different condensable algebras.
\bnu
    \item The trivial condensable algebra is $B_0:=\one\boxtimes\one$. 
    The condensed phase obtained by condensing $B_0$ is still the double Ising TO.
    
    \item We can find a condensable algebra $B=(\mathbb{1}\boxtimes \mathbb{1})\oplus (\psi\boxtimes \psi)=(\fM_{0,0}\boxtimes \fM_{0,0})\oplus (\fM_{2,0}\boxtimes \fM_{2,0})$  such that $\fZ_1(\mathbf{Is})_B^{loc}\simeq \fZ_1(\rep(\bZ_2))$ \cite{CJKYZ2020}.
    
    \item The maximal condensable algebra is $B_{max}:=(1\boxtimes 1)\oplus(\sigma\boxtimes \sigma)\oplus (\psi\boxtimes \psi)$.
    The condensed phase $\fZ_1(\mathbf{Is})_{B_{max}}^{loc}\simeq \mathbf{H}$ is the vacuum, so $B_{max}$ is a lagrangian algebra in $\fZ_1(\mathbf{Is})$.
    The domain wall is $\mathbf{Is}$.
\enu
\end{expl}

%In Section \ref{sec:wall}, we construct a gapped domain wall between the $\bZ_N$ TO and the double $\bZ_N$ parafermion TO by anyon condensation theory.
%This gapped domain wall become the so-called underlying category of a specific gapless boundary of the $\bZ_N$ TO. 

%\section{Set-up}
%In this section we introduce two important and interesting UMTCs.
%\subsection{\texorpdfstring{$\bZ_N$}{bZN} topological order}

%In this subsection we figure out the UMTC $\fZ_1(\rep(\bZ_N))$ which describes the 2d $\bZ_N$ topological order.

%\subsubsection{Symmetric fusion category \texorpdfstring{$\rep(\bZ_N)$}{rep(bZ N)}}
%First we introduce the category $\rep(\bZ_N)$ of finite dimensional representations of finite abelian group $\bZ_N$.
%The symmetric fusion category $\rep(\bZ_N)$ can be described as follows:
%\begin{itemize}
    %\item simple objects: $E^i:=(\bC,\rho_i)$ $0\leq i\leq N-1$, where $\rho_i:\bZ_N\to \bC$ is an irreducible representation of $\bZ_N$ characterized by the $i$-th primitive root of unity.
    %\item fusion rule: $E^i\otimes E^m\simeq  E^{i+m}$
    %\item the braiding is trivial, that is, for each $E^i$, $E^m$, we have $c_{i,m}=\id_{i\otimes m}$. 
%\end{itemize}

\section{A Gappable Gapless Boundary of 2d \texorpdfstring{$\bZ_N$}{Z N} Topological Order}\label{sec:gappable_boundary}
In this section, we construct a gappable gapless boundary of the $\bZ_N$ topological order step by step:
In Section \ref{sec:maths_boundary}, we begin with a review of a mathematical description of gapless boundaries \cite{KZ2018,KZ2020,KZ2021};
in Section \ref{sec:wall}, we construct a gapped domain wall between the double $\bZ_N$ parafermion topological order and the $\bZ_N$ TO;
in Section \ref{sec:boundary}, we obtain a gappable gapless boundary of the $\Zb_N$ topological order by ``dualing'' the domain wall \cite{KZ2021}.
Finally, to verify that this gappable non-chiral gapless boundary actually describes the critical point of the boundary phase transition, we compute the physical observables in Section \ref{sec:partition_function} and list some examples.

\subsection{A mathematical theory of gapless boundaries} \label{sec:maths_boundary}
%As depicted in Fig.~\ref{fig:cylinder-1}(a), 
Imagine a 2d TO living on a disk propagating vertically along the temporal dimension.
This 2d TO has a 1d phase living on the boundary of the disk whose trajectory in spacetime is a 1+1D worldsheet.
This boundary can be either gapped or gapless.
The gapped boundaries of 2d TO has been systematically studied \cite{bravyi1998,beigi2011,KK2012,HWW2017,HLPWW2018}.
More interestingly, a mathematical theory of 1+1D gapless boundaries has been developed recently \cite{KZ2020,KZ2021}.

At the IR fixed point of the renormalization group flow, the observables of a gapless boundary have scale invariance.
It is widely accepted that in 1+1D, scale invariance with unitarity implies conformal symmetry \cite{Polchinski1988,Nakayama2015}.
The gapless boundary modes on the 1+1D worldsheet should be described by a 1+1D conformal field theory (CFT) \cite{BPZ1984,MS1989}.
For a chiral 1+1D CFT, it can be described by a vertex operator algebra (VOA) $V$ \cite{Huang2005,Schweigert2003,FRS2002,FFRS2007}; for a non-chiral 1+1D CFT, it can be described by a so-called full field algebra (FFA) $W$ \cite{HK2007,Kong2007}. 
Since such VOA/FFA encodes the information of local observables, we call it the \textit{local quantum symmetry}.
\begin{rmk}
    %The local quantum symmetry can be an irrational VOA, i.e. the gapless modes are described by an irrational CFT \textcolor{green}{citation? maybe models described by some Virasoro VOA?}. 
    When the VOAs or FFAs are rational, it has been proved that the corresponding module categories are modular tensor categories \cite{Huang2008_verlinde,Huang2008_modular}.
    %\textcolor{blue}{There are also many irrational CFTs which provide wild and fruitful physical phenomena and mathematical structures. 
   % Currently, we do not have a sufficiently complete mathematical description of irrational CFTs \cite{GRR2013}.}
\end{rmk}

Besides these local observables on the worldsheet, there are also some long-wave-length-limit observables called the topological defects.
When we move a bulk anyon to the boundary, it becomes a bulk excitation, and its trajectory in spacetime becomes a topological defect line.
%there should be topological excitations on the 1+0D boundary, and they must be compatible with the observables on the 1+1D worldsheet.
These topological defect lines together with higher codimensional topological defects can fuse and form a categorical structure.
As shown in Fig.~\ref{fig:TDL}, 

%Here we wouldn't give a detailed mathematical definition of a (unitary) enriched fusion category.
%But we would introduce a geometrical intuition of enriched fusion categories instead.

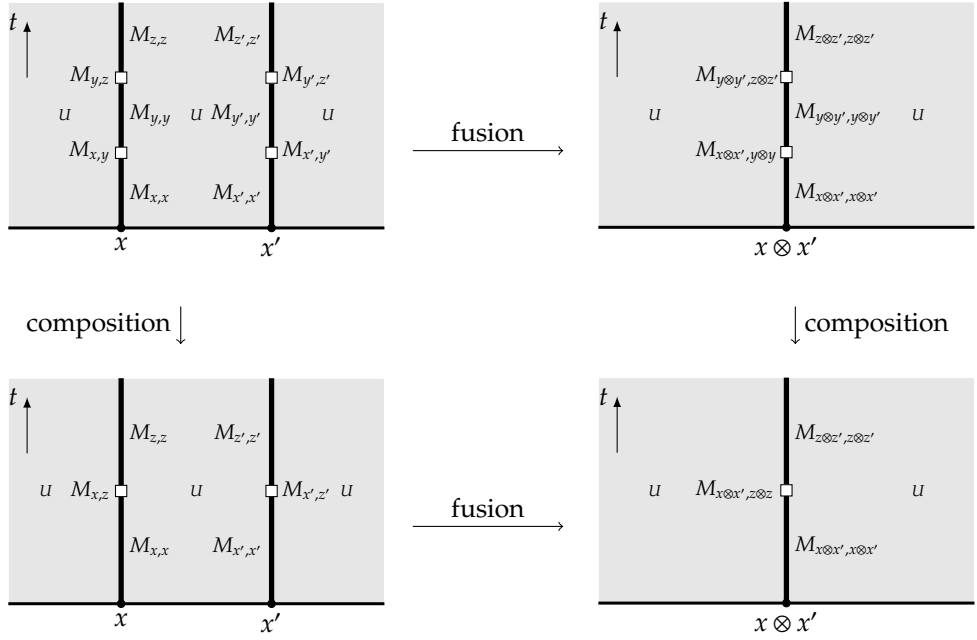
\begin{figure}
    \[
    \begin{array}{c}
        \begin{tikzpicture}[>=latex,scale=0.5]
            \draw [fill=gray!20, draw=white] (-5,0) rectangle (5,6);
            \node [scale=0.6]at (-3.5,3){$U$};
            \node [scale=0.6]at (3.5,3){$U$};
            \node [scale=0.6]at (0,3){$U$};
            \draw [line width=0.4mm] (-5,0)--(5,0);
            \draw [fill] (-2,0) circle [radius=0.10] node[below] {$x$};
            \draw [fill] (2,0) circle [radius=0.10] node[below] {$x'$};
            \draw [line width=0.7mm] (-2,0)--(-2,6) node [scale=0.8,pos=0.15,right] {$M_{x,x}$} node [scale=0.8,pos=0.5,right] {$M_{y,y}$} node [scale=0.8,pos=0.85,right] {$M_{z,z}$};
            \draw [line width=0.7mm] (2,0)--(2,6) node [scale=0.8,pos=0.15,left] {$M_{x',x'}$} node [scale=0.8,pos=0.5,left] {$M_{y',y'}$} node [scale=0.8,pos=0.85,left] {$M_{z',z'}$};
            %\draw [fill] (-2,2) circle [radius=0.10] node[left] {$M_{x,y}$};
            \draw [fill=white,draw=black] (-2.15,1.85) rectangle (-1.85,2.15);
            \node [left,scale=0.8] at(-2.1,2) {$M_{x,y}$};
            %\draw [fill] (-2,4) circle [radius=0.10] node[left] {$M_{y,z}$};
            \draw [fill=white,draw=black] (-2.15,3.85) rectangle (-1.85,4.15);
            \node [left,scale=0.8] at(-2.1,4) {$M_{y,z}$};
            %\draw [fill] (2,2) circle [radius=0.10] node[right] {$M_{x',y'}$};
            \draw [fill=white,draw=black] (1.85,1.85) rectangle (2.15,2.15);
            \node [right,scale=0.8] at(2.1,2) {$M_{x',y'}$};
            %\draw [fill] (2,4) circle [radius=0.10] node[right] {$M_{y',z'}$};
            \draw [fill=white,draw=black] (1.85,3.85) rectangle (2.15,4.15);
            \node [right,scale=0.8] at(2.1,4) {$M_{y',z'}$};
            \draw [->] (-4.5,4)--(-4.5,5.5) node[left] {$t$};
        \end{tikzpicture}
    \end{array}
    \begin{array}{c}
        \begin{tikzpicture}
            \draw [->] (-1,0)--(1,0) node [above,pos=0.5] {fusion};
        \end{tikzpicture}
    \end{array}
    \begin{array}{c}
        \begin{tikzpicture}[>=latex,scale=0.5]
            \draw [fill=gray!20, draw=white] (-5,0) rectangle (5,6);
            \node [scale=0.6]at (-3.5,3){$U$};
            \node [scale=0.6]at (3.5,3){$U$};
            \draw [line width=0.4mm] (-5,0)--(5,0);
            \draw [fill] (0,0) circle [radius=0.10] node[below] {$x\otimes x'$};
            \draw [line width=0.7mm] (0,0)--(0,6) node [scale=0.8,pos=0.15,right] {$M_{x\otimes x',x\otimes x'}$} node [scale=0.8,pos=0.5,right] {$M_{y\otimes y',y\otimes y'}$} node [scale=0.8,pos=0.85,right] {$M_{z\otimes z',z\otimes z'}$};

            %\draw [fill] (2,2) circle [radius=0.10] node[right] {$M_{x',y'}$};
            \draw [fill=white,draw=black] (-0.15,1.85) rectangle (0.15,2.15);
            \node [right,scale=0.8] at(-2.6,2) {$M_{x\otimes x',y\otimes y}$};
            %\draw [fill] (2,4) circle [radius=0.10] node[right] {$M_{y',z'}$};
            \draw [fill=white,draw=black] (-0.15,3.85) rectangle (0.15,4.15);
            \node [right,scale=0.8] at(-2.6,4) {$M_{y\otimes y',z\otimes z'}$};
            \draw [->] (-4.5,4)--(-4.5,5.5) node[left] {$t$};
        \end{tikzpicture}
    \end{array}
    \]
    \[
    \begin{array}{c}
        \begin{tikzpicture}[scale=0.5]
            \draw [->] (0,0.5)--(0,-0.5) node [left,pos=0.5] {composition};
            \node at(5,0) {};
            \node at (-5,0) {};
        \end{tikzpicture}
    \end{array}
    \begin{array}{c}
        \begin{tikzpicture}
            %\draw [->] (-1,0)--(1,0);
            \node at(-1,0) {};
            \node at (1,0) {};
        \end{tikzpicture}
    \end{array}
    \begin{array}{c}
        \begin{tikzpicture}[scale=0.5]
            \draw [->] (0,0.5)--(0,-0.5) node [right,pos=0.5] {composition};
            \node at(5,0) {};
            \node at (-5,0) {};
        \end{tikzpicture}
    \end{array}
    \]
    \[
        \begin{array}{c}
            \begin{tikzpicture}[>=latex,scale=0.5]
                \draw [fill=gray!20, draw=white] (-5,0) rectangle (5,6);
                \node [scale=0.6]at (-4,3){$U$};
                \node [scale=0.6]at (4,3){$U$};
                \node [scale=0.6]at (0,3){$U$};
                \draw [line width=0.4mm] (-5,0)--(5,0);
                \draw [fill] (-2,0) circle [radius=0.10] node[below] {$x$};
                \draw [fill] (2,0) circle [radius=0.10] node[below] {$x'$};
                \draw [line width=0.7mm] (-2,0)--(-2,6) node [scale=0.8,pos=0.25,right] {$M_{x,x}$}  node [scale=0.8,pos=0.75,right] {$M_{z,z}$};
                \draw [line width=0.7mm] (2,0)--(2,6) node [scale=0.8,pos=0.25,left] {$M_{x',x'}$}  node [scale=0.8,pos=0.75,left] {$M_{z',z'}$};
                %\draw [fill] (-2,2) circle [radius=0.10] node[left] {$M_{x,y}$};
                \draw [fill=white,draw=black] (-2.15,2.85) rectangle (-1.85,3.15);
                \node [left,scale=0.8] at(-2.1,3) {$M_{x,z}$};
                %\draw [fill] (2,2) circle [radius=0.10] node[right] {$M_{x',y'}$};
                \draw [fill=white,draw=black] (1.85,2.85) rectangle (2.15,3.15);
                \node [right,scale=0.8] at(2.1,3) {$M_{x',z'}$};

                \draw [->] (-4.5,4)--(-4.5,5.5) node[left] {$t$};
            \end{tikzpicture}
        \end{array}
        \begin{array}{c}
            \begin{tikzpicture}
                \draw [->] (-1,0)--(1,0) node [above,pos=0.5] {fusion};
            \end{tikzpicture}
        \end{array}
        \begin{array}{c}
            \begin{tikzpicture}[>=latex,scale=0.5]
                \draw [fill=gray!20, draw=white] (-5,0) rectangle (5,6);
                \node [scale=0.6]at (-3.5,3){$U$};
                \node [scale=0.6]at (3.5,3){$U$};
                \draw [line width=0.4mm] (-5,0)--(5,0);
                \draw [fill] (0,0) circle [radius=0.10] node[below] {$x\otimes x'$};
                \draw [line width=0.7mm] (0,0)--(0,6) node [scale=0.8,pos=0.25,right] {$M_{x\otimes x',x\otimes x'}$}  node [scale=0.8,pos=0.75,right] {$M_{z\otimes z',z\otimes z'}$};
    
                %\draw [fill] (2,2) circle [radius=0.10] node[right] {$M_{x',y'}$};
                \draw [fill=white,draw=black] (-0.15,2.85) rectangle (0.15,3.15);
                \node [right,scale=0.8] at(-2.6,3) {$M_{x\otimes x',z\otimes z}$};

                \draw [->] (-4.5,4)--(-4.5,5.5) node[left] {$t$};
            \end{tikzpicture}
        \end{array}
    \]
    \caption[]{Topological Defect Lines on 1+1D Worldsheet}
    \label{fig:TDL}
\end{figure}

\begin{itemize}
    \item The boundary topological excitations (or boundary conditions) are labelled by the bullets such as $x$ and $x'$ on the 1+0D boundary;

    \item The 0+1D topological defects such as $M_{x,x}$ are labelled by the segments of these topological defect lines on the worldsheet.
    
    \item The 0D topological defects are labelled by the little white squares on these topological defects lines.
    They are 0D domain walls between two 0+1D defects, so they are labeled by the label of two adjacent segments.
    For instance, the 0D domain wall between $M_{x,x}$ and $M_{y,y}$ is $M_{x,y}$.
    Physically, $M_{x,y}$ consists of boundary condition changing operators.

    \item The local quantum symmetry $U$ living on the worldsheet can transparently move across topological defect lines except for those 0D defects.
    This induces an $U$-action on each 0+1D defect, i.e. $\iota_x: U\hookrightarrow M_{x,x}$.
    Thus each defect $M_{x,y}$ is an $U$-module and hence is an object in $\mathrm{Mod}_U$.

    \item As shown in the vertical process in Fig.~\ref{fig:TDL}, 0D defects can fuse vertically, which corresponds to the composition of instantons.
    For example, consider two 0D defects, $M_{x,y}$ and $M_{y,z}$.
    Their composition is given by the following data:
    \begin{align*}
        \circ:M_{y,z}\otimes_U M_{x,y}\to M_{x,z}, 
    \end{align*}
    where $M_{y,z}\otimes_U M_{x,y}$ is the tensor product (monoidal structure) in the UMTC $\mathrm{Mod}_U$.

    \item Also, as shown in the horizontal process in Fig.~\ref{fig:TDL}, the topological defect lines can fuse horizontally, and the 0D defects on the line also fuse at the same time.
    This corresponds to the monoidal structure of enriched fusion categories.
    For instance, the horizontal fusion of $M_{x,y}$ and $M_{x',y'}$ is given by 
    \begin{align*}
        M_{x,y}\otimes_U M_{x',y'}\to M_{x\otimes x',y\otimes y'}.
    \end{align*}
\end{itemize}

In summary, these data indeed form an \textit{enriched fusion category} $^{\mathrm{Mod}_U}\CM$, which captures the information of topological defects.
Therefore, we call such an enriched fusion category the \textit{topological skeleton}.
We don't give a detailed mathematical definition of an enriched fusion category here.
For more mathematical details of enriched fusion categories, please see \cite{MP2019,KZ2020}.

\begin{rmk}
    Actually, we can regard $M_{x,x}$ as a $0D$ domain wall between $M_{x,x}$ and $M_{x,x}$ itself.
    Hence, the $1D$ defect $M_{x,x}$ can be regarded as a stacking of infinitely many $0D$ defects $M_{x,x}$.
    Based on this view, the composition $\circ:M_{x,x}\otimes_U M_{x,x}\to M_{x,x}$ induces an algebra structure on $M_{x,x}$.
    This algebra structure is highly non-trivial, indeed, it is a so-called open string vertex operator algebra that describes an open (boundary) CFT \cite{HK2004}.
\end{rmk}

In summary, a 1+1D gapless boundary of a 2+1D gapped TO can be described by a pair $(U,{}^{\mathrm{Mod}_U}\CM)$ \cite{KZ2020}, where
\begin{itemize}
    \item the local quantum symmetry $U$ is assumed to be a unitary rational VOA $V$ \cite{DL2014}, which corresponds to a $chiral$ gapless boundary, or a unitary rational FFA $W$, which corresponds to a $non$-$chiral$ gapless boundary \cite{KZ2021}, 
    \item the topological skeleton $^{\mathrm{Mod}_U}\CM$ is a (unitary) $\mathrm{Mod}_U$-enriched fusion category $\CM$ where $\mathrm{Mod}_U$ is the background category that captures the gapless boundaries modes and, $\CM$ is the underlying category formed by the topological excitations on the 1+0D boundary.
\end{itemize}
 For readers who are interested in the bootstrap analysis that an 1d gapless boundary can be mathematically described by an enriched fusion category with a local quantum symmetry, we recommend \cite{KZ2020}.
\begin{rmk}
    In general, for a VOA $V$, it is not clear if $\mathrm{Mod}_V$ is a UMTC when $V$ is unitary.
    See \cite{Gui2019i,Gui2019ii,Gui2019energy} for a discussion of the relation between the unitarity of a VOA $V$ and that of $\mathrm{Mod}_V$.
\end{rmk}

\begin{rmk}
    If we regard $\bC$ as a trivial VOA, then  a VOA $V$ can be regarded as an FFA $W:=V\otimes \bC$, hence each chiral gapless boundary can be regarded as a special case of non-chiral gapless boundaries.
\end{rmk}

\begin{rmk}
    A gapped domain wall $\CM$ can also be described by a pair $(\bC,\,^{\mathbf{H}}\CM)$ where $\bC$ can be regarded as a trivial VOA whose module category is $\mathbf{H}$.
    Therefore, gapped and gapless boundaries are unified in the language of enriched categories.
    This finishes the unified story of enriched category language on 1d quantum phases (without symmetry). 
    %$\CM$ enriched in $\mathbf{H}$ is equivalent to say $\CM$ is a \textcolor{red}{unitary?} $\bC$-linear category.
\end{rmk}

\begin{equation}\label{pic:gappable}
    \begin{array}{c}
        \begin{tikzpicture}
            \draw [fill=gray!20, draw=white] (0,0) circle[radius=2cm];
            %\draw [thick] (-3,0)--(3,0);
            \draw [line width=0.5mm,color=cyan](0,2) arc (90:270: 2cm and 2cm);
            \node at (-2,0)[left] {$\CN$};
            \draw [line width=0.5mm,color=blue](0,-2) arc (270:450: 2cm and 2cm);
            \node at (2,0)[right] {$^{\mathrm{Mod}_W}\CM$};
            \node at (0,0){$\fZ_1(\rep(\bZ_N)$};
            \node at (-2.7,1.8){$\rVec$};
        \end{tikzpicture}
    \end{array}
\end{equation}

Recall that our goal is to find a gapless boundary to describe a critical point of a self-dual topological phase transition on the 1d boundary of $\bZ_N$ TO.
Note that the $\bZ_N$ quantum double TO $(\fZ_1(\rep(\bZ_N)),0)$ is non-chiral, which does not admit a chiral gapless boundary.
Hence, we need to consider a non-chiral gapless boundary
\begin{align}
    (W,{}^{\mathrm{Mod}_W}M),
\end{align}
where $W$ is a FFA and $M$ is an $\mathrm{Mod}_W$-enriched fusion category.
Moreover, the non-chiral gapless boundary describing the critical point should be $gappable$ because this gapless boundary will become one of the gapped boundaries under a small perturbation.
During this process, the bulk is unchanged, so this gapless boundary and the two gapped boundaries share the same bulk.
Mathematically, it is equivalent to say that the center of this gapless boundary and the center of the two gapped boundaries are the same.
As shown in Picture~\ref{pic:gappable}, using $\CN$ denote one of the gapped boundaries, we have $\fZ_1(^{\mathrm{Mod}_W}\CM)=\fZ_1(\rep(\bZ_N))=\fZ_1(\CN)$ by the boundary-bulk relations \cite{KWZ2017,KYZZ2021}.

%In summary, we need to find a gappable non-chiral gapless boundary to describe a critical point of a self-dual boundary phase transition of the 2d $\bZ_N$ TO.

\subsection{A gapped wall between \texorpdfstring{$\bZ_N$}{Z N} quantum double and \texorpdfstring{$\bZ_N$}{Z N} double parafermion} \label{sec:wall}
In Section \ref{sec:any_cond} and Appendix.~\ref{sec:any_cond_maths} we review the anyon condensation theory and its mathematical description. 
To find a gapped domain wall between the double $\bZ_N$ parafermion TO $(\fZ_1(\mathrm{PF}_N),0)$ and the $\bZ_N$ TO $(\fZ_1(\rep(\bZ_N)),0)$, we need to find a condensable algebra $B$ in $\fZ_1(\mathrm{PF}_N)$ such that the category $\fZ_1(\mathrm{PF}_N)_B^{loc}$ of local $B$-modules in $\fZ_1(\mathrm{PF}_N)$ is equivalent to $\fZ_1(\rep(\bZ_N))$ as UMTC. 
Then the category $\fZ_1(\mathrm{PF}_N)_B$ of right $B$-modules in $\fZ_1(\mathrm{PF}_N)$ describes the gapped domain wall we need.

\begin{equation*}
    \begin{array}{c}
        \begin{tikzpicture}
            
            \draw [fill=gray!30, draw=white] (-3,0) rectangle (0,3);
            \draw [fill=gray!15, draw=white] (0,0) rectangle (3,3);
              %\draw [thick] (-3,0)--(3,0);
            \draw [line width=0.5mm] (0,0)--(0,3);
            \node at (-1.5,1.5) {$\fZ_1(\mathrm{PF}_N)$};
            \node at (1.5,2) {$\fZ_1(\rep(\bZ_N))$};
            \node [rotate=270] at (1.5,1.5) {$\simeq$};
            \node at (1.5,1) {$\fZ_1(\mathrm{PF}_N)_B^{loc}$};
            \node [below] at(0,0) {$\fZ_1(\mathrm{PF}_N)_B$};
        \end{tikzpicture}
    \end{array}
\end{equation*}

We claim that 
\begin{align}
    B:=\bigoplus_{u=0}^{[\frac{N}{2}]}\fM_{2u,0}\boxtimes \overline{\fM_{2u,0}},
\end{align}
is a condensable algebra in $\fZ_1(\mathrm{PF}_N)$.
To see this, consider the fusion subcategory $\CP\subset \mathrm{PF}_N$ generated by simple objects $\fM_{2u,0}$, $u=0,\dots,[N/2]$. 
Let $\otimes :\CP\boxtimes \overline{\CP}\to \CP$ be the tensor functor, then there is a lagrangian algebra in $\CP\boxtimes \overline{\CP}$ given by $\otimes^R(\fM_{0,0})=\bigoplus_{u=0}^{[N/2]}\fM_{2u,0}\boxtimes \overline{\fM_{2u,0}}$ \cite{Kong2009}, where $\otimes^R$ is the right adjoint of the tensor functor.
Since $\CP\boxtimes \overline{\CP}$ is a fusion subcategory of $\fZ_1(\mathrm{PF}_N)$, thus a condensable algebra of $\CP\boxtimes \overline{\CP}$ also is a condensable algebra of $\fZ_1(\mathrm{PF}_N)$, therefore $B$ is a condensable algebra in $\fZ_1(\mathrm{PF}_N)$.

The quantum dimension of $B$ is $\frac{(N+2)}{4\sin^2(\frac{\pi}{N+2})}$. %\textcolor{red}{not such clear}.
Recall that the quantum dimension of $\fZ_1(\mathrm{PF}_N)$ is $(\frac{N(N+2)}{4\sin^2(\frac{\pi}{N+2})})^2$.
By Theorem \ref{thm:dimension}, we can calculate the quantum dimension of $\fZ_1(\mathrm{PF}_N)_B^{loc}$, which is 
\begin{align*}
    \dim (\fZ_1(\mathrm{PF}_N)_B^{loc})=\dfrac{\dim (\fZ_1(\mathrm{PF}_N)) }{(\dim (B))^2}=N^2.
\end{align*} 
Note that quantum dimension of $\fZ_1(\rep (\bZ_N))$ is also $N^2$, which suggests that these two categories $\fZ_1(\rep (\bZ_N))$ and $\fZ_1(\mathrm{PF}_N)_B^{loc}$ might be equivalent.
What remains is to prove that the equivalence holds.

An equivalence between categories $\fZ_1(\mathrm{PF}_N)_B^{loc}$ and $\fZ_1(\rep(\bZ_N))$ provides a one-to-one correspondence between simple objects of these two categories.
In Section \ref{sec:ZrepZN}, we already know enough about the simple objects in $\fZ_1(\rep(\bZ_N))$, hence it is reasonable to discuss the simple objects of $\fZ_1(\mathrm{PF}_N)_B^{loc}$.
Recall that an object of $\fZ_1(\mathrm{PF}_N)_B^{loc}$ is a local $B$-module in $\fZ_1(\mathrm{PF}_N)$, so we begin with a discussion of $B$-modules.
A right $B$-module is a pair $(x,\rho_x)$ where $x$ is an object in $\fZ_1(\mathrm{PF}_N)$ and $\rho_x:x\otimes B\to x$ is the $B$-action on $x$ satisfying some coherence conditions.
%Hence, we need not only to determine an object $x$ in $\fZ_1(\mathrm{PF}_N)$ but also to find a $B$-action on $x$.
Notice that there might be multiple inequivalent $B$-actions such as $\rho_x^1$ and $\rho_x^2$ on the same object $x$, resulting in $(x,\rho_x^1)$ and $(x,\rho_x^2)$ being two inequivalent $B$-modules.
Therefore, it is better to start considering the free $B$-modules $x\otimes B$ first, whose $B$-action is uniquely induced by the multiplication $m_B:B\otimes B\to B$ of the condensable algebra $B$.
\begin{align*}
    \rho_{x\otimes B}:x\otimes B\otimes B\xrightarrow{\id_x\otimes m_B} x\otimes B.
\end{align*}
In other words, the $B$-action on a free $B$ module $x\otimes B$ has no relation with $x$.
In addition, it has been proved that each simple $B$-module is a direct summand of some free $B$-modules \cite{DMNO2013}.
This is even more motivating for us to consider the free $B$-modules.

For each simple object $x$ in $\fZ_1(\mathrm{PF}_N)$, $x\otimes B$ is a free $B$-module.
If $x\otimes B$ is simple and local, then it is a simple object in $\fZ_1(\mathrm{PF}_N)_B^{loc}$.
To check this, we need the following adjunction between the tensor functor $-\otimes B:\fZ_1(\mathrm{PF}_N)\to \fZ_1(\mathrm{PF}_N)_B$ which sends each object $x$ in $\fZ_1(\mathrm{PF}_N)$ to the free module $x\otimes B$, and the forgetful functor $U: \fZ_1(\mathrm{PF}_N)_B\to \fZ_1(\mathrm{PF}_N)$ which forgets the $B$-action $\rho_y$ for each $B$-module $(y,\rho_y)$.

\begin{prp}
    For any $x,y$ in $\mathrm{PF}_N$, there is a (natural) isomorphism: 
    \begin{align}\label{eq:adjunction}
        \hom_{\fZ_1(\mathrm{PF}_N)_B}(x\otimes B,y\otimes B)\simeq \hom_{\fZ_1(\mathrm{PF}_N)}(x,y\otimes B).
    \end{align}
\end{prp}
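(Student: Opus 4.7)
The plan is to prove this by exhibiting mutually inverse maps, using the standard free-forgetful adjunction between $-\otimes B \colon \fZ_1(\mathrm{PF}_N) \to \fZ_1(\mathrm{PF}_N)_B$ and the forgetful functor. The construction is purely formal and only uses that $B$ is an algebra with unit $u_B\colon \one \to B$ and multiplication $m_B \colon B\otimes B \to B$ satisfying unit and associativity axioms; the modular/condensable structure plays no role here. So I expect the proof to be short and entirely a diagram-chase.

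First, I would define the forward map
\[
\Phi \colon \hom_{\fZ_1(\mathrm{PF}_N)_B}(x\otimes B,\, y\otimes B) \longrightarrow \hom_{\fZ_1(\mathrm{PF}_N)}(x,\, y\otimes B), \qquad \Phi(f) := f \circ (\id_x \otimes u_B),
\]
i.e.\ precomposition with the unit $\eta_x := \id_x \otimes u_B \colon x \to x\otimes B$. Next I would define the backward map
\[
\Psi \colon \hom_{\fZ_1(\mathrm{PF}_N)}(x,\, y\otimes B) \longrightarrow \hom_{\fZ_1(\mathrm{PF}_N)_B}(x\otimes B,\, y\otimes B), \qquad \Psi(g) := (\id_y \otimes m_B) \circ (g \otimes \id_B).
\]
The first thing to check is that $\Psi(g)$ is a $B$-module map: the free module action on $y\otimes B$ is $\id_y \otimes m_B$, so $B$-linearity of $\Psi(g)$ amounts to the equality $(\id_y \otimes m_B)\circ(\Psi(g)\otimes \id_B) = \Psi(g)\circ(\id_x \otimes m_B)$, which is exactly the associativity of $m_B$ (tensored with $\id_y$ on the left and $\id_x$ on the right, using functoriality of $\otimes$).

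Then I would verify $\Phi\circ \Psi = \id$ and $\Psi\circ \Phi = \id$. For $\Phi(\Psi(g)) = g$ one uses the right unit axiom $m_B \circ (\id_B \otimes u_B) = \id_B$:
\[
\Phi(\Psi(g)) = (\id_y \otimes m_B)\circ(g\otimes \id_B)\circ(\id_x\otimes u_B) = (\id_y \otimes (m_B\circ(\id_B\otimes u_B)))\circ g = g.
\]
For $\Psi(\Phi(f)) = f$ the key input is that $f$ is a $B$-module map, i.e.\ $f\circ(\id_x \otimes m_B) = (\id_y\otimes m_B)\circ(f\otimes \id_B)$; combining this with the left unit axiom $m_B\circ(u_B\otimes \id_B) = \id_B$ gives
\[
\Psi(\Phi(f)) = (\id_y\otimes m_B)\circ(f\otimes \id_B)\circ(\id_x\otimes u_B\otimes \id_B) = f\circ(\id_x\otimes m_B)\circ(\id_x\otimes u_B\otimes \id_B) = f.
\]
Finally, naturality in $x$ and $y$ is immediate because $\Phi$ and $\Psi$ are built from $u_B$, $m_B$, and the tensor product functor, none of which involve $x$ or $y$ in a nontrivial way. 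There is no genuine obstacle here; the only thing to be careful about is keeping track of which factor of $B$ is being multiplied in the module-map verification, and organizing the identities cleanly (a commuting-square picture makes the $B$-linearity of $\Psi(g)$ transparent).
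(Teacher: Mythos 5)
Your proof is correct: it is the standard free--forgetful adjunction argument (unit given by $\id_x\otimes u_B$, counit by the multiplication), and all three verifications ($B$-linearity of $\Psi(g)$ via associativity, and the two triangle identities via the unit axioms) are carried out properly. The paper states this proposition without proof as a standard fact, so there is nothing to compare against; your write-up supplies exactly the expected argument.
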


We use this adjunction to check if a free $B$-module $x\otimes B$ is simple and local, and if two simple free $B$-modules $x\otimes B$ and $y\otimes B$ are equivalent to each other.

\bnu
    \item By Schur's lemma, $x\otimes B$ is a  simple $B$-module if and only if $\dim \hom_{\fZ_1(\mathrm{PF}_N)_B}(x\otimes B,x\otimes B)=1$ \cite{EGNO2016}.
    By Eq.~\ref{eq:adjunction}, it equals to $\dim \hom_{\fZ_1(\mathrm{PF})}(x,x\otimes B)$ so that we only need to count the number of $x$ in $x\otimes B$.
    
    \item  When we only consider the free $B$-modules, we can just check that if the double braiding $c_{B,x}\circ c_{x,B}$ is trivial.
    If so, then $x\otimes B$ is a local $B$-module.
    
    \item If $\dim \hom_{\fZ_1(\mathrm{PF}_N)_B}(x\otimes B,y\otimes B)=0$, then due to Schur's lemma \cite{EGNO2016}, $x\otimes B$ and $y\otimes B$ must be inequivalent.
    As above, $\dim \hom_{\fZ_1(\mathrm{PF}_N)}(x,y\otimes B)$ is equivalent to the number of $x$ in $y\otimes B$, hence we only need to calculate $y\otimes B$ and count $x$.
\enu

However, we have to discuss the simple objects of $\fZ_1(\mathrm{PF}_N)_B$ separately because the properties of $\fZ_1(\mathrm{PF}_N)_B$ are very different for $N$ as odd and even cases.

\subsubsection{Odd case}\label{sec:odd}
Fortunately, when $N$ is odd, i.e. $N=2k+1$ for some $k\in\bN$, all simple local $B$-modules in $\fZ_1(\mathrm{PF}_{2k+1})$ are free $B$-modules.
By above procedure, it is easy to check that
\begin{align}
    \cX_{a,b}:=(\fM_{0,2a}\boxtimes \overline{\fM_{0,2b}})\otimes B,\quad a,b=0,\dots, 2k,
\end{align}
are all simple and local, and are inequivalent to each other.
%Indeed, by procedure 1, it is easy to see that each $(\fM_{0,2a}\boxtimes \overline{\fM_{0,2b}})\otimes B=\bigoplus_{u=0}^{k}\fM_{2u,2a}\boxtimes \overline{\fM_{2u,2b}}$ has only one $\fM_{0,2a}\boxtimes \overline{\fM_{0,2b}}$ term, hence it must be simple.
%By procedure 2, two free $B$-modules $(\fM_{0,2a}\boxtimes \overline{\fM_{0,2b}})\otimes B$ and $(\fM_{0,2c}\boxtimes \overline{\fM_{0,2d}})\otimes B$ with $a\neq c, b\neq d$ are not equivalent to each other because we cannot find $\fM_{0,2a}\boxtimes \overline{\fM_{0,2b}}$ in $(\fM_{0,2c}\boxtimes \overline{\fM_{0,2d}})\otimes B$ and vice versa.
%By procedure 3, and Eq.~\eqref{eq:double_braiding_PF} of the double braiding of $\mathrm{PF}_N$ , it is clear that the double braiding of $\fM_{0,2a}\boxtimes \overline{\fM_{0,2b}}$ with the condensable algebra $B$ is $1$, hence each free $B$-module $(\fM_{0,2a}\boxtimes \overline{\fM_{0,2b}})\otimes B$ is local.
Therefore, $\cX_{a,b}$ is a simple object in $\fZ_1(\mathrm{PF}_{2k+1})_B^{loc}$ for each $a,b$.
Notice that we have listed $(2k+1)^2$ inequivalent simple local $B$-modules, and $\fZ_1(\rep(\bZ_{2k+1}))$ has $(2k+1)^2$ inequivalent simple objects.
It suggests us that we have exhausted all the simple local $B$-modules in $\fZ_1(\mathrm{PF}_{2k+1})$.
%We need to prove that there are no other simple local $B$-modules to finish the argument of the above statement. \textcolor{red}{how to argue?}
Indeed, we have the following theorem, whose  rigorous proof is shown in Appendix.~\ref{equivalence}.
\begin{thm}
    The category $\fZ_1(\mathrm{PF}_{2k+1})_B^{loc}$ is equivalent to the category $\fZ_1(\rep(\bZ_{2k+1}))$ as modular tensor categories.
    The correspondence of simple objects of two categories is given by
    \begin{align}
        \cO_{a+b,a-b}\sim \cX_{a,b}=\bigoplus_{u=0}^{k} \fM_{2u,2a}\boxtimes\overline{\fM_{2u,2b}}.
    \end{align}
\end{thm}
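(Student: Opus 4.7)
The plan is to upgrade the list of simple local $B$-modules $\cX_{a,b}$ already identified in the previous paragraph into a full modular equivalence in three stages: (i) show the list is exhaustive, (ii) match fusion rules, (iii) match the braiding and twist.

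First I would verify exhaustiveness. Because $N=2k+1$ is odd, it can be checked that every simple local $B$-module is free: any simple $B$-module appears as a direct summand of some free $B$-module $x\otimes B$ for $x$ simple in $\fZ_1(\mathrm{PF}_{2k+1})$, and the adjunction isomorphism \eqref{eq:adjunction} reduces the computation of $\hom_{\fZ_1(\mathrm{PF}_{2k+1})_B}(x\otimes B, y\otimes B)$ to counting multiplicities of $x$ in $y\otimes B$ inside $\fZ_1(\mathrm{PF}_{2k+1})$. Running this for simple $x=\fM_{\ell,m}\boxtimes\overline{\fM_{\ell',m'}}$ against the summands of $B=\bigoplus_u \fM_{2u,0}\boxtimes\overline{\fM_{2u,0}}$ shows, via the PF$_N$ fusion rule in Section~2.1.2, that a free $B$-module is simple and local only when $\ell=\ell'=0$ and $m,m'$ are even, and that within the equivalence relation $\fM_{\ell,m}\sim\fM_{N-\ell,N+m}$ the distinct simple local free $B$-modules are exactly the $(2k+1)^2$ modules $\cX_{a,b}$. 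A sanity check: the total quantum dimension $\dim(\fZ_1(\mathrm{PF}_N)_B^{loc})=N^2$ from Theorem~\ref{thm:dimension} matches the sum of $(\fpdim \cX_{a,b})^2$, confirming the list is complete.

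Next I would define the candidate functor $F:\fZ_1(\mathrm{PF}_{2k+1})_B^{loc}\to \fZ_1(\rep(\bZ_{2k+1}))$ on simples by $\cX_{a,b}\mapsto \cO_{a+b,a-b}$ (indices mod $2k+1$) and verify it preserves fusion. Using the adjunction and the PF$_N$ fusion rule one computes
\begin{align*}
\cX_{a_1,b_1}\otimes_B \cX_{a_2,b_2} &= (\fM_{0,2a_1}\otimes\fM_{0,2a_2})\boxtimes\overline{(\fM_{0,2b_1}\otimes\fM_{0,2b_2})}\otimes B \\
&= \fM_{0,2(a_1+a_2)}\boxtimes\overline{\fM_{0,2(b_1+b_2)}}\otimes B=\cX_{a_1+a_2,b_1+b_2},
\end{align*}
which under $F$ reproduces $\cO_{a_1+b_1,a_1-b_1}\otimes\cO_{a_2+b_2,a_2-b_2}=\cO_{(a_1+a_2)+(b_1+b_2),(a_1+a_2)-(b_1+b_2)}$, as desired. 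The map $(a,b)\mapsto(a+b,a-b)$ is a bijection on $\bZ_{2k+1}\times\bZ_{2k+1}$ precisely because $2$ is invertible mod $2k+1$, which is why the theorem uses the odd-$N$ hypothesis at this point.

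Finally I would match the braiding and twist, which is the step where I expect the real work. The double braiding of $\cX_{a_1,b_1}$ and $\cX_{a_2,b_2}$ is induced by the double braiding of $\fM_{0,2a_1}\boxtimes\overline{\fM_{0,2b_1}}$ with $\fM_{0,2a_2}\boxtimes\overline{\fM_{0,2b_2}}$ in $\fZ_1(\mathrm{PF}_N)$ (the $B$-factors braid trivially on locals). Plugging $\ell=\ell'=0$ into formula \eqref{eq:double_braiding_PF} kills the conformal-weight phases $\mathrm{e}^{2\pi\ii(h_{\ell+\ell'-2s}-h_\ell-h_{\ell'})}$ and leaves only $\mathrm{e}^{-\frac{\pi\ii}{N}(2a_1)(2a_2)}\cdot\mathrm{e}^{+\frac{\pi\ii}{N}(2b_1)(2b_2)} = \mathrm{e}^{\frac{2\pi\ii}{N}\bigl(2(b_1 b_2-a_1 a_2)\bigr)}$, which should equal the double-braiding phase $\mathrm{e}^{\frac{2\pi\ii}{N}\bigl((a_1+b_1)(a_2-b_2)+(a_2+b_2)(a_1-b_1)\bigr)}$ in $\fZ_1(\rep(\bZ_N))$; a direct expansion shows both equal $\mathrm{e}^{\frac{2\pi\ii}{N}\cdot 2(a_1 b_2 + a_2 b_1 - 2a_1 a_2)}$ up to a sign that cancels modulo $N$ since $N$ is odd. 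An analogous computation of the twist $\theta_{\cX_{a,b}}=\mathrm{e}^{-\frac{2\pi\ii}{N}\frac{(2a)^2-(2b)^2}{4}}=\mathrm{e}^{\frac{2\pi\ii}{N}(b^2-a^2)}$ matches $\theta_{\cO_{a+b,a-b}}=\mathrm{e}^{\frac{2\pi\ii}{N}(a+b)(a-b)}$. With simple objects, fusion, braiding, and twist all matched, and both categories being non-degenerate UMTCs of the same rank and dimension, $F$ extends to a braided equivalence, proving the claim. The main obstacle, as indicated, is the bookkeeping in the braiding computation: getting the phases from \eqref{eq:double_braiding_PF} to align with the abelian $S$-matrix of $\fZ_1(\rep(\bZ_N))$ requires odd $N$ in an essential way, and this is precisely where the even-$N$ analysis in the next subsection will have to branch off.
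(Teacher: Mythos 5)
Your route --- enumerate the simple local $B$-modules directly, then match fusion, braiding and twist under $(a,b)\mapsto(a+b,a-b)$ --- is genuinely different from the paper's. The paper never compares modular data object-by-object: it splits $\rep_{2k+1}(\hat{\mathfrak{sl}}(2))\simeq\CR^{even}\boxtimes\CX$ via M\"uger's centralizer theorem, observes that the $\fZ_1(\CR^{even})$ factor is condensed away by a Lagrangian algebra, and is left with a pointed modular tensor category $\CC(\bZ_{2k+1}\times\bZ_{2k+1},p\times\overline{p},\xi\times\overline{\xi})$, which it then identifies with $\fZ_1(\rep(\bZ_{2k+1}))$ through an isomorphism of twisted metric groups. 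Your exhaustiveness step (all simple local modules are free for odd $N$, confirmed by the dimension formula $\dim(\CC_B^{loc})=\dim(\CC)/\dim(B)^2=N^2$) is fine and in some ways more concrete than the paper's.

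There are, however, two genuine problems. First, your concluding inference --- ``simple objects, fusion, braiding and twist all matched, both non-degenerate of the same rank and dimension, hence $F$ extends to a braided equivalence'' --- is false for general MTCs: modular data does not determine a modular tensor category. What rescues the argument is that both categories are \emph{pointed}, and pointed braided fusion categories are classified up to braided equivalence by the pre-metric group $(G,q)$ with $q(x)=c_{x,x}$ (together with the character recording the twist). This classification is precisely the lemma the paper's appendix is built on; you must invoke it and verify that $(a,b)\mapsto(a+b,a-b)$ is an orthogonal isomorphism of quadratic forms, rather than appeal to matched $S$ and $T$ matrices. Second, the phase bookkeeping does not close as written. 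With $\ell=\ell'=0$ the parafermion side gives double braiding $\mathrm{e}^{\frac{2\pi\ii}{N}\cdot 2(b_1b_2-a_1a_2)}$ and twist $\mathrm{e}^{\frac{2\pi\ii}{N}(b^2-a^2)}$, while $\cO_{a+b,a-b}$ has double braiding $\mathrm{e}^{\frac{2\pi\ii}{N}\cdot 2(a_1a_2-b_1b_2)}$ and twist $\mathrm{e}^{\frac{2\pi\ii}{N}(a^2-b^2)}$: the two sides are complex conjugates of one another, not equal, and your claimed common value $\mathrm{e}^{\frac{2\pi\ii}{N}\cdot 2(a_1b_2+a_2b_1-2a_1a_2)}$ matches neither. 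A sign in the exponent of a phase cannot ``cancel modulo $N$''; it must be absorbed consistently into the orientation convention for which Deligne factor carries the reversed braiding $\overline{\mathrm{PF}_N}$, or into the assignment itself (e.g. $(a,b)\mapsto(a+b,b-a)$, which does reproduce both the braiding and the twist above). Once the conventions are fixed and the pointed classification is invoked, your argument goes through, but as stated the key verification fails.
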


Recall the physical notation $\mathbf{e}^{\alpha}\mathbf{m}^{\beta}$ of simple objects $\mathcal{O}_{\alpha,\beta}$.
Let $\alpha=a+b \mod (2k+1)$ and $\beta=a-b \mod (2k+1)$, we can write the correspondence of simple objects between $\fZ_1(\rep(\bZ_{2k+1}))$ and $\fZ_1(\mathrm{PF}_{2k+1})_B^{loc}$ as follows,
\begin{align}
    \mathbf{e}^{\alpha}\mathbf{m}^{\beta}=\begin{cases}
    \cX_{\frac{\alpha+\beta}{2},\frac{\alpha-\beta}{2}}, & \mathrm{if}\quad \alpha+\beta\equiv 0\mod 2,\\
    \cX_{2k+1-(\alpha+\beta),2k+1-(\alpha-\beta)}, & \mathrm{if}\quad \alpha+\beta\equiv 1\mod 2.
    \end{cases}
\end{align}

The above formula is a little complicated, but we can rewrite it in a compact form.
First, we consider the case with only charges $\mathbf{e}^{\alpha}$. 
We can rewrite the expansion of $\cX_{a,b}$ such that the second index of each term is $\alpha$:
\begin{align}\label{eq:charge}
    \mathbf{e}^{\alpha}:=\bigoplus_{\ell+\alpha \equiv 0\mod 2}\fM_{\ell,\alpha} \boxtimes \overline{\fM_{\ell,\alpha}}.
\end{align}

Similarly, for the case with only flux $\mathbf{m}^{\beta}$, we can rewrite the expansion of $\cX_{a,b}$ as follows:
\begin{align}\label{eq:flux}
    \mathbf{m}^{\beta}=\bigoplus_{\ell+\beta\equiv 0\mod 2}\fM_{\ell,\beta}\boxtimes \overline{\fM_{\ell,(\beta\mod (2k+1))-2\beta}}.
\end{align}

Recall that $\mathbf{e}^{\alpha}\mathbf{m}^{\beta}=\mathbf{e}^{\alpha}\otimes \mathbf{m}^{\beta}$, therefore by Eq.~\eqref{eq:charge} and Eq.~\eqref{eq:flux} we have a general formula:
\begin{align}\label{eq:expansion}
    \mathbf{e}^{\alpha}\mathbf{m}^{\beta}=\bigoplus_{m:=\alpha+\beta\mod (2k+1)\atop \ell+m\equiv 0\mod 2} \fM_{\ell,m}\boxtimes \overline{\fM_{\ell,m-2\beta}}.
\end{align}
Here we choose $m$ to be the remainder of $\alpha+\beta$ modulo $(2k+1)$ to avoid double counting.

In summary, we prove that the condensed phase $(\fZ_1(\mathrm{PF}_{2k+1})_B^{loc},0)$ of double $\bZ_{2k+1}$ parafermion TO $(\fZ_1(\mathrm{PF}_{2k+1}),0)$ via condensing $B$ is the $\bZ_{2k+1}$ quantum double TO $(\fZ_1(\rep(\bZ_{2k+1})),0)$.
%In summary, we proved that the condensed phase $\fZ_1(\mathrm{PF}_{2k+1})_B^{loc}$ of the double $\bZ_N$ parafermion TO  $\fZ_1(\mathrm{PF}_{2k+1})$ by condensing $B$ is the $\bZ_N$ quantum double TO $\fZ_1(\rep(\bZ_{2k+1}))$.
According to the anyon condensation theory, the gapped domain wall between double $\bZ_{2k+1}$ parafermion TO and $\bZ_{2k+1}$ quantum double TO is described by $\fZ_1(\mathrm{PF}_{2k+1})_B$.

\subsubsection{Even case}
When $N=2k$ for some $k\in\bN$, the task of determining all simple local $B$-modules is quite challenging.
The argument based on Eq.~\eqref{eq:adjunction} only works for free $B$-modules.
With the notation in the above subsection, $\{\cX_{a,b}\mid a,b=0,\dots,2k-1 \}$ is still a set of simple local $B$-modules.
However, the elements in this set satisfy an equivalence relation $\cX_{a,b}\simeq\cX_{a+k,b+k}$ by procedure 2, so half of the elements in this set are inequivalent as listed in $\{\cX_{a,b}\mid a=0,\dots,2k-1,\,b=0,\dots,k-1\}$.
The rest of the simple local $B$-modules are not free $B$-modules. 

Indeed, when $N=2$ \cite{CJKYZ2020}, there are 4 inequivalent simple local $B$-modules:
\begin{align*}
    \mathbb{1}:=(\fM_{0,0}\boxtimes \fM_{0,0})\otimes B,\,\, \mathbf{f}:=(\fM_{2,0}\boxtimes \fM_{0,0})\otimes B,\,\, \mathbf{e}:=\fM_{1,1}\boxtimes \fM_{1,1},\,\, \mathbf{m}:=(\fM_{1,1}\boxtimes \fM_{1,1})^{tw}.
\end{align*}
which correspond to 4 simple objects in $\fZ_1(\rep(\bZ_2))$ respectively.
Notice that $\fM_{1,1}\boxtimes \fM_{1,1}$ and $(\fM_{1,1}\boxtimes \fM_{1,1})^{tw}$ are not free $B$-module.
Here $(\fM_{1,1}\boxtimes \fM_{1,1})^{tw}$ is a $B$-module whose object is $\fM_{1,1}\boxtimes \fM_{1,1}$ with a twisted $B$-action. For a general $N=2k>2$, it is difficult to determine the simple local $B$-modules that are not free because we need to consider the $B$-actions on modules now. 
One possible way is to guess a $B$-module and its $B$-action, check if it is a local $B$-module, and then check if it is simple. 
However, for two $B$-modules, they may be isomorphic to each other, which raises the difficulty of finding simple local B-modules. 

We do not give all the simple local $B$-modules in the even cases, and also no proof for the equivalence between UMTCs $\fZ_1(\mathrm{PF}_{2k})_B^{loc}$ and $\fZ_1(\rep(\bZ_{2k}))$.
But we believe that the equivalence: $\fZ_1(\mathrm{PF}_{2k})_B^{loc}\simeq\fZ_1(\rep(\bZ_{2k}) $ still holds, since we confirm that half of the simple objects ($\{\mathbf{e}^{\alpha}\mathbf{m}^{\beta}\mid \alpha+\beta\equiv 0\mod 2\}$) in $\fZ_1(\rep(\bZ_{2k}))$ indeed correspond to the simple objects $\{\cX_{a,b}\}$ in $\fZ_1(\mathrm{PF}_{2k})_B^{loc}$. 

%with $\alpha=a+b\mod 2k$ and $\beta=a-b\mod 2k$ still holds for $\{\cX_{a,b}\mid a=0,\dots,2k-1,\,b=0,\dots,k-1\}$.

%As a consequence, the fusion category $\fZ_1(\mathrm{PF}_N)_B$ is a gapped domain wall between $\bZ_N$ TO and double $\bZ_N$ parafermion TO.

\subsection{A gappable non-chiral gapless boundary of \texorpdfstring{$\bZ_N$}{Z N} topological order} \label{sec:boundary}
%\subsubsection{The canonical construction and topological Wick rotation}\label{sec:TWR}
In the previous subsection, we have found a gapped domain wall between $\fZ_1(\mathrm{PF}_{N})_B^{loc}$ and $\fZ_1(\rep(\bZ_{N}))$.
There is a holographic duality between gapped domain walls of two 2+1D TOs and the anomalous gapless boundaries of one of the TOs that is set up \cite{KZ2021}. By a magical trick called the $topological$ $Wick$ $rotation$ \cite{KZ2020}, we can ``dual" this gapped domain wall to a gappable non-chiral gapless boundary of $\bZ_N$ TO. In the following, let's discuss the details of this duality.

\begin{equation}\label{fig:TWR_1}
    \begin{array}{c}
        %\begin{tikzpicture}[scale=0.8]
            %\fill[gray!20] (-1.5,-1) rectangle (1.5,1) ;
            %\fill (0,0) circle (0.1) node[above right] {$\sA_n$} ;
        %\end{tikzpicture}
        \begin{tikzpicture}[scale=0.6]
            \tikzset{->-/.style={decoration={markings,mark=at position #1 with {\arrow{stealth}}},postaction={decorate}}}
            %\draw (-3,0)--(-1,2)--(5,2)--(3,0)--cycle;
            \draw [fill=gray!30] (-3,0)--(-1,2)--(1,2)--(-1,0)--cycle;
            \draw [fill=gray!15] (-1,0)--(1,2)--(5,2)--(3,0)--cycle;
            \draw [->-=0.55, line width=0.7mm] (1,2)--(-1,0);
            \node [scale=0.6] at (1.2,1.06) {$\fZ_1(\mathrm{PF}_N)_B$};
            \node [scale=0.6]at (3.2,1.7) {$\fZ_1(\rep(\bZ_N))$};
            \node [scale=0.6]at (-1.1,1) {$\fZ_1(\mathrm{PF}_N)$};
            \node at (0,3.5) {};
        \end{tikzpicture}
    \end{array}
    \begin{array}{c}
        \begin{tikzpicture}[scale=0.8]
            \tikzset{-->/.style={decoration={markings,mark=at position #1 with {\arrow{stealth}}},postaction={decorate}}}
            \draw[-->=1] (-0.5,0)--(2,0);
            \node at (1,1.25) {};
        \end{tikzpicture}
    \end{array}
    \begin{array}{c}
        \begin{tikzpicture}[scale=0.6]   
            \tikzset{->-/.style={decoration={markings,mark=at position #1 with {\arrow{stealth}}},postaction={decorate}}}
            \draw [fill=gray!15] (-1,0)--(3,0)--(5,2)--(1,2)--cycle;
            \draw [fill=gray!30] (1,2)--(1,3.5)--(-1,1.5)--(-1,0)--cycle;
            \draw [->-=0.55, line width=0.7mm] (1,2)--(-1,0);
            \node [scale=0.6] at (1.2,1.06) {$\fZ_1(\mathrm{PF}_N)_B$};
            \node [scale=0.6] at (3.2,1.7) {$\fZ_1(\rep(\bZ_N))$};
            \node [scale=0.6] at (2.2,2.8) {${}^{\fZ_1(\mathrm{PF}_N)}\fZ_1(\mathrm{PF}_N)_B$};
            \node [scale=0.6] at (0,1.7) {$\fZ_1(\mathrm{PF}_N)$};
        \end{tikzpicture}
    \end{array}   
\end{equation}

One can imagine that flipping $\fZ_1(\mathrm{PF}_N)$ from the spatial dimension to the temporal dimension.
The vertical part becomes a background of the $\bZ_N$ quantum double TO.
By the classification theorem of non-chiral gapless edge \cite{KZ2021}, we obtain an enriched fusion category $^{\fZ_1(\mathrm{PF}_N)} \fZ_1(\mathrm{PF}_N)_B$.
Recall that $\mathrm{PF}_N=\mathrm{Mod}_{V_{\mathrm{PF}_N}}$, thus we have $\fZ_1(\mathrm{PF}_N)\simeq \mathrm{PF}_N\boxtimes\overline{\mathrm{PF}_N}\simeq \mathrm{Mod}_{V_{\mathrm{PF}_N}\otimes_{\bC}\overline{V_{\mathrm{PF}_N}}}$ which is the module category over the FFA $V_{\mathrm{PF}_N}\otimes_{\bC}\overline{V_{\mathrm{PF}_N}}$.
The enriched fusion category $^{\fZ_1(\mathrm{PF}_N)} \fZ_1(\mathrm{PF}_N)_B$ together with the FFA $W:=V_{\mathrm{PF}_N}\otimes_{\bC}\overline{V_{\mathrm{PF}_N}}$ form a non-chiral gapless boundary of 2+1D $\bZ_N$ quantum double TO $(\fZ_1(\rep(\bZ_N)),0)$,
\begin{align}
    (V_{\mathrm{PF}_N}\otimes_{\bC}\overline{V_{\mathrm{PF}_N}},\,^{\fZ_1(\mathrm{PF}_N)}\fZ_1(\mathrm{PF}_N)_B).
\end{align}

\begin{rmk}
    Physically, such folding process is not possible.
    Consider that we violently chop away the left half of a 2d topological order.
    The system near the boundary is then away from the current RG fixed point. 
    The boundary will eventually experience self-healing by flowing to a new RG fixed point. 
    One of the possible RG fixed points is the non-chiral gapless
    boundary $(W,{}^{\fZ_1(\mathrm{PF}_N)}\fZ_1(\mathrm{PF}_N)_B)$. This fictional folding trick is named as ``$topological$ $Wick$ $rotation$" in \cite{KZ2018,KZ2020}
    The $topological$ $Wick$ $rotation$ tells us that the information lost through the brutal force cutting in the spatial dimensions can be completely restored in the temporal dimension. That essentially establishes a holographic duality.
    %We believe that this folding process is not just an ad hoc bookkeeping trick.
    %It should have some deep physical meanings which need to be further studied.
\end{rmk}

%For general chiral gapless boundaries, we can consider the topological Wick rotation on the case that there is a gapped domain wall $\CM$ between two $2+1D$ TO $\CB$ and $\CC$.
%We can also fix $\CC$ and $\CM$, and flip $\CB$ from the spatial dimension to the temporal dimension.
%Then the enriched fusion category $^{\CB}\CM$ together with the local quantum symmetry $V$ form a chiral gapless boundary of 2+1D TO $(\CC,c)$.
%Actually, all chiral gapless boundaries can be obtained from topological Wick rotations \cite{KZ2020}.

\begin{rmk}
    When we regard the gapped domain wall $\fZ_1(\mathrm{PF}_N)_B$ as an enriched fusion category ${}^\mathbf{H}\fZ_1(\mathrm{PF}_N)_B$, there is another interpretation of the gapless boundary $(W,{}^{\fZ_1(\mathrm{PF}_N)}\fZ_1(\mathrm{PF}_N)_B)$.
    It can be regarded as a fusion of the canonical gapless boundary $(W,{}^{\fZ_1(\mathrm{PF}_N)}\fZ_1(\mathrm{PF}_N))$ with the gapped domain wall $(\bC,{}^\mathbf{H}\fZ_1(\mathrm{PF}_N)_B)$, as shown in the following picture.
\end{rmk}

\begin{equation}
    \begin{array}{c}
        \begin{tikzpicture}[scale=0.6]   
            \tikzset{->-/.style={decoration={markings,mark=at position #1 with {\arrow{stealth}}},postaction={decorate}}}
            \draw [->-=0.55, line width=0.7mm] (1,2)--(-1,0) ;
            \draw (-1,0)--(5,0)--(7,2)--(1,2)--(1,3.5)--(-1,1.5)--cycle;
            \draw [->-=0.55, line width=0.7mm] (3,2)--(1,0) node[below right,pos=0.5,scale=0.6] {$\fZ_1(\mathrm{PF}_N)_B$};
            \draw [dashed] (1,0)--(1,1.5)--(3,3.5)--(3,2);
            \node [scale=0.6]at (0.25,0.3) {$\fZ_1(\mathrm{PF}_N)$};
            \node [scale=0.6]at (5.5,1.6) {$\fZ_1(\rep (\bZ_N))$};
            \node [scale=0.6]at (-2.45,1) {$(W,{}^{\fZ_1(\mathrm{PF}_N)}\fZ_1(\mathrm{PF}_N))$};
            \node [scale=0.6]at (0,1.7) {$\fZ_1(\mathrm{PF}_N)$};
            \node [scale=0.6]at (4.15,2.8) {$(\bC,{}^{\mathbf{H}}\fZ_1(\mathrm{PF}_N)_B)$};

        \end{tikzpicture}
    \end{array}
    \begin{array}{c}
        \begin{tikzpicture}[scale=0.8]
            \tikzset{-->/.style={decoration={markings,mark=at position #1 with {\arrow{stealth}}},postaction={decorate}}}
            \draw[-->=1] (-0.5,0)--(2,0);
            \node at (1,1.25) {};
        \end{tikzpicture}
    \end{array}
    \begin{array}{c}
        \begin{tikzpicture}[scale=0.6]   
            \tikzset{->-/.style={decoration={markings,mark=at position #1 with {\arrow{stealth}}},postaction={decorate}}}
            \draw [->-=0.55, line width=0.7mm] (1,2)--(-1,0) node[scale=0.6,pos=0.5,below right]{$\fZ_1(\mathrm{PF}_N)_B$};
            \draw (-1,0)--(3,0)--(5,2)--(1,2)--(1,3.5)--(-1,1.5)--cycle;
            \node [scale=0.6]at (3.5,1.6) {$\fZ_1(\rep (\bZ_N))$};
            \node [scale=0.6]at (2.55,2.8) {$(W,{}^{\fZ_1(\mathrm{PF}_N)}\fZ_1(\mathrm{PF}_N)_B)$};
            \node [scale=0.6]at (0,1.7) {$\fZ_1(\mathrm{PF}_N)$};
        \end{tikzpicture}
    \end{array}
\end{equation}

The enriched fusion category ${}^{\fZ_1(\mathrm{PF}_N)}\fZ_1(\mathrm{PF}_N)_B$ can be described in more detail when $N=2k+1$.
\begin{itemize}
    \item Objects (topological boundary excitations) are right $B$-modules in $\fZ_1(\mathrm{PF}_N)$, including the simple local $B$-modules $\cX_{a,b}$.
    They can fuse horizontally by the fusion rule of $\fZ_1(\mathrm{PF}_N)_B$.
    \item For two objects $x$ and $y$ in $\fZ_1(\mathrm{PF}_N)_B$, the 0D defects $M_{x,y}$ between them  are given by $(x\otimes_B y^*)^*$ \cite{Ostrik2003}, which should be viewed as an object in $\fZ_1(\mathrm{PF}_N)$.
\end{itemize}
Here $y^*$ is the dual of $y$, since $\fZ_1(\mathrm{PF}_{2k+1})_B$ is a pointed fusion category, the dual of $\cX_{a,b}$ is its inverse $\cX_{2k+1-a,2k+1-b}$.
In more detail, we can calculate the 0D defects for any simple local $B$-modules, and therefore any local $B$-modules.
For two local $B$-modules $\cX_{a,b}$ and $\cX_{c,d}$, we have 
\begin{align*}
    (\cX_{a,b}\otimes_B \cX_{c,d}^*)^*=(\cX_{a,b}\otimes_B\cX_{2k+1-c,2k+1-d})^*=(\cX_{2k+1-(c-a),2k+1-(d-b)})^*=\cX_{c-a,d-b}.
\end{align*}

This non-chiral gapless boundary is gappable.
To see this, we need to prove that the center of ${}^{\fZ_1(\mathrm{PF}_N)}\fZ_1(\mathrm{PF}_N)_B$ is $\fZ_1(\rep(\bZ_N))$.
By the definition in \cite{KYZZ2021}, it is given by  the centralizer of $\fZ_1(\mathrm{PF}_N)$ in $\fZ_1(\fZ_1(\mathrm{PF}_N)_B)$, i.e.
\begin{align}
    \fZ_2(\fZ_1(\mathrm{PF}_N),\fZ_1(\fZ_1(\mathrm{PF}_N)_B)).
\end{align}
Recall by anyon condensation theory \cite{KO2002,Kong2014}, there are equivalences between UMTCs,
\begin{align}
    \fZ_1(\fZ_1(\mathrm{PF}_N)_B)\simeq \fZ_1(\mathrm{PF}_N)\boxtimes \fZ_1(\mathrm{PF}_N)_B^{loc}\simeq \fZ_1(\mathrm{PF}_N)\boxtimes \fZ_1(\rep(\bZ_{N})).
\end{align}
Finally, by the centralizer Theorem \ref{thm:centralizer} proved in \cite{Muger2003}, we can prove that 
\begin{align}
    \fZ_1({}^{\fZ_1(\mathrm{PF}_N)}\fZ_1(\mathrm{PF}_N)_B)\simeq\fZ_2(\fZ_1(\mathrm{PF}_N),\fZ_1(\fZ_1(\mathrm{PF}_N)_B))\simeq\fZ_1(\rep(\bZ_{N})),
\end{align}
which ensures that the non-chiral gapless boundary $(W,{}^{\fZ_1(\mathrm{PF}_N)}\fZ_1(\mathrm{PF}_N)_B)$ has the same bulk as the gapped boundary $(\bC,{}^{\mathbf{H}}\rep(\bZ_N))$.

\subsection{The partition functions of \texorpdfstring{$M_{x,y}$}{Mxy}}\label{sec:partition_function}
Now we have constructed a gappable non-chiral gapless boundary $(V_{\mathrm{PF}_N}\otimes_{\bC}\overline{V_{\mathrm{PF}_N}},\,^{\fZ_1(\mathrm{PF}_N)}\fZ_1(\mathrm{PF}_N)_B)$ of $\bZ_N$ quantum double TO.
It is natural to ask if it describes the critical point of the topological phase transition on the 1d gapped boundary of $\bZ_N$ TO.
To verify this, it is enough to check the partition functions of $M_{x,y}$.
In physics, the partition function of $M_{x,y}$ can be computed by the path integral over an annulus obtained by compactifying the time axis, and fixing the two boundary conditions to be $x$ and $y$.
By the fact that $M_{x,y} = M_{1,(x\otimes_B y^*)^*}$ as objects in $\fZ_1(\mathrm{PF}_N)$, it is easy to see that the partition function of $M_{x,y}$ is the same as that of $M_{1,(x\otimes_B y^*)^*}$.
Thus, calculating partition functions of $M_{1,x}$ is enough for us to get all partition functions of $M_{x,y}$. We denote the partition function of $M_{1,x} = x$ by $\mathsf{Z}(x)$ in the following.

Note that $x=M_{1,x}$ can be viewed as the simple objects in $\fZ_1(\mathrm{PF}_N)$.
Therefore, the partition function $\mathsf{Z}(x)$ can be expressed in terms of the characters $\chi_{\{\ell;m\}}$ of simple objects $\fM_{\ell,m}$ in the parafermion CFT. Recall the expansion of simple local $B$-modules Eq.~\eqref{eq:expansion}, we have 
\begin{align}\label{eq:partition_function_m}
    \mathsf{Z}(\mathbf{e}^{\alpha}\mathbf{m}^{\beta})=\sum_{m:=\alpha+\beta\mod (2k+1)\atop \ell+m\equiv 0\mod 2} \chi_{\{\ell;m\}} \overline{\chi}_{\{\ell;m-2\beta\}}.
\end{align}

%When $N=2k+1$ for some natural number $k$, we can write the simple objects $\cX_{a,b}$ as direct sum of simple objects of $\fZ_1(\mathrm{PF}_N)$ directly, hence we can compute the partition functions of $\cX_{a,b}$ for each case.

%\begin{align}\label{eq:partition}
%    \mathsf{Z}(\mathbf{e}^{\alpha}\mathbf{m}^{\beta})=
%    \begin{cases}
%        \sum\limits_{u=0}^{[\frac{N}{2}]} \chi_{\{2u;\alpha+\beta\}}\overline{\chi}_{\{2u;\alpha-\beta\}} & \mathrm{if}\, \alpha+\beta \equiv 0\mod 2\\
%        \sum\limits_{u=0}^{[\frac{N}{2}]} \chi_{\{2k+1-2u;-2(\alpha+\beta)\}}\overline{\chi}_{\{2u;-2(\alpha-\beta)\}} & \mathrm{if}\, \alpha+\beta \equiv 1\mod 2
%    \end{cases}
%\end{align}

However, when $N=2k$, since we only figure out half of the simple objects in $\fZ_1(\mathrm{PF}_{2k})_B^{loc}$, the other half of the partition functions given by Eq.~\eqref{eq:partition_function_m} cannot be verified as correct. Fortunately, all partition functions can be recovered from the lattice model construction in Section \ref{sec:lattice} . 
%The partition functions of $M_{x,y}$ are automatically recovered.

\subsection{Some examples}
In this subsection, we list some examples of $\bZ_N$.
Recall that the condensable algebra in $\fZ_1(\mathrm{PF}_N)$ is $B:=\bigoplus_{u=0}^{[N/2]}\fM_{2u,0}\boxtimes \overline{\fM_{2u,0}}$.
\begin{notation}
    For the convention of the following discussion, we use $\{\ell;m\}$ denote the simple objects $\fM_{\ell,m}$ in $\mathrm{PF}_N$.
    And use $\{\ell;m\mid \ell';m'\}$ denote the simple objects $\fM_{\ell,m}\boxtimes \overline{\fM_{\ell',m'}}$ in $\fZ_1(\mathrm{PF}_N)$.
\end{notation}

\subsubsection{Example \texorpdfstring{$\mathbb{Z}_2$}{Z 2}}
Let $N=2$, as Example \ref{eg:double_ising} shows, there are $9$ simple objects in double Ising TO.
The condensable algebra is $B=\{0;0\mid 0;0\}\oplus \{2;0\mid 2;0\}$.
There are four local $B$-modules:
\begin{align*}
    \mathbb{1}:=(\fM_{0,0}\boxtimes \overline{\fM_{0,0}})\otimes B,\,\,  \mathbf{e}:=\fM_{1,1}\boxtimes \overline{\fM_{1,1}},\,\, \mathbf{m}:=(\fM_{1,1}\boxtimes \overline{\fM_{1,1}})^{tw},\,\, \mathbf{f}:=(\fM_{2,0}\boxtimes \overline{\fM_{0,0}})\otimes B.
\end{align*}
The corresponding partition function is given by 
\begin{align*}
    &\mathsf{Z}(\one)=\chi_{\{0;0\}}\overline{\chi}_{\{0;0\}}+\chi_{\{2;0\}}\overline{\chi}_{\{2;0\}},\\
    &\mathsf{Z}(\mathbf{e})=\chi_{\{1;1\}}\overline{\chi}_{\{1;1\}},\\
    &\mathsf{Z}(\mathbf{m})=\chi_{\{1;1\}}\overline{\chi}_{\{1;1\}},\\
    &\mathsf{Z}(\mathbf{f})=\chi_{\{2;0\}}\overline{\chi}_{\{0;0\}}+\chi_{\{0;0\}}\overline{\chi}_{\{2;0\}}.
\end{align*}
This recovers the result in \cite{CJKYZ2020}.

\subsubsection{Example \texorpdfstring{$\mathbb{Z}_3$}{Z 3}}
Let $N=3$, for simple objects $\{\ell;m\mid \ell';m'\}$, the index $\ell,\ell'$ has the range $\{0,1,2,3\}$, $m,m'$ has the range $\{0,1,2,3,4,5\}$.
There are $6$ inequivalent simple objects $\mathrm{Irr}(\mathrm{PF}_3):=\{\{0;0\},\{0;2\},\{0;4\},\{2;0\},\{2;2\},\{2;4\}\}$ in $\mathrm{PF}_3$.
Thus there are $36$ inequivalent simple objects in double parafermion $\fZ_1(\mathrm{PF}_3)$ given by $\mathrm{Irr}(\mathrm{PF}_3) \times \mathrm{Irr}(\mathrm{PF}_3)$.

The index $t$ of direct summand of the condensable algebra $B$ has range $\{0,1\}$, then the condensable algebra in $\fZ_1(\mathrm{PF}_3)$ is $B=\{0;0\mid 0;0\}\oplus \{2;0\mid 2;0\}$.
Recall that in odd case, the set of free $B$-modules $\{\cX_{a,b}\mid 0\leq a,b\leq 2\}$ gives all simple local $B$-modules.
Under the equivalence (see Appendix.~\ref{equivalence}) between two modular tensor categories $\fZ_1(\mathrm{PF}_{2k+1})_B^{loc}$ and $\fZ_1(\rep(\bZ_{2k+1}))$, we have $\alpha=a+b \mod 3$ and $\beta=a-b \mod 3$.
Then we can write all the simple local $B$-modules as follows, 
\begin{align*}
    \mathbb{1}:=\{0;0\mid 0;0\}\otimes B&=\{0;0\mid 0;0\}+\{2;0\mid 2;0\},\\
    \mathbf{em}^2:=\{0;0\mid 0;2\}\otimes B&=\{0;0\mid 0;2\}+\{2;0\mid 2;2\},\\
    \mathbf{e}^2\mathbf{m}:=\{0;0\mid 0;4\}\otimes B&=\{0;0\mid 0;4\}+\{2;0\mid 2;4\},\\
    \mathbf{em}:=\{0;2\mid 0;0\}\otimes B&=\{0;2\mid 0;0\}+\{2;2\mid 2;0\},\\
    \mathbf{e}^2:=\{0;2\mid 0;2\}\otimes B&=\{0;2\mid 0;2\}+\{2;2\mid 2;2\},\\
    \mathbf{m}^2:=\{0;2\mid 0;4\}\otimes B&=\{0;2\mid 0;4\}+\{2;2\mid 2;4\},\\
    \mathbf{e}^2 \mathbf{m}^2:=\{0;4\mid 0;0\}\otimes B&=\{0;4\mid 0;0\}+\{2;4\mid 2;0\},\\
    \mathbf{m}:=\{0;4\mid 0;2\}\otimes B&=\{0;4\mid 0;2\}+\{2;4\mid 2;2\},\\
    \mathbf{e}:=\{0;4\mid 0;4\}\otimes B&=\{0;4\mid 0;4\}+\{2;4\mid 2;4\}.\\
\end{align*}
This gives the partition function of simple objects in $\fZ_1(\rep(\mathbb{Z}_3))$.
\begin{align*}
    &\mathsf{Z}(\one)=\chi_{\{0;0\}}\overline{\chi}_{\{0;0\}}+\chi_{\{2;0\}}\overline{\chi}_{\{2;0\}},\\
    &\mathsf{Z}(\mathbf{e})=\chi_{\{0;4\}}\overline{\chi}_{\{0;4\}}+\chi_{\{2;4\}}\overline{\chi}_{\{2;4\}},\\
    &\mathsf{Z}(\mathbf{e}^2)=\chi_{\{0;2\}}\overline{\chi}_{\{0;2\}}+\chi_{\{2;2\}}\overline{\chi}_{\{2;2\}},\\
    &\mathsf{Z}(\mathbf{m})=\chi_{\{0;4\}}\overline{\chi}_{\{0;2\}}+\chi_{\{2;4\}}\overline{\chi}_{\{2;2\}},\\
    &\mathsf{Z}(\mathbf{em})=\chi_{\{0;2\}}\overline{\chi}_{\{0;0\}}+\chi_{\{2;2\}}\overline{\chi}_{\{2;0\}},\\
    &\mathsf{Z}(\mathbf{e}^2\mathbf{m})=\chi_{\{0;0\}}\overline{\chi}_{\{0;4\}}+\chi_{\{2;0\}}\overline{\chi}_{\{2;4\}},\\
    &\mathsf{Z}(\mathbf{m}^2)=\chi_{\{0;2\}}\overline{\chi}_{\{0;4\}}+\chi_{\{2;2\}}\overline{\chi}_{\{2;4\}},\\
    &\mathsf{Z}(\mathbf{em}^2)=\chi_{\{0;0\}}\overline{\chi}_{\{0;2\}}+\chi_{\{2;0\}}\overline{\chi}_{\{2;2\}},\\
    &\mathsf{Z}(\mathbf{e}^2\mathbf{m}^2)=\chi_{\{0;4\}}\overline{\chi}_{\{0;0\}}+\chi_{\{2;4\}}\overline{\chi}_{\{2;0\}}.
\end{align*}

\subsubsection{Example \texorpdfstring{$\mathbb{Z}_4$}{Z 4}}
When $N=4$, $\bZ_4$ is the second primary even case.
In this case the index $\ell,\ell'$ has the range $\{0,1,2,3,4\}$, $m,m'$ has the range $\{0,1,2,3,4,5,6,7\}$.
There are $10$ inequivalent simple objects $\mathrm{Irr}(\mathrm{PF}_4):=\{\{0;0\},\{2;0\},\{4;0\},\{0;2\},\{2;2\},\{4;2\},\{1;1\},\{3;1\},\{1;3\},\{3;3\}\}$ in $\mathrm{PF}_4$.
Thus there are $100$ inequivalent simple objects in double parafermion $\fZ_1(\mathrm{PF}_4)$ given by $\mathrm{Irr}(\mathrm{PF}_4) \times \mathrm{Irr}(\mathrm{PF}_4)$.

The index $t$ of direct summand of the condensable algebra $B$ has range $\{0,1,2\}$, then the condensable algebra in $\fZ_1(\mathrm{PF}_4)$ is $B=\{0;0\mid 0;0\}\oplus \{2;0\mid 2;0\}\oplus \{4;0\mid 4;0\}$.
In principle, there should be $16$ simple objects, or simple local $B$-modules.
The set of free $B$-modules $\{\cX_{a,b}\}$ gives 8 inequivalent simple local $B$-modules.
We also have the relation $\alpha=a+b \mod 4$ and $\beta=a-b \mod 4$.
\begin{align*}
    \mathbb{1}:=\{0;0\mid 0;0\}\otimes B&=\{2;0\mid 0;0\}\oplus \{2;0\mid 0;0\}\oplus \{4;0\mid 4;0\},\\
    \mathbf{em}:=\{0;2\mid 0;0\}\otimes B&=\{0;2\mid 0;0\}\oplus \{2;2\mid 2;0\}\oplus \{4;2\mid 4;0\},\\
    \mathbf{e}\mathbf{m}^3:=\{0;0\mid 0;2\}\otimes B&=\{0;0\mid 0;2\}\oplus \{2;0\mid 2;2\}\oplus \{4;0\mid 4;2\},\\
    \mathbf{e}^2:=\{0;2\mid 0;2\}\otimes B&=\{0;2\mid 0;2\}\oplus \{2;2\mid 2;2\}\oplus \{4;2\mid 4;2\},\\
    \mathbf{e}^3\mathbf{m}:=\{4;0\mid 0;2\}\otimes B&=\{0;0\mid 4;2\}\oplus \{2;0\mid 2;2\}\oplus \{4;0\mid 0;2\},\\
    \mathbf{e}^3\mathbf{m}^3:=\{0;2\mid 4;0\}\otimes B&=\{0;2\mid 4;0\}\oplus \{2;2\mid 2;0\}\oplus \{4;2\mid 0;0\},\\
    \mathbf{m}^2:=\{0;2\mid 4;2\}\otimes B&=\{0;2\mid 4;2\}\oplus \{2;2\mid 2;2\}\oplus \{4;2\mid 0;2\},\\
    \mathbf{e}^2\mathbf{m}^2:=\{0;0\mid 4;0\}\otimes B&=\{0;0\mid 4;0\}\oplus \{2;0\mid 2;0\}\oplus \{4;0\mid 0;0\}.
\end{align*}

The remaining $8$ simple local modules cannot be determined by the above algorithm. In fact, they are not free $B$-modules but should be the direct summand of some free $B$-modules. We can now write the partition functions as we did in the $\bZ_3$ case, which we omit here.

\section{A Lattice Model Realization} \label{sec:lattice}
In this section, we give a lattice model realization of the self-dual critical point of the pure boundary phase transition between $\bfe$-condensed and $\bfm$-condensed gapped boundaries of the 2d $\bZ_N$ topological order, and recover all the ingredients of the gappable non-chiral gapless boundary constructed in Section\,\ref{sec:boundary}. 
We choose the string-net model to be the lattice realization of the 2d $\bZ_N$ topological order, because we hope that the construction in this paper can be further extended to a more general case. %in the further. 
For the convenience of readers in the mathematical background, we review the string-net model in detail. 

\subsection{\texorpdfstring{$\bZ_N$}{Z N} Toric Code Model } \label{sec:wen-model}
It is well known that an arbitrary 2d non-chiral topological order can be realized by Levin-Wen model \cite{LW2005}, which is a machine that imports a fusion category $\CC$ and outputs the corresponding Drinfeld center $\fZ_1(\CC)$. 
When the input category $\CC=\rep(\bZ_N)$, we can obtain the $\bZ_N$ TO as the $\bZ_N$ generalized version of the toric code \cite{hung2012,schulz2012}.

The Levin-Wen model is usually defined on a 2d honeycomb lattice. 
The edges of the lattice are labeled by the oriented simple-type strings, which are the simple objects of the input fusion category $\CC$. 
And a vertex $v$ connecting three edges labelled by simple objects $i,j,k$ is associated to the hom space $V_{i j}^{k}=\operatorname{Hom}_{\CC}(i \otimes j, k)$ of $\CC$. Therefore, the Hilbert space of Levin-Wen model can be organized as $\CH=\bigotimes_{v} \CH_{v}$, where $\CH_{v}=\bigoplus_{i, j, k} V_{i j}^{k}$.
The Hamiltonian is supposed to be exactly solvable and is written as,

\begin{equation}
\begin{aligned}
H_{\text{Levin-Wen}}=-\sum_{v}A_v-\sum_{p}B_p
\end{aligned}
\end{equation}
where $A_v\Bigl\vert\bmm\scalebox{0.6}{\Ygraph{i}{j}{k}}\emm\Bigr\rangle =\delta_{ijk}\Bigl\vert\bmm\scalebox{0.6}{\Ygraph{i}{j}{k}}\emm\Bigr\rangle$ is the vertex operator for imposing the particular string-net branching rule, such as the $\bZ_N$ branching rules: $i+j+k=0$ (mod $N$).
And $B_p=\sum^{N}_{s=0}a_sB^s_p$ is the ``magnetic--flux" operator defined for each hexagonal plaquette of the string--net lattice, with $a_s=d_s/D$ and $D=\sum_{i=0}^N d^2_i$. 
In general, a plaquette operator is constituted by a product of $6j$ symbols $F$,

\be
\begin{aligned}
B^s_p &\BLvert\Psix[1]{l}{e}\Brangle
= \sum_{e'_{A}}
F^{l_1 e^*_1 e_6}_{s^* e'_6 e'^*_1}F^{l_2 e^*_2 e_1}_{s^* e'_1 e'^*_2}F^{l_3 e^*_3 e_2}_{s^* e'_2 e'^*_3} F^{l_4 e^*_4 e_3}_{s^* e'_3 e'^*_4}F^{l_5 e^*_5 e_4}_{s^* e'_4 e'^*_5}F^{l_6 e^*_6 e_5}_{s^* e'_5 e'^*_6} \BLvert\Psix[1]{l}{e'}\Brangle
\end{aligned},
\ee
which is obviously very complicated.

%%%%%%%%%%%%%%%%%%%%%%%%%%%%%%%%%%%%%%%%%ZN BULK
\begin{figure}
    \begin{eqnarray*}
    \begin{tikzpicture}[x=12mm,y=6.94392mm]
    %x=10mm,y=5.7866mm
    %11.5467
    %9.4
      % some styles
      \tikzset{
        box/.style={line width=0.1cm,
          gray!50,
          regular polygon,
          regular polygon sides=6,
          minimum size=13.856mm,
          rotate=90,
         draw
        }
      }
    
      \tikzset{
        pointb/.style={circle,inner sep=0pt,minimum size=0.7mm,color=black,fill=black,line width=0.7mm,draw
        }
      }
    
      \tikzset{
        pointw/.style={circle,inner sep=0pt,minimum size=0.7mm,color=white,fill=white,line width=0.7mm,draw
        }
      }

      \tikzset{->-/.style={decoration={markings,mark=at position #1 with {\arrow{stealth}}},postaction={decorate}}}

    \draw[help lines,fill=cyan!10,thick] (-1.2,-1.2)--(8.7,-1.2)--(8.7,8.7)--(-1.2,8.7)--cycle;
    \foreach \i in {0,...,8} 
        \foreach \j in {0,...,2} {
                \node[box] at (1*\i,3*\j) {}; 
                \node[box] at (\i-0.5,3*\j+1.5) {};}           
    \draw[cyan!50,line width=0.14cm] plot[smooth,tension=0.9] coordinates {(1,6) (0.55,4.5) (1,3.05) (2,2.95) (2.5,1.5)};
    \node[line width=0.1cm,cyan!70,regular polygon, draw, regular polygon sides = 6, minimum size=11.35mm,rotate=90] at (2.5,1.5) {};
    \node[line width=0.1cm,cyan!70,regular polygon, draw, regular polygon sides = 6, minimum size=11.35mm,rotate=90] at (1,6) {};
    \node[below, cyan, scale=1.3] at (2.5,1.9) {$m$}; 
    \node[above, cyan, scale=1.3] at (1,5.6) {$m$}; 
    
    \draw[purple!50,line width=0.14cm] plot[smooth,tension=0.9] coordinates {(6.5,5.5) (6,5) (6,4) (6.5,3.5) (6.5,2.5) (6,2) (6,1) (5.5,0.5)};
    \draw[purple!70,line width=0.1cm] (6,5)--(6.5,5.5)--(6.5,6.5) (6.5,5.5)--(7,5);
    \draw[purple!70,line width=0.1cm] (5,1)--(5.5,0.5)--(6,1) (5.5,0.5)--(5.5,-0.5);
    \node[below, purple, scale=1.5] at (6.5,5.5) {$e$}; 
    \node[above, purple, scale=1.5] at (5.5,0.5) {$e$}; 
    \node[pointw] at (4.5,0.5) {};

    \draw [->-=0.7,black!70,line width=0.7mm] (-0.5,-0.5) -- (0,-1) node [pos=0.2, below] {};
    \draw [->-=0.7,black!70,line width=0.7mm] (0.5,-0.5) -- (0.5,0.5) node [pos=0.2, below] {};
    \draw [->-=0.7,black!70,line width=0.7mm] (0,1) -- (-0.5,0.5) node [pos=0.2, below] {};

    \draw [->-=0.7,black!70,line width=0.7mm] (0.5,-0.5) -- (0,-1) node [pos=0.2, below] {};
    \draw [->-=0.7,black!70,line width=0.7mm] (-0.5,-0.5) -- (-0.5,0.5) node [pos=0.2, below] {};
    \draw [->-=0.7,black!70,line width=0.7mm] (0,1) -- (0.5,0.5) node [pos=0.2, below] {};

    \foreach \i in {0,...,9} 
        \foreach \j in {0,...,2} {
                  \node[pointw] at (1*\i-0.5,3*\j+0.5) {};
                  \node[pointb] at (1*\i-1,3*\j+1) {};}
    
    \foreach \i in {1,...,8} 
        \foreach \j in {0,...,3} {
                  \node[pointw] at (1*\i-1,3*\j-1) {};
                  \node[pointb] at (1*\i-0.5,3*\j-0.5) {};};
                  
    \foreach \j in {0,...,3}{
                  \node[pointw] at (8,3*\j-1) {};
                  \node[pointb] at (-0.5,3*\j-0.5) {};
    }
    \foreach \j in {1,...,3}{
                  \node[pointw] at (-1,3*\j-1) {};
    }
    \foreach \j in {0,...,2}{
                  \node[pointb] at (8.5,3*\j-0.5) {};
    }

    \end{tikzpicture}
    \end{eqnarray*}
    \caption{\label{bulk} $\mathbb{Z}_N$ toric code model on the honeycomb lattice. The arrow diagram at the bottom-left corner defines the orientation for all edges. For a specific plaquette, the edges with the clockwise arrow belongs to $p_+$; the opposite ones belongs to $p_-$.}
    \end{figure}

However, for the $\bZ_N$ topological order, there are $N$ different string types, labeled by $0,...,N-1$. 
In addition, the string is oriented. 
A natural way to fix the orientation is to use the sub-lattice of the honeycomb, $L_B$ and $L_W$, whose sites are colored black and white respectively as shown in Fig.~\ref{bulk}. 
We require that the arrow always points from a black vertex toward a white vertex to uniquely fix the orientation of each edge. 
After fully fixing the direction of each edge, the degrees of freedom of the original string can be viewed as the $\bZ_N$ spin degrees of freedom lying at the edge \cite{hung2012,schulz2012}. 
In the following, we use the $N$ orthonormal basis $\vert \mathsf{n}\rangle_{i}$ to label the associated spin degrees of freedom on the edge $i$, where $\mathsf{n}\in \mathbb{Z}_{N}$. 
Obviously, there is $a_s=1/N$ since the quantum dimension $d_i$ of the abelian topological orders is 1. 
And $F$ symbols satisfying the $\bZ_N$ branching rule: $s+e_i=e'_i$ (mod $N$) are 1, the others are 0. Thus the action of $B^s_p$ on the plaquette is simply shifting the string/spin type $e_i$ of every edge of the plaquette by $s$ (mod $N$ ).
This suggests that $B^s_p$ can be more naturally expressed as a product of spin shifting operators acting on the $\bZ_N$ spin of plaquette's edges 
\be
B^s_p 
= \prod_{i\in p_+}(X^{\dagger}_{i})^s\prod_{j\in p_-}(X_{j})^{s}.
\ee
Here we introduce the spin shifting operators $X_{i}$, which generalize the Pauli operator $\sigma^x$ to $N$-dimensional space.
For the edges in opposite orientation, the shifting actions need to be conjugated. 
For convenience, we also introduce operators $Z_{i}$ to measure the value of spin at the edge $i$, which generalize the Pauli operator $\sigma^z$ to $N$-dimensional space.
%For convenience, we also introduce the operator $Z_{i}$ to measure the value of spin at the edge $i$, which generalizes the Pauli operator $\sigma^z$ to {\color{blue}$N$-dimensional space}.
They are defined by,
$$
Z_{i}|\mathsf{n}\rangle_{i}=\omega^{\mathsf{n}}|\mathsf{n}\rangle_{i} \quad \text { and } \quad X_{i}|\mathsf{n}\rangle_{i}=|\mathsf{n}-1\rangle_{i}, \quad \text{for}~\omega=\mathrm{e}^{2\pi\mathrm{i}/N}.
$$
Instead of anti-commutation relations of Pauli operators, the operators $X_i$ and $Z_i$ on the same site satisfy the Weyl algebra's relations \cite{weyl1950},
\be
\begin{aligned}
Z_iX_i &=\omega X_iZ_i,\\
Z^N_i=X^N_i=1,\quad Z^{\dagger}_I &= Z^{N-1}_i,\quad X^{\dagger}_i=X^{N-1}_i.
\end{aligned}
\ee
while operators at different sites commute. 
Also, the operator $A_v$ can be expressed as a product of $Z_i$: $A_v=\frac{1}{N}\sum_{s=0}^{N-1}\prod_{i\in v}Z^s_{i}=\frac{1}{N}\sum_{s=0}^{N-1}A^s_v$, to make the groundstates satisfy branching rules.
Finally, we obtain the $\bZ_N$ Levin-Wen model written by spin operators, 
\be
    \begin{aligned}
    H^{\bZ_N}_{\text{bulk}}&=-\frac{1}{N}\sum_{v}\sum_{s=0}^{N-1}\prod_{i\in v}Z^s_{i}-\frac{1}{N}\sum_{p}\sum_{s=0}^{N-1}\prod_{i\in p_+}(X^{\dagger}_{i})^s\prod_{j\in p_-}(X_{j})^{s}\\
    &=-\frac{1}{N}\sum_{v}\sum_{s=0}^{N-1}A^s_v-\frac{1}{N}\sum_{p}\sum_{s=0}^{N-1}B^s_p.
    \end{aligned}
\ee

It is easy to check that this model is indeed exactly solvable by using the relations between $Z_i$ and $X_i$.
Since all the terms in the Hamiltonian $H^{\bZ_N}_{\text{bulk}}$ mutually commute, the groundstate subspace can be stabilized by simultaneously satisfying: $A^s_v |\text{GS}\rangle=1$ \& $B^s_p|\text{GS} \rangle =1$, $\forall$ $v$, $p$ and $s$. When some $A^s_v$ or $B^s_p$'s eigenvalues violate the above condition, these eigenstates with higher energy at local vertices or plaquettes can be regarded as some particle-like topological excitations. 
According to the different eigenvalues \{$\mathrm{e}^{\frac{2\pi\mathrm{i}}{N}\alpha}$, $\mathrm{e}^{\frac{2\pi\mathrm{i}}{N}\beta}$\} of $A^1_{v}$ and $B^1_{p}$ carried by these particles, they are usually known as $\bZ_N$ charge $\mathbf{e}^{\alpha}$ and $\bZ_N$ flux $\mathbf{m}^\beta$, where $\alpha$ and $\beta$ can be chosen from $0,\dots,(N-1)$. 
Obviously, these charges and fluxes naturally correspond to the objects $\cO_{\alpha,0}$ and $\cO_{0,\beta}$ in UMTC $\fZ_1(\rep(\bZ_N))$. 
They are not created by local perturbations but can be generated in pairs by string operators. 
String operators $S^{\bfe}$ and $S^{\bfm}$ can be defined by collecting the corresponding spin operators along the string, %except be careful 
but note that the positive and negative orientation of the string $S^{\bfe}$ correspond to $X^{\dagger}$ and $X$, respectively.

\subsection{Boundary Lattice Model} \label{sec:duality}
In this subsection, we consider the case that the model is equipped with a boundary. 
It is well known that the $\bZ_N$ topological order can admit many gapped %or gappable gapless
boundaries. 
The lattice realizations of many different kinds of gapped boundaries have been given in \cite{beigi2011,cong2017}. 
Here, we try to construct a general boundary Hamiltonian to recover many of the gapped boundaries of $H^{\bZ_N}_{\text{Bulk}}$ by tuning the parameters. 
In addition, it is predictable that the phase transitions between these different gapped boundaries naturally correspond to the gappable gapless boundaries. 
That's our final goal.

\begin{figure}[h]
    \begin{eqnarray*}
    \begin{array}{c}
        \begin{tikzpicture}[x=9mm,y=5.20785mm]
            %x=10mm,y=5.7866mm
            %11.5467
            %9.4
              % some styles
              \tikzset{
                box/.style={line width=0.08cm,
                  gray!50,
                  regular polygon,
                  regular polygon sides=6,
                  minimum size=10.3995mm,
                  rotate=90,
                 draw
                }
              }
            
              \tikzset{
                pointb/.style={circle,inner sep=0pt,minimum size=0.5mm,color=black,fill=black,line width=0.5mm,draw
                }
              }
            
              \tikzset{
                pointw/.style={circle,inner sep=0pt,minimum size=0.5mm,color=white,fill=white,line width=0.5mm,draw
                }
              }
          
              \tikzset{->-/.style={decoration={markings,mark=at position #1 with {\arrow{stealth}}},postaction={decorate}}}
          
           % \fill[] (-0.53,-1.2)--(5.53,-1.2)--(5.53,8.7)--(-0.53,8.7)--cycle;
          \fill[top color=white,bottom color=cyan!20,draw=white] (-0.53,-1.2) rectangle   (5.53,7.7);
          \draw[line width=0.05cm,gray!50,dashed] (-0.53,-1.2)--(-0.53,7.7) (5.53,-1.2)--(5.53,7.7);

          \foreach \i in {0,...,5} 
          \foreach \j in {0,...,1} {
                        \node[box] at (1*\i,3*\j) {}; 
          }         
          \foreach \i in {1,...,5} 
          \foreach \j in {0} {
                        \node[box] at (\i-0.5,3*\j+1.5) {};
          }

          \foreach \i in {0,1,...,5} {
            \draw[black!80,line width=0.08cm] (-0.5+\i,-0.5)--(0+\i,-1)--(0.5+\i,-0.5);
            };

           % \draw[cyan!50,line width=0.14cm] plot[smooth,tension=0.9] coordinates {(1,6) (0.55,4.5) (1,3.05) (2,2.95) (2.5,1.5)};
           % \node[line width=0.1cm,blue!70,regular polygon, draw, regular polygon sides = 6, minimum size=11.35mm,rotate=90] at (2.5,1.5) {};
           % \node[line width=0.1cm,blue!70,regular polygon, draw, regular polygon sides = 6, minimum size=11.35mm,rotate=90] at (1,6) {};
           %\node[below, blue, scale=1.3] at (2.5,1.9) {$m$}; 
           % \node[above, blue, scale=1.3] at (1,5.6) {$m$}; 
           \node[scale=0.8] at (-0.25,-0.75) [below] {1};
           \node[scale=0.8]  at (0.25,-0.75) [below] {2};
           \node[scale=0.8]  at (0.75,-0.75) [below] {3};
           \node[scale=0.8]  at (1.25,-0.75) [below] {4};
           \node[scale=0.8]  at (1.75,-0.75) [below] {5};
           \node[scale=0.8]  at (2.5,-1) [below]{$\cdots\cdots$};
           \node[scale=0.8]  at (5.35,-0.8) [below] {2n};

           \foreach \i in {1,...,6} 
           \foreach \j in {0} {
                     %\node[pointw] at (1*\i-1,3*\j-1) {};
                     \node[pointb] at (1*\i-0.5,3*\j-0.5) {};};
           \node[pointb] at (-0.5,-0.5) {};
            
            \node at (2.5,7) {Cylinder};
            %\node at (2.5,5) {$H_{\text{bdy}}=0$};

            \end{tikzpicture}
    \end{array}
    \end{eqnarray*}
\caption{\label{cylingder} $\bZ_N$ toric code with a boundary on a cylinder. Degrees of freedom at the boundary are in black and labeled by numbers. The $(2n+1)-th$ edge is identified with the $1st$ edge at the boundary. }
\end{figure}

Now let's begin with considering the model with a boundary $H^{\bZ_N}_{\text{bulk}}$+ $H_{\text{bdy}}$ on a cylinder, as shown in Fig.~\ref{cylingder}. 
We only deal with the boundary at the bottom; the upper part is the bulk of infinite length.
Moreover, the spins of edges in black are artificially and rigidly defined as boundary degrees of freedom.
Hence, our boundary Hamiltonian $H_{\text{bdy}}$ is assumed to only contain the local operators acting on these black edges. 
Finally, these operators must commute with $A^s_v$ and $B^s_p$ in order to preserve the exact solvability of $H^{\bZ_N}_{\text{bulk}}$.
%And to avoid breaking the exact solvability of $H^{\bZ_N}_{\text{bulk}}$, these operators are required to commute with $A^s_v$ and $B^s_p$. 
If the boundary Hamiltonian $H_{\text{bdy}}=0$, the whole system will have a very large degeneracy that depends on the number $n$ of edges at the boundary. 
In principle, we want to remove most of the degeneracy from the boundary by including the fewest local operators into $H_{\text{bdy}}$. 
The remaining difficulty is how to find and represent all of the local operators that meet the aforementioned criteria.
%The remaining difficulty is how to find  all of the local operators that satisfy the above conditions and represent them.

\begin{figure}
    \begin{eqnarray*}
\begin{array}{c}
    \begin{tikzpicture}[x=9mm,y=5.20785mm]
        %x=10mm,y=5.7866mm
        %11.5467
        %9.4
          % some styles

          \tikzset{
            pointb/.style={circle,inner sep=0pt,minimum size=0.5mm,color=black,fill=black,line width=0.5mm,draw
            }
          }
        
          \tikzset{
            pointw/.style={circle,inner sep=0pt,minimum size=0.5mm,color=white,fill=white,line width=0.5mm,draw
            }
          }
      
          \tikzset{->-/.style={decoration={markings,mark=at position #1 with {\arrow{stealth}}},postaction={decorate}}}
      
       % \fill[] (-0.53,-1.2)--(5.53,-1.2)--(5.53,8.7)--(-0.53,8.7)--cycle;
      \fill[top color=white,bottom color=cyan!20,draw=white] (-0.53,-1.2) rectangle   (5.53,7.7);
      \draw[line width=0.05cm,gray!50,dashed] (-0.53,-1.2)--(-0.53,7.7) (5.53,-1.2)--(5.53,7.7);

        \draw[black,line width=0.05cm,dashed] (-0.5,-0.75)--(5.5,-0.75);

       % \draw[cyan!50,line width=0.14cm] plot[smooth,tension=0.9] coordinates {(1,6) (0.55,4.5) (1,3.05) (2,2.95) (2.5,1.5)};
       % \node[line width=0.1cm,blue!70,regular polygon, draw, regular polygon sides = 6, minimum size=11.35mm,rotate=90] at (2.5,1.5) {};
       % \node[line width=0.1cm,blue!70,regular polygon, draw, regular polygon sides = 6, minimum size=11.35mm,rotate=90] at (1,6) {};
       %\node[below, blue, scale=1.3] at (2.5,1.9) {$m$}; 
       % \node[above, blue, scale=1.3] at (1,5.6) {$m$}; 

        \node at (2.5,7) {Cylinder};
        %\node at (2.5,5) {$H_{\text{bdy}}=0$};
        \draw [line width=0.5mm,color=blue,dashed](4.15,-0.72) arc (0:180: 1.5cm and 1.5cm);
        \draw[blue,line width=0.5mm] plot[smooth,tension=0.9] coordinates {(0.75,-1) (1,-0.5) (1.5,-1) (2,-0.5)(2.5,-1)(3,-0.5)(3.5,-1)(4,-0.5)(4.25,-1)};
        %\draw[blue,line width=0.5mm] (0.75,-0.75)--(4.25,-0.75);
        \node at (0.55,-1) [below] {$a$};
        \node at (4.55,-1) [below] {$b$}; 
        \node at (2.5,3) {$ \Downarrow$ push $S_{a,b}$} ;
        \end{tikzpicture}
\end{array}
\end{eqnarray*}
\caption{\label{bdy string} Local boundary operators can be represented by the string operators $S_{\text{bdy}}(a,b)$ labeled by blue string, where $S_{\text{bdy}}(a,b)$ can be obtained by pushing $S(a,b)$ labeled by dashed blue string to the boundary.}
\end{figure}

Usually, any operator that commutes with $\{A^s_v,B^s_p\}$ consists of closed-string operators in the bulk. 
But in the case with a boundary, open strings $S(a,b)$ with two endpoints $a$,$b$ outside the bulk are also available. 
Therefore, all we need to do is push all allowed strings onto the boundary so that we can get all local operators on the boundary that commute with $\{A^s_v,B^s_p\}$, as shown in Fig.~\ref{bdy string}. 
In the current model, only pushing open strings $S(a,b)$ can yield non-trivial $S_{\text{bdy}}(a,b)$. 
As the boundary is merely 1d, closed-string operators on the boundary must be identity, except for two non-contractible loop operators $L^{\bfe}$ and $L^{\bfm}$, as shown in Fig.~\ref{Loop}. However, the boundary operators generated by $L^{\bfe}$ and $L^{\bfm}$ are not local. There are two classes of string operators $S^{\bfe}$ and $S^{\bfm}$ in the $\bZ_N$ TO. When we push them onto the boundary, there are only two types of allowed configurations to ensure that the endpoints are outside the bulk and only act on the boundary spin, as shown in Fig.~\ref{allowed string}. 
\begin{figure}
  \begin{eqnarray*}
  \begin{array}{c}
      \begin{tikzpicture}[x=7.2mm,y=4.16628mm]
          %x=10mm,y=5.7866mm
          %11.5467
          %9.4
            % some styles
            \tikzset{
              box/.style={line width=0.08cm,
                gray!50,
                regular polygon,
                regular polygon sides=6,
                minimum size=8.3196mm,
                rotate=90,
               draw
              }
            }
          
            \tikzset{
              pointb/.style={circle,inner sep=0pt,minimum size=0.5mm,color=black,fill=black,line width=0.5mm,draw
              }
            }
          
            \tikzset{
              pointw/.style={circle,inner sep=0pt,minimum size=0.5mm,color=white,fill=white,line width=0.5mm,draw
              }
            }
        
            \tikzset{->-/.style={decoration={markings,mark=at position #1 with {\arrow{stealth}}},postaction={decorate}}}
        
         % \fill[] (-0.53,-1.2)--(5.53,-1.2)--(5.53,8.7)--(-0.53,8.7)--cycle;
        \fill[top color=white,bottom color=cyan!20,draw=white] (-0.53,-1.2) rectangle   (5.53,5.7);
        \draw[line width=0.05cm,gray!50,dashed] (-0.53,-1.2)--(-0.53,5.7) (5.53,-1.2)--(5.53,5.7);

        \foreach \i in {0,...,5} 
        \foreach \j in {0,...,1} {
                      \node[box] at (1*\i,3*\j) {}; 
        }         
        \foreach \i in {1,...,5} 
        \foreach \j in {0} {
                      \node[box] at (\i-0.5,3*\j+1.5) {};
        }

         % \draw[cyan!50,line width=0.14cm] plot[smooth,tension=0.9] coordinates {(1,6) (0.55,4.5) (1,3.05) (2,2.95) (2.5,1.5)};
         % \node[line width=0.1cm,blue!70,regular polygon, draw, regular polygon sides = 6, minimum size=11.35mm,rotate=90] at (2.5,1.5) {};
         % \node[line width=0.1cm,blue!70,regular polygon, draw, regular polygon sides = 6, minimum size=11.35mm,rotate=90] at (1,6) {};
         %\node[below, blue, scale=1.3] at (2.5,1.9) {$m$}; 
         % \node[above, blue, scale=1.3] at (1,5.6) {$m$}; 

         \foreach \i in {1,...,6} 
         \foreach \j in {0} {
                   \node[pointw] at (1*\i-1,3*\j-1) {};
                   \node[pointb] at (1*\i-0.5,3*\j-0.5) {};};
          \node[pointb] at (-0.5,-0.5) {};
          
          \draw[cyan!50,line width=0.09cm] plot[smooth,tension=0.9] coordinates {(0.5,-1) (1,-0.5) (1.5,-1) (2,-0.5)(2.5,-1)(3,-0.5)(3.5,-1)(4,-0.5)(4.5,-1)};
          \node at (2.5,5) {\Checkmark};

          \end{tikzpicture}
  \end{array}
  \begin{array}{c}
      \begin{tikzpicture}[x=7.2mm,y=4.16628mm]
          %x=10mm,y=5.7866mm
          %11.5467
          %9.4
            % some styles
            \tikzset{
              box/.style={line width=0.08cm,
                gray!50,
                regular polygon,
                regular polygon sides=6,
                minimum size=8.3196mm,
                rotate=90,
               draw
              }
            }
          
            \tikzset{
              pointb/.style={circle,inner sep=0pt,minimum size=0.5mm,color=black,fill=black,line width=0.5mm,draw
              }
            }
          
            \tikzset{
              pointw/.style={circle,inner sep=0pt,minimum size=0.5mm,color=white,fill=white,line width=0.5mm,draw
              }
            }
        
            \tikzset{->-/.style={decoration={markings,mark=at position #1 with {\arrow{stealth}}},postaction={decorate}}}
        
         % \fill[] (-0.53,-1.2)--(5.53,-1.2)--(5.53,8.7)--(-0.53,8.7)--cycle;
        \fill[top color=white,bottom color=cyan!20,draw=white] (-0.53,-1.2) rectangle   (5.53,5.7);
        \draw[line width=0.05cm,gray!50,dashed] (-0.53,-1.2)--(-0.53,5.7) (5.53,-1.2)--(5.53,5.7);

        \foreach \i in {0,...,5} 
        \foreach \j in {0,...,1} {
                      \node[box] at (1*\i,3*\j) {}; 
        }         
        \foreach \i in {1,...,5} 
        \foreach \j in {0} {
                      \node[box] at (\i-0.5,3*\j+1.5) {};
        }
        \node at (2.5,5) {\Checkmark};

         % \draw[cyan!50,line width=0.14cm] plot[smooth,tension=0.9] coordinates {(1,6) (0.55,4.5) (1,3.05) (2,2.95) (2.5,1.5)};
         % \node[line width=0.1cm,blue!70,regular polygon, draw, regular polygon sides = 6, minimum size=11.35mm,rotate=90] at (2.5,1.5) {};
         % \node[line width=0.1cm,blue!70,regular polygon, draw, regular polygon sides = 6, minimum size=11.35mm,rotate=90] at (1,6) {};
         %\node[below, blue, scale=1.3] at (2.5,1.9) {$m$}; 
         % \node[above, blue, scale=1.3] at (1,5.6) {$m$}; 

          \draw[purple!50,line width=0.09cm] plot[smooth,tension=0.6] coordinates {(1,-1)(1.5,-0.4) (2,-1.1) (2.5,-0.4) (3,-1.1)(3.5,-0.4)(4,-1)};

          \foreach \i in {0,1,...,6} 
         \foreach \j in {0} {
                   \node[pointw] at (1*\i-1,3*\j-1) {};
                   \node[pointb] at (1*\i-0.5,3*\j-0.5) {};};
      \end{tikzpicture}
  \end{array}
  \end{eqnarray*}

  \begin{eqnarray*}
      \begin{array}{c}
          \begin{tikzpicture}[x=7.2mm,y=4.16628mm]
              %x=10mm,y=5.7866mm
              %11.5467
              %9.4
                % some styles
                \tikzset{
                  box/.style={line width=0.08cm,
                    gray!50,
                    regular polygon,
                    regular polygon sides=6,
                    minimum size=8.3196mm,
                    rotate=90,
                   draw
                  }
                }
              
                \tikzset{
                  pointb/.style={circle,inner sep=0pt,minimum size=0.5mm,color=black,fill=black,line width=0.5mm,draw
                  }
                }
              
                \tikzset{
                  pointw/.style={circle,inner sep=0pt,minimum size=0.5mm,color=white,fill=white,line width=0.5mm,draw
                  }
                }
            
                \tikzset{->-/.style={decoration={markings,mark=at position #1 with {\arrow{stealth}}},postaction={decorate}}}
            
             % \fill[] (-0.53,-1.2)--(5.53,-1.2)--(5.53,8.7)--(-0.53,8.7)--cycle;
            \fill[top color=white,bottom color=cyan!20,draw=white] (-0.53,-1.2) rectangle   (5.53,5.7);
            \draw[line width=0.05cm,gray!50,dashed] (-0.53,-1.2)--(-0.53,5.7) (5.53,-1.2)--(5.53,5.7);

            \foreach \i in {0,...,5} 
            \foreach \j in {0,...,1} {
                          \node[box] at (1*\i,3*\j) {}; 
            }         
            \foreach \i in {1,...,5} 
            \foreach \j in {0} {
                          \node[box] at (\i-0.5,3*\j+1.5) {};
            }

             % \draw[cyan!50,line width=0.14cm] plot[smooth,tension=0.9] coordinates {(1,6) (0.55,4.5) (1,3.05) (2,2.95) (2.5,1.5)};
             % \node[line width=0.1cm,blue!70,regular polygon, draw, regular polygon sides = 6, minimum size=11.35mm,rotate=90] at (2.5,1.5) {};
             % \node[line width=0.1cm,blue!70,regular polygon, draw, regular polygon sides = 6, minimum size=11.35mm,rotate=90] at (1,6) {};
             %\node[below, blue, scale=1.3] at (2.5,1.9) {$m$}; 
             % \node[above, blue, scale=1.3] at (1,5.6) {$m$}; 

             \foreach \i in {1,...,6} 
             \foreach \j in {0} {
                       \node[pointw] at (1*\i-1,3*\j-1) {};
                       \node[pointb] at (1*\i-0.5,3*\j-0.5) {};};
              \node[pointb] at (-0.5,-0.5) {};
          
              \node[line width=0.05cm,cyan!70,regular polygon, draw, regular polygon sides = 6, minimum size=6.81mm,rotate=90] at (1,0) {};
              \node [cyan] at (1,0) {$\mathbf{m}$};
              \node[line width=0.05cm,cyan!70,regular polygon, draw, regular polygon sides = 6, minimum size=6.81mm,rotate=90] at (3,0) {};
              \node [cyan] at (3,0) {$\mathbf{m}$};
              \draw[cyan!50,line width=0.09cm] plot[smooth,tension=0.6] coordinates {(1.2,-0.7)(1.5,-1) (2,-0.5) (2.5,-1) (2.8,-0.7)};
              \draw[cyan!50,line width=0.09cm] plot[smooth,tension=0.9] coordinates {(4.4,-1) (4.2,-0.5)(4.5,0)(4.8,-0.5)(4.6,-1)};
              \node at (2.5,5) {\Cross};

              \end{tikzpicture}
      \end{array}
      \begin{array}{c}
          \begin{tikzpicture}[x=7.2mm,y=4.16628mm]
              %x=10mm,y=5.7866mm
              %11.5467
              %9.4
                % some styles
          \tikzset{
              box/.style={line width=0.08cm,
                  gray!50,
                  regular polygon,
                  regular polygon sides=6,
                  minimum size=8.3196mm,
                  rotate=90,
                  draw
                  }
          }
              
          \tikzset{
                  pointb/.style={circle,inner sep=0pt,minimum size=0.7mm,color=black,fill=black,line width=0.7mm,draw
                  }
          }
              
          \tikzset{
                  pointw/.style={circle,inner sep=0pt,minimum size=0.7mm,color=white,fill=white,line width=0.7mm,draw
                  }
          }
            
                \tikzset{->-/.style={decoration={markings,mark=at position #1 with {\arrow{stealth}}},postaction={decorate}}}
            
             % \fill[] (-0.53,-1.2)--(5.53,-1.2)--(5.53,8.7)--(-0.53,8.7)--cycle;
            \fill[top color=white,bottom color=cyan!20,draw=white] (-0.53,-1.2) rectangle   (5.53,5.7);
            \draw[line width=0.05cm,gray!50,dashed] (-0.53,-1.2)--(-0.53,5.7) (5.53,-1.2)--(5.53,5.7);

            \foreach \i in {0,...,5} 
            \foreach \j in {0,...,1} {
                          \node[box] at (1*\i,3*\j) {}; 
            }         
            \foreach \i in {1,...,5} 
            \foreach \j in {0} {
                          \node[box] at (\i-0.5,3*\j+1.5) {};
            }

             % \draw[cyan!50,line width=0.14cm] plot[smooth,tension=0.9] coordinates {(1,6) (0.55,4.5) (1,3.05) (2,2.95) (2.5,1.5)};
             % \node[line width=0.1cm,blue!70,regular polygon, draw, regular polygon sides = 6, minimum size=11.35mm,rotate=90] at (2.5,1.5) {};
             % \node[line width=0.1cm,blue!70,regular polygon, draw, regular polygon sides = 6, minimum size=11.35mm,rotate=90] at (1,6) {};
             %\node[below, blue, scale=1.3] at (2.5,1.9) {$m$}; 
             % \node[above, blue, scale=1.3] at (1,5.6) {$m$}; 

          \draw[purple!70,line width=0.1cm] (0.5,-0.5)--(1,-1) (0.5,-0.5)--(0,-1) (0.5,0.5)--(0.5,-0.5);
          \node[left, purple, scale=1] at (0.5,-0.3) {$\mathbf{e}$};
          \draw[purple!70,line width=0.1cm] (4.5,-0.5)--(5,-1) (4.5,-0.5)--(4,-1) (4.5,0.5)--(4.5,-0.5) ;
          \node[right, purple, scale=1] at (4.5,-0.3) {$\mathbf{e}$};
          \draw[purple!50,line width=0.09cm] plot[smooth,tension=0.6] coordinates {(0.5,-0.5)(1,-1.1)(1.5,-0.4) (2,-1.1) (2.5,-0.4) (3,-1.1)(3.5,-0.4)(4,-1.1)(4.5,-0.5)};
              
          \node at (2.5,5) {\Cross};

          \foreach \i in {0,1,...,6} 
          \foreach \j in {0} {
                       \node[pointw] at (1*\i-1,3*\j-1) {};
                       \node[pointb] at (1*\i-0.5,3*\j-0.5) {};
                       %\node[pointw] at (1*\i-0.5,3*\j+0.5) {};
                       };  
      \end{tikzpicture}
      \end{array}
  \end{eqnarray*}
  \caption{\label{allowed string} Configurations of boundary strings $S_{\text{bdy}}^{\bfe}$ and $S_{\text{bdy}}^{\bfm}$ that are in purple and cyan, respectively. Upper part: two types of allowed configurations, marked with green check mark. Bottom part: invalid configurations are marked by red cross, because the end points are inside the bulk or they have action on the internal degrees of freedom.}
\end{figure}
All valid configurations can be achieved by the product of a number of the two shortest boundary strings \Se{} and \Sm{}. 
So far, we have obtained all the local operators on the boundary that satisfy the criteria.
And they can be represented by \Se{} and \Sm{}. 
In other words, any legal $H_{\text{bdy}}$ can be viewed as a combination and product of \Se{} and \Sm{}. 
A typical example is $H^{\bfm}_{\text{bdy}}=-\sum_{j\in \text{odd link}}(Z_jZ_{j+1}+Z^\dagger_jZ^{\dagger}_{j+1})$ that realizes the $\bfm$-condensed boundary \cite{beigi2011,KK2012}, as shown in Fig.~\ref{gapped bdy}. $H^{\bfm}_{\text{bdy}}$ can be viewed as the sum of \Sm{} that describes the fluctuation of $\bfm$ along the boundary. 
Indeed, these terms give rise to the $\bfm$ condensation. 
As we expect, $N^n$-$fold$ degeneracy from the boundary is lifted by local operators \Sm{} so that only the topological $N$-$fold$ degeneracy from the semi-infinite open string $S^{\bfm}_{\text{INF}}$ remains. 
We will discuss this later.
\begin{figure}
  \begin{eqnarray*}
\begin{array}{c}
  \begin{tikzpicture}[x=9mm,y=5.20785mm]
      %x=10mm,y=5.7866mm
      %11.5467
      %9.4
        % some styles
        \tikzset{
          box/.style={line width=0.08cm,
            gray!50,
            regular polygon,
            regular polygon sides=6,
            minimum size=10.3995mm,
            rotate=90,
           draw
          }
        }
      
        \tikzset{
          pointb/.style={circle,inner sep=0pt,minimum size=0.5mm,color=black,fill=black,line width=0.5mm,draw
          }
        }
      
        \tikzset{
          pointw/.style={circle,inner sep=0pt,minimum size=0.5mm,color=white,fill=white,line width=0.5mm,draw
          }
        }
    
        \tikzset{->-/.style={decoration={markings,mark=at position #1 with {\arrow{stealth}}},postaction={decorate}}}
    
     % \fill[] (-0.53,-1.2)--(5.53,-1.2)--(5.53,8.7)--(-0.53,8.7)--cycle;
    \fill[top color=white,bottom color=cyan!20,draw=white] (-0.53,-1.2) rectangle   (5.53,7.7);
    \draw[line width=0.05cm,gray!50,dashed] (-0.53,-1.2)--(-0.53,7.7) (5.53,-1.2)--(5.53,7.7);

    \foreach \i in {0,...,5} 
    \foreach \j in {0,...,1} {
                  \node[box] at (1*\i,3*\j) {}; 
    }         
    \foreach \i in {1,...,5} 
    \foreach \j in {0} {
                  \node[box] at (\i-0.5,3*\j+1.5) {};
    }

    %\node at (2.5,5) {$H^{\mathbf{m}}_{\text{bdy}}=$};
    \draw[blue,line width=0.08cm] (-0.2,6.4)--(0,6.2)--(0.2,6.4) node[right] {:~$ZZ$};
    \foreach \i in {0,1,...,5} {
      \draw[blue,line width=0.08cm] (-0.2+\i,-0.8)--(0+\i,-1)--(0.2+\i,-0.8);
      };
     % \draw[cyan!50,line width=0.14cm] plot[smooth,tension=0.9] coordinates {(1,6) (0.55,4.5) (1,3.05) (2,2.95) (2.5,1.5)};
     % \node[line width=0.1cm,blue!70,regular polygon, draw, regular polygon sides = 6, minimum size=11.35mm,rotate=90] at (2.5,1.5) {};
     % \node[line width=0.1cm,blue!70,regular polygon, draw, regular polygon sides = 6, minimum size=11.35mm,rotate=90] at (1,6) {};
     %\node[below, blue, scale=1.3] at (2.5,1.9) {$m$}; 
     % \node[above, blue, scale=1.3] at (1,5.6) {$m$}; 
  \foreach \i in {-1,...,4} {
     \draw[cyan!50,line width=0.09cm] plot[smooth,tension=0.9] coordinates {(0.6+\i,-1) (1+\i,-0.5) (1.4+\i,-1)};  }

      %\foreach \i in {0,1,...,6} 
     %\foreach \j in {0} {
               %\node[pointw] at (1*\i-1,3*\j-1) {};
               %\node[pointb] at (1*\i-0.5,3*\j-0.5) {};};

  \end{tikzpicture}
\end{array}
\end{eqnarray*}
\caption{\label{gapped bdy} $\mathbf{m}$-condensed boundary of $\bZ_N$ topological order. }
\end{figure}

It is more interesting for us to consider the model only with nearest neighbor interactions. 
Therefore, the general boundary Hamiltonian $H^{\mathbb{Z}_N}_{\text{bdy}}$ with periodic conditions along the circle can be written down,
\be
\begin{aligned}
H^{\mathbb{Z}_N}_{\text{bdy}}&=-\sum \sum^{N-1}_{s=1}a_s(\text{\Se{}})^s-\sum\sum^{N-1}_{s=1}b_s (\text{\Sm{}})^s\\
&=-\sum_{i\in \text{even link} } \sum^{N-1}_{s=1}a_s(X^{\dagger}_iX_{i+1})^s-\sum_{j\in \text{odd link}}\sum^{N-1}_{s=1}b_s (Z_jZ_{j+1})^s,~~~~~\text{with~}X_{2n+1}= X_{1},
\end{aligned}
\label{Hbdy}
\ee
%~~~~~\text{with~}X_{2n+1}= X_{1}.
%\be
%\begin{aligned}
%H^{\mathbb{Z}_N-\text{honeycomb}}_{\text{bdy}}&=-\sum_{i\in \text{even link} } a_m\sum^{N-1}_{m=1}(X^{\dagger}_iX_{i+1})^m-\sum_{j\in \text{odd link}}\sum^{N-1}_{m=1}b_m (Z_jZ_{j+1})^m,~~~~~\text{with~}X_{2n+1}= X_{1}.\\
%\tilde{H}^{\mathbb{Z}_N-\text{ triangular}}_{\text{bdy}}&=-\sum_{g=i+\frac{1}{2}~\&~i\in \text{even link} } \sum^{N-1}_{m=1}a_m(\Sigma^{x}_{g})^m-\sum_{g=i+\frac{1}{2}~\&~i\in \text{even link} } \sum^{N-1}_{m=1}b_m(\Sigma^{z}_g\mu^{z}_{g,g+1}\mu^{z}_{g+1,g+2}\Sigma^{z}_{g+2})^m.,~~~~~~~~~~~~~~~~. 
%\end{aligned}
%\ee
where the lattice translation symmetry has been applied. 
Since $a_s$ and $b_s$ are complex number, so the conditions on the couplings $a_s=a_{N-s}^{\ast}$ and $b_s=b_{N-s}^{\ast}$ are necessary to make the Hamiltonian hermitian. 
It is clear that the model captures many of the boundaries of the $\fZ_1(\rep(\bZ_N))$ constructed before. In addition to the $\bfm$-condensed boundary($b_1=b_{n-1}=1,$ others $=0$) and the $\bfe$-condensed boundary(($a_1=a_{n-1}=1,$ others $=0$)), the gapped boundaries induced by the subgroups of $\bZ_N$ are also included \cite{beigi2011}. 
In the Landau's paradigm, they are seen as partially symmetry broken phases. Thus, it can be expected that the model \eqref{Hbdy} will have a rich phase diagram.

Before proceeding further, we make an important observation about the boundary model. Obviously, the model has two different $\bZ_N$ symmetries generated by \Se{} and \Sm{} operators respectively. 
The corresponding symmetry operators are written as $\prod_{i=1}^{n}X^{\dagger}_{2i}X_{2i+1}$ and $\prod_{i=1}^{n}Z_{2i}Z_{2i+1}$. We call them $\bZ_N$ and ${\bZ}^{\vee }_N$. Indeed, they are the loop operators $L^{\bfe}_{\text{bdy}}$ and $L^{\bfm}_{\text{bdy}}$ (as shown in Fig.~\ref{Loop}) respectively, from the viewpoint of bulk. 
It means that the two $\bZ_N$ symmetries of the 1+1D boundary theory come from the $large$ $gauge$ $transformations$ of the $\bZ_N$ gauge theory in the 2+1D bulk. 
The symmetry sectors of the boundary Hilbert space $\CH_{\text{bdy}}$ should match up with the topological sectors (anyons) of the bulk Hilbert space $\CH_{\text{bulk}}$. 
Therefore, the Hilbert space $\CH_{\text{bdy}}$ has a decomposition as $\CH_{\text{bdy}}=\bigoplus_{\alpha,\beta}\CH^{\alpha,\beta}_{\text{bdy}}$, where $\alpha$ and $\beta$ are the charge and flux number.
Let's recall the example $\bfm$-condensed boundary $H^{\bfm}_{\text{bdy}}=-\sum_{j\in \text{odd link}}(Z_jZ_{j+1}+Z^\dagger_jZ^{\dagger}_{j+1})$ with $\bZ_N$ symmetry; another $\bZ^{\vee }_N$ symmetry is identity for the periodic case. For this example, there are $N$-$fold$ degeneracy states $\vert \Psi_{\alpha}\rangle$ labeled by the eigenvalue $\alpha$ of the $\bZ_N$ symmetry operator $L^{\bfe}$. These different symmetry sector $\alpha$ of  $\bZ_N$ can be constructed by inserting $\alpha$ semi-infinite open strings $S^{\bfm}_{\text{INF}}$, as shown in Fig.~\ref{Loop}. Because $S^{\bfm}_{\text{INF}}$ commutes with $H^{\bZ_N}_{\text{bulk}}$ and $H^{\bfm}_{\text{bdy}}$, it won't change the energy of the system. 
Viewed from the boundary, the  $N$-$fold$ degeneracy can be kept until a phase transition occurs as long as the $\bZ_N$ is not broken. 
In Landau's paradigm, the degeneracy is attributed to the $spontaneous$ $symmetry$ $breaking$. 
On the other hand, the viewpoint from bulk provides a topological interpretation. 
Since $S^{\bfm}_{\text{INF}}$ only can be detected by large loop $L^{\bfe}$ rather than any local operators in the bulk, the $N$-$fold$ degeneracy is topological robust against any local perturbation in bulk. 
The two statements can be related when we push $L^{\bfe}$ onto the boundary. $L^{\bfe}$ will become the symmetry constraint on the local perturbations of the boundary. 
That shows a holographic point of view: symmetry originates from the topology in one higher dimension \cite{CJKYZ2020,JW2020,JW2020metallic,kong2020,lichtman2021,chatterjee2022}. 
If we insert another semi-infinite open string $(S^{\bfe}_{\text{INF}})^{\beta}$ into the system instead of $S^{\bfm}_{\text{INF}}$, the energy of the system will increase because $S^{\bfm}_{\text{bdy}}(S^{\bfe}_{\text{INF}})^{\beta}\vert \Psi_{\text{bdy}}\rangle=\omega^{\beta}S^{\bfe}_{\text{INF}}\vert \Psi_{\text{bdy}}\rangle$ at the cross point. But if we also replace the $S^{\bfm}_{\text{bdy}}$ of $H^{\bfm}_{\text{bdy}}$ at the cross point with $\omega^{-\beta} S^{\bfm}_{\text{bdy}}$, $S^{\bfe}_{\text{INF}}\vert \Psi_{\text{bdy}}\rangle$ will be the one with lower energy. It actually assumes a twisted boundary condition at the cross point. We'll clarify this later.

%If we insert another semi-infinite open string $(S^{\bfe}_{\text{INF}})^{\beta}$ to the system instead of $S^{\bfm}_{\text{INF}}$, it will increase the energy of the system because $S^{\bfm}_{\text{bdy}}(S^{\bfe}_{\text{INF}})^{\beta}\vert \Psi_{\text{bdy}}\rangle=\omega^{\beta}S^{\bfe}_{\text{INF}}\vert \Psi_{\text{bdy}}\rangle$ at the cross point. But if we replace the $S^{\bfm}_{\text{bdy}}$ of $H^{\bfm}_{\text{bdy}}$ at the cross point with $\omega^{-\beta} S^{\bfm}_{\text{bdy}}$, $S^{\bfe}_{\text{INF}}\vert \Psi_{\text{bdy}}\rangle$ will be the one with lower energy. It actually assumes a twisted boundary condition at the cross point. We'll clarify this later.

\begin{figure}
  \begin{eqnarray*}
              \begin{array}{c}
                  \begin{tikzpicture}[scale=0.3]
                      %\draw [->, line width=0.5mm] (-6,-4)--(-6,10);
                      %\node at (-5.5,10) {$t$};
                      %\draw [fill] (-6,0) circle [radius=0.15] node[right] {$t_0$};
                      %\draw [fill] (-6,8) circle [radius=0.15] node[right] {$t_1$};
                      %\node at (6,3) {};
                     
                      \fill[left color=cyan!20,right color=cyan!20,middle color=cyan!5,shading=axis,opacity=1] (-4,10) -- (-4,0) arc (180:360:4cm and 2cm) -- (4,10) arc (360:180:4cm and 2cm);
                      \draw [dashed,line width=0.5mm, black!60](4,0) arc (0:180:4cm and 2cm);
                      \draw [line width=0.5mm](-4,0) arc (180:360:4cm and 2cm);
                      \draw [dashed,cyan!50,line width=0.8mm](4,1) arc (0:180:4cm and 2cm);
                      \draw [cyan,line width=0.8mm](-4,1) arc (180:360:4cm and 2cm) node[right] {$L^{\mathbf{m}}$};
                      \draw [dashed,purple!50,line width=0.8mm](4,2)  arc (0:180:4cm and 2cm);
                      \draw [purple,line width=0.8mm](-4,2)  arc (180:360:4cm and 2cm) node[right] {$L^{\mathbf{e}}$};
                      \draw [line width=0.5mm,gray!50,dashed] (-4,0)--(-4,10);
                      \draw [line width=0.5mm,gray!50,dashed] (4,0)--(4,10);
                  \end{tikzpicture}
              \end{array}
              \begin{array}{c}
                  \begin{tikzpicture}[scale=0.3]
                      %\draw [->, line width=0.5mm] (-6,-4)--(-6,10);
                      %\node at (-5.5,10) {$t$};
                      %\draw [fill] (-6,0) circle [radius=0.15] node[right] {$t_0$};
                      %\draw [fill] (-6,8) circle [radius=0.15] node[right] {$t_1$};
                      %\node at (6,3) {};
          
                      \fill[left color=cyan!20,right color=cyan!20,middle color=cyan!5,shading=axis,opacity=1] (-4,10) -- (-4,0) arc (180:360:4cm and 2cm) -- (4,10) arc (360:180:4cm and 2cm);
                      \draw [dashed,line width=0.5mm, black!60](4,0) arc (0:180:4cm and 2cm);
                      \draw [line width=0.5mm](-4,0) arc (180:360:4cm and 2cm);
                      \draw [dashed,purple!50,line width=0.8mm](4,1) arc (0:180:4cm and 2cm);
                      \draw [purple,line width=0.8mm](-4,1) arc (180:360:4cm and 2cm) node[right] {$L^{\bfe}$};
                      \draw [cyan,line width=0.8mm] (0,-2)--(0,8);
                      \node [cyan] at (0,9)[right] {$S^{\bfm}_{\text{INF}}$};
                      \draw [line width=0.5mm,gray!50,dashed] (-4,0)--(-4,10);
                      \draw [line width=0.5mm,gray!50,dashed] (4,0)--(4,10);
                  \end{tikzpicture}
              \end{array}   
   \end{eqnarray*}      
   \caption{\label{Loop} Left part: two kinds of non-contractible loop operators $L^{\bfe}$ and $L^{\bfm}$ wrapped around the cylinder. Right part: the semi-infinite open string $S^{\bfm}_{\text{INF}}$ can be detected by loop $L^{\bfe}$}.
\end{figure}
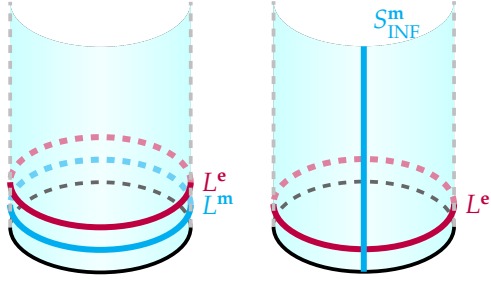

Let's return to the general case. In the $\CH^{\alpha,1}_{\text{bdy}}$, the model Eq.~\eqref{Hbdy} can be exactly mapped to the $\mathbb{Z}_N$ chiral clock model through the Kramers-Wannier transformation \cite{kogut1979,fendley2012},

\be
\begin{aligned}
  X^{\dagger}_iX_{i+1}\longrightarrow \tilde{X}_{i+\frac{1}{2}},&\quad Z_jZ_{j+1}\longrightarrow \tilde{Z}^{\dagger}_{j-\frac{1}{2}}\tilde{Z}_{j+\frac{1}{2}},\\
  \bZ_N\longrightarrow\prod \tilde{X}_{i+\frac{1}{2}},&\quad \bZ^{\vee}_N\longrightarrow \text{identity},
\end{aligned}
\ee
\be
\begin{aligned}
  \tilde{H}^{\mathbb{Z}_N}_{\text{bdy}} &=-\sum_{i\in \text{even link} } \sum^{N-1}_{s=1}a_s(\tilde{X}_{i+\frac{1}{2}})^s-\sum_{j\in \text{odd link}}\sum^{N-1}_{s=1}b_s (\tilde{Z}^{\dagger}_{j-\frac{1}{2}}\tilde{Z}_{j+\frac{1}{2}})^s\\
   &=-\sum_{i\in \bullet } \sum^{N-1}_{s=1}a_s(\tilde{X}_{i})^s-\sum_{i\in \bullet}\sum^{N-1}_{s=1}b_s (\tilde{Z}^{\dagger}_{i}\tilde{Z}_{i+1})^s,
\end{aligned}
\ee
where $\bZ^{\vee}_N=L^{\bfm}_{\text{bdy}}\rightarrow\mathbb{1}$ after transformation. 
That's why we have to work inside the $\CH^{\alpha,1}_{\text{bdy}}$. But if we first remove the identification of $\frac{1}{2}$-site and $(2n+\frac{1}{2})$-site, an extra degree of freedom can help us realize the mapping in every $\CH^{\alpha,\beta}_{\text{bdy}}$. At this point, we have
\be
\begin{aligned}
    \bZ^{\vee}_N\longrightarrow \tilde{Z}^{\dagger}_{\frac{1}{2}}\tilde{Z}_{2n+\frac{1}{2}}.
\end{aligned}
\ee
In addition to the condition of the symmetry sector: $\bZ_N|\Psi_{\text{bdy}}\rangle\rightarrow\prod\tilde{X}_{i+\frac{1}{2}}|\tilde{\Psi}_{\text{bdy}}\rangle=\alpha|\tilde{\Psi}_{\text{bdy}}\rangle$,
we also have a twisted boundary condition: $\bZ^{\vee}_N|\Psi_{\text{bdy}}\rangle\rightarrow\tilde{Z}^{\dagger}_{\frac{1}{2}}\tilde{Z}_{2n+\frac{1}{2}}|\tilde{\Psi}_{\text{bdy}}\rangle=\beta|\tilde{\Psi}_{\text{bdy}}\rangle$, for the subspace $\CH^{\alpha,\beta}_{\text{bdy}}$. Therefore, the mapping is exact only in every subspace $\CH^{\alpha,\beta}_{\text{bdy}}$. The ($\alpha$, $\beta$) of  $\CH^{\alpha,\beta}_{\text{bdy}}$ corresponds to the symmetry sector $\alpha$ and twisted boundary condition $\beta$ of the mapped model $\tilde{H}^{\bZ_N}_{\text{bdy}}$. Working in the $\CH^{\alpha,\beta}_{\text{bdy}}$ is more convenient. In addition to that, it also has real physical significance. Assuming that the cylinder has a finite length, we imagine a process in which a pair of $\bfe$ are created in the bulk and one of them is moved to the boundary. The boundary state must be in the corresponding sector $\CH^{\beta}_{\text{bdy}}$. We can also directly change the sign of $A^s_v$ and $B^s_p$ in the $H^{\bZ_N}_{\text{bulk}}$ to make the system favor any topological sector $(\alpha,\beta)$ that we want. Therefore, the symmetry of the boundary is not independent but is determined by the topological defect in the bulk \cite{CJKYZ2020,lichtman2021}. The model is essentially a ($\alpha$, $\beta$)-constrained $\bZ_N$ chiral clock model as depicted in Fig.~\ref{gapless bdy}, which is distinct from the pure 1+1D $\bZ_N$ chiral clock model. It is a sign that such a theory describing the boundary of 2+1D $\bZ_N$ topological order has a non-invertible gravitational anomaly \cite{JW2020}.

%There are sufficient reasons to consider the subspace $\CH^{\alpha,\beta}_{\text{bdy}}$. Assuming that the cylinder has a finite length, we imagine a process in which a pair of $\bfe$ is created in bulk, and move one of them to the boundary, the boundary state must be in the corresponding sector $\CH^{\beta}_{\text{bdy}}$. We can also directly change the sign of $A^s_v$ and $B^s_p$ in $H^{\bZ_N}_{\text{bulk}}$ to make the system favor any topological sector that we want. Therefore, the fluctuation of boundary is not independent but is determined by the topological sector in the bulk\cite{CJKYZ2020,lichtman2021}. The model is essentially a ($\alpha$, $\beta$)-constrained $\bZ_N$ chiral clock model as depicted in Fig.~\ref{gapless bdy}, which is distinct from the pure 1+1D model. It is a sign that such a theory describing the boundary of 2+1D $\bZ_N$ topological order has a non-invertible anomaly\cite{JW2020}.
\begin{figure}
  \begin{eqnarray*}
  \begin{tikzpicture}[x=12mm,y=6.94392mm]
  %x=10mm,y=5.7866mm
  %11.5467
  %9.4
    % some styles
    \tikzset{
      box/.style={line width=0.1cm,
        gray!50,
        regular polygon,
        regular polygon sides=6,
        minimum size=13.856mm,
        rotate=90,
       draw
      }
    }
  
    \tikzset{
      pointb/.style={circle,inner sep=0pt,minimum size=0.7mm,color=black,fill=black,line width=0.7mm,draw
      }
    }
  
    \tikzset{
      pointw/.style={circle,inner sep=0pt,minimum size=0.7mm,color=white,fill=white,line width=0.7mm,draw
      }
    }

    \tikzset{->-/.style={decoration={markings,mark=at position #1 with {\arrow{stealth}}},postaction={decorate}}}

 % \fill[] (-0.53,-1.2)--(5.53,-1.2)--(5.53,8.7)--(-0.53,8.7)--cycle;
\fill[top color=cyan!0,bottom color=cyan!10,draw=white] (-0.53,-1.2) rectangle   (5.53,8.7);
\draw[line width=0.05cm,gray!50,dashed] (-0.53,-1.2)--(-0.53,8.7) (5.53,-1.2)--(5.53,8.7);

\foreach \i in {0,...,5} 
\foreach \j in {0,...,2} {
              \node[box] at (1*\i,3*\j) {}; 
}         
\foreach \i in {1,...,5} 
\foreach \j in {0,...,1} {
              \node[box] at (\i-0.5,3*\j+1.5) {};
}

 % \draw[cyan!50,line width=0.14cm] plot[smooth,tension=0.9] coordinates {(1,6) (0.55,4.5) (1,3.05) (2,2.95) (2.5,1.5)};
 % \node[line width=0.1cm,blue!70,regular polygon, draw, regular polygon sides = 6, minimum size=11.35mm,rotate=90] at (2.5,1.5) {};
 % \node[line width=0.1cm,blue!70,regular polygon, draw, regular polygon sides = 6, minimum size=11.35mm,rotate=90] at (1,6) {};
 %\node[below, blue, scale=1.3] at (2.5,1.9) {$m$}; 
 % \node[above, blue, scale=1.3] at (1,5.6) {$m$}; 
  
 \draw[purple!50,line width=0.14cm] plot[smooth,tension=0.9] coordinates {(3.5,5.5) (3,5) (3,4) (3.5,3.5) (3.5,2.5) (3,2) (3,1) (2.5,0.5) (2.5,-0.5) (2,-1)};
 \draw[purple!70,line width=0.1cm] (3,5)--(3.5,5.5)--(3.5,6.5) (3.5,5.5)--(4,5);
 %\draw[red!70,line width=0.1cm] (2,1)--(2.5,0.5)--(3,1) (2.5,0.5)--(2.5,-0.5);
 \node[below, purple, scale=1.5] at (3.5,5.5) {$e$}; 
 %\node[above, red, scale=1.5] at (2.5,0.5) {$e$}; 

      \foreach \i in {0,1,...,5} {
      \draw[blue,line width=0.12cm] (-0.2+\i,-0.8)--(0+\i,-1)--(0.2+\i,-0.8);
      };
      \draw[red,line width=0.12cm] (-0.5,-0.5)--(-0.3,-0.7);
      \foreach \i in {0,1,...,4} {
      \draw[red,line width=0.12cm] (0.3+\i,-0.7)-- (0.5+\i,-0.5)--(0.7+\i,-0.7);
      };
      \draw[red,line width=0.12cm] (5.3,-0.7)--(5.5,-0.5);
      \draw[blue,line width=0.12cm] (-0.2,8.2)--(0,8)--(0.2,8.2) node[right] {$ZZ$};
      \draw[red,line width=0.12cm] (-0.2,7.4)-- (0,7.6)--(0.2,7.4) node[right] {$X^{\dagger}X$};
      \node at (2.5,8) {Cylinder};
      \node at (-0.25,-0.75) [below] {1};
      \node at (0.25,-0.75) [below] {2};
      \node at (0.75,-0.75) [below] {3};
      \node at (1.25,-0.75) [below] {4};
      \node at (1.75,-0.75) [below] {5};
      \node at (2.5,-1) [below]{$\cdots\cdots$};
      \node at (5.35,-0.8) [below] {2n};
      
      \foreach \i in {1,...,6} 
      \foreach \j in {0} {
                %\node[pointw] at (1*\i-1,3*\j-1) {};
                \node[pointb] at (1*\i-0.5,3*\j-0.5) {};};
       \node[pointb] at (-0.5,-0.5) {};

  \end{tikzpicture}
  \end{eqnarray*}
  \caption{\label{gapless bdy} The ($\alpha$, $\beta$)-constrained boundary lattice model}
\end{figure}
By tuning the parameters $a_s$ and $b_s$, the boundary model may undergo a phase transition described by non-chiral conformal field theory. However, the $H^{\bZ_N}_{\text{bdy}}$ actually has many different critical points for large $N$, which have many different additional symmetries. We need to further specify the system's FFA  $W$, $i.e.$, $local$ $quantum$ $symmetry$. It is very difficult to find all of them. In this paper, we consider a special case,  a $\mathbb{Z}_N$ symmetric integrable parameter point: $a_s=b_s=\frac{1}{\sin{s\pi/N}}$, which is the self-dual Fateev-Zamolodchikov spin chain described by $\mathbb{Z}_N$ parafermion CFT \cite{fateev1982,ZF1985,jimbo1986}. For any $N$, it always can be a critical point between $\bZ_N$ spontaneous symmetry breaking phase($\bfe$-condensed) and $\bZ_N$ symmetric phase($\bfm$-condensed) of the boundary. In order to handle this model analytically, we only work with the Fateev-Zamolodchikov critical point in the following. Such a known result helps us determine the local quantum symmetry of the boundary as $(V_{\mathrm{PF}_N}\otimes_{\bC}\overline{V_{\mathrm{PF}_N}})$.  We expect that the observables of this critical point can be captured by the gapless boundary theory of TO.

\subsection{Low-Energy Effective Field Theory and Partition Functions}
In this section, we'll try to obtain the low-energy effective field theory of the self-dual Fateev-Zamolodchikov critical point at the boundary. As mentioned before, the Fateev-Zamolodchikov spin chain is an integrable model which can be extended to the $\bZ_N$ parafermion CFT by taking the thermodynamic limit \cite{fateev1982,ZF1985,jimbo1986}. It is more convenient to calculate the partition function within the framework of CFT. And the parafermion CFT can be constructed through the $\widehat{\mathrm{SU}}(2)_N/\widehat{\mathrm{U}}(1)_{2N}$ coset procedure \cite{goddard1986,francesco2012}, hence it is also known as the $\widehat{\mathrm{SU}}(2)_N/\widehat{\mathrm{U}}(1)_{2N}$ coset CFT with central charge,
\begin{equation}
c\left(\widehat{\mathrm{SU}}(2)_N/\widehat{\mathrm{U}}(1)_{2N}\right)=c\left(\widehat{\mathrm{SU}}(2)_N\right)-c\left(\widehat{\mathrm{U}}(1)_{2N}\right) \\
=\frac{3N}{N+2}-1.
\end{equation}
The coset formalism is useful for us to construct the partition functions on the subspace $\CH^{\alpha,\beta}_{\text{bdy}}$.
For a simple introduction to the coset construction, please see the Appendix.~\ref{coset}. Here we directly list the results of the coset construction. The representation $\ell$ of $\widehat{\mathrm{SU}}(2)_N$ can be decomposed into a direct sum of representations $m$ of $\widehat{\mathrm{U}}(1)_{2N}$. In physical, the $\widehat{\mathrm{SU}}(2)_N$  WZW model can be viewed as composed of two building blocks: a $\widehat{\mathrm{SU}}(2)_N/\widehat{\mathrm{U}}(1)_{2N}$ piece associated with the parafermions and a $\widehat{\mathrm{U}}(1)_{2N}$ factor associated with a free boson. 
\begin{equation}
\begin{aligned}
(\ell) \mapsto \bigoplus_{m=-\ell}^{\ell}(m)_{\ell} &\quad(\text{branching condition: }\ell-m=0 \bmod 2),\\
\CH_{\mathrm{SU(2)}_N \mathrm{wzw}}^{\ell}&=\CH_{\mathrm{para}}^{\ell, m} \times \CH_{\mathrm{boson}}^{m}~,\\
\text{ $0\leq\ell\leq N$},&~~~~\text{$-N+1 \leq m \leq N$}.~~~~~
\end{aligned}
\end{equation}
where $\CH_{\mathrm{\mathrm{SU(2)_N wzw}}}^{\ell}$ denotes the subspace of the Hilbert space of the $\widehat{\mathrm{SU}}(2)_N$ current algebra which contains the states in the highest weight representation with isospin $\ell$. Also, $\CH_{\mathrm{para}}^{\ell, m}$ denotes the parafermionic states obtained from the highest weight parafermionic state which have $\mathbb{Z}_N$ charge $m$ mod $N$. $\CH_{\mathrm{boson}}^{m}$ is a subspace of the bosonic theory containing the states of momenta $m$. 

In the following, we use $\ell$, $m$ and $\{\ell;~m\}$ to label the $\widehat{\mathrm{SU}}(2)_N$ representations, the $\widehat{\mathrm{U}}(1)_{2N}$ representations and the coset field (parafermion primary field), respectively. To avoid confusion with the $\widehat{\mathrm{SU}}(2)_N$ characters $\chi^{(N)}_{\ell}(q)$, the $\widehat{\mathrm{U}}(1)_{2N}$ characters is indicated by $K^{(N)}_m(q)$, where $q=\mathrm{e}^{2\pi\mathrm{i}\tau}$ and $\tau$ is modular parameter.

Following the procedure of coset construction $\hat{\mathrm{g}}_k/\hat{\mathrm{p}}_{x_e k}$ in Appendix.~\ref{coset}, the character decomposition is
\begin{equation}
\begin{aligned}
\chi_{\ell}^{(N)}(q)&=\sum_{m=-N+1}^{N} \chi_{\{\ell;~m\}}(q) K_{m}^{(N)}(q),\\
%K_{m}^{(N)}(q)&=\frac{\Theta_{m}^{(N)}(q)}{\eta(q)},\\
% \chi_{\{\ell;~m\}}(q)&=\eta(q)c^{\ell}_{m}(q),\\
\end{aligned}
\end{equation}
with character identity,
\begin{equation}
\chi_{\{\ell;~m\}}(q)=\chi_{\{N-\ell;~m+N\}}(q)=\chi_{\{N-\ell;~m-N\}}(q).
\end{equation}
%Thus, just half of all allowed coset pairs are distinct coset field.

The fractional part of conformal weight of $\{\ell;~m\}$ is,
\begin{equation}
h_{\{\ell;~m\}}~\text{mod $1$} =h_{\ell}-h_{m}=\frac{\ell(\ell+2)}{4(N+2)}-\frac{m^2}{4N}.
\end{equation}

Modular transformation matrices of coset characters $ \chi_{\{\ell;~m\}}$,
\begin{equation}
\begin{aligned}
\mathcal{S}^{\widehat{\mathrm{SU}}(2)_N/\widehat{\mathrm{U}}(1)_{2N}}_{\{\ell;~m\},\left\{\ell^{\prime} ;~m^{\prime}\right\}}&=\mathcal{S}_{\ell,\ell^{\prime}}^{\widehat{\mathrm{SU}}(2)_N} \overline{\mathcal{S}}_{m,m^{\prime}}^{\widehat{\mathrm{U}}(1)_{2N}} \\
\mathcal{T}^{\widehat{\mathrm{SU}}(2)_N/\widehat{\mathrm{U}}(1)_{2N}}_{\{\ell;~m\},\left\{\ell^{\prime} ;~m^{\prime}\right\}}&=\mathcal{T}_{\ell,\ell^{\prime}}^{\widehat{\mathrm{SU}}(2)_N} \overline{\mathcal{T}}_{m,m^{\prime}}^{\widehat{\mathrm{U}}(1)_{2N}}\\
\mathcal{S}_{\ell,\ell^{\prime}}^{\widehat{\mathrm{SU}}(2)_N} &=\sqrt{\frac{2}{2+N}}\sin\frac{\pi(\ell+1)(\ell^{\prime}+1)}{N+2}\\
\overline{\mathcal{S}}_{m,m^{\prime}}^{\widehat{\mathrm{U}}(1)_{2N}}&=\frac{1}{\sqrt{2N}}\mathrm{e}^{\frac{\pi\mathrm{i}mm^{\prime}}{N}}\\
\mathcal{T}_{\ell,\ell^{\prime}}^{\widehat{\mathrm{SU}}(2)_N} &=\mathrm{e}^{2\pi\mathrm{i}(\frac{\ell(\ell+2)}{4(N+2)}-\frac{c\left(\widehat{\mathrm{SU}}(2)_N\right)}{24})}\delta_{\ell,\ell^{\prime}}\\
\overline{\mathcal{T}}_{m,m^{\prime}}^{\widehat{\mathrm{U}}(1)_{2N}}&=\mathrm{e}^{-2\pi\mathrm{i}(\frac{m^2}{4N}-\frac{c\left(\widehat{\mathrm{U}}(1)\right)}{24})}\delta_{m,m^{\prime}}.
\end{aligned}
\end{equation}
To obtain the partition functions on the subspace  $\CH^{\alpha,\beta}_{\text{bdy}}$, we need to consider the so-called twisted partition function $\mathsf{Z}_{r,s}$ ($r$ and $t$ are defined modulo $N$), which describes the CFT on the torus with $r$-twisted spatial boundary condition and $t$-twisted temporal boundary condition \cite{gepner1987}. Let's begin with a modular invariant untwisted coset partition function. For simplicity, we take the fully diagonal modular invariant solution as an example,
%\begin{equation}
%\begin{aligned}
%Z^{\mathrm{parafermion}}_{0,0}&=\sum_{\{\ell;~m\}}\Omega _{\{\ell;~m\},\{\overline{\ell};~\overline{m}\}}\chi_{\{\ell;~m\}}\overline{\chi}_{\{\overline\ell;~\overline m\}}\\
%\Omega_{\{\ell;~m\},\{\overline{\ell};~\overline{m}\}}&=\frac{1}{2} L_{\ell,\overline{\ell}} M_{m, \overline{m}} \frac{1+(-1)^{\ell-m}}{2}
%\end{aligned}
%\end{equation}
\begin{equation}
\mathsf{Z}_{0,0}=\frac{1}{2}\sum_{\{\ell;~m\}}\chi_{\{\ell;~m\}}\overline{\chi}_{\{\overline\ell;~\overline m\}}=\frac{1}{2}\sum_{\{\ell;~m\}}\chi_{\{\ell;~m\}}\overline{\chi}_{\{\ell;~ m\}}.
\end{equation}
The twisted partition functions $\mathsf{Z}_{0,t}$ is easier to calculate than the general case. The $t$ twist in temporal direction can be interpreted as a $\bZ_N$ symmetry line defect wrapped around the spatial circle of the torus. Therefore, we just need to insert the $\bZ_N$
symmetry operator $Q^t$ into the trace over the Hilbert space $\CH_{\mathrm{para}}^{\ell, m}$ to implement this symmetry defect line,
\begin{eqnarray}
        \begin{array}{c}
\begin{tikzpicture}[scale=0.7]
    \tikzset{->-/.style={decoration={markings,mark=at position #1 with {\arrow{stealth}}},postaction={decorate}}}
    \draw [help lines,line width=0.3mm] (0,0) -- (0,2)--(2,2)--(2,0)--cycle;
    \draw [->-=0.55,line width=0.5mm] (0,1)--(2,1)  node [pos=0.5, above] {$Q^t$};
\end{tikzpicture}
\end{array}\mathsf{Z}_{0,t}=\mathbf{Tr}_{\CH_{\mathrm{para}}^{\ell, m}}Q^t q^{L_0-\frac{c}{24}}\overline q^{\overline L_0-\frac{c}{24}}=\frac{1}{2}\sum_{\{\ell;~m\}}\mathrm{e}^{2\pi\mathrm{i}tm/N}\chi_{\{\ell;~m\}}\overline{\chi}_{\{\ell;~ m\}}.
\end{eqnarray}
As the $\bZ_N$ symmetry operator, $Q^t$ measures the $\bZ_N$ charge of the field $\{\ell,m\}$ with $\mathrm{e}^{2\pi\mathrm{i}tm/N}$ as the result. Since the $S$ transformation can exchange the temporal and spatial directions of the torus, the twisted partition functions $\mathsf{Z}_{0,t}$ will transform to  $\mathsf{Z}_{-t,0}$,
\begin{eqnarray*}
    \begin{array}{c}
\begin{tikzpicture}[scale=0.7]
\tikzset{->-/.style={decoration={markings,mark=at position #1 with {\arrow{stealth}}},postaction={decorate}}}
\draw [help lines,line width=0.3mm] (0,0) -- (0,2)--(2,2)--(2,0)--cycle;
\draw [->-=0.55,line width=0.5mm] (0,1)--(2,1)  node [pos=0.5, above]{};
\end{tikzpicture}
\end{array} \mathsf{Z}^{\mathrm{para}}_{0,t}~~~\xrightarrow{S}~~~
\begin{array}{c}
    \begin{tikzpicture}[scale=0.7]
    \tikzset{->-/.style={decoration={markings,mark=at position #1 with {\arrow{stealth}}},postaction={decorate}}}
    \draw [help lines,line width=0.3mm] (0,0) -- (0,2)--(2,2)--(2,0)--cycle;
    \draw [->-=0.55,line width=0.5mm] (1,0)--(1,2)  node [pos=0.5, right]{};
    \end{tikzpicture}
    \end{array}\mathsf{Z}^{\mathrm{para}}_{-t,0}.
\end{eqnarray*}

By the modular transformation, we can get all the twisted partition functions $\mathsf{Z}_{r,t}$ \cite{gepner1987}.
\begin{equation}
\mathsf{Z}_{r, t}=\frac{1}{2}\mathrm{e}^{-2 \pi \mathrm{i}  r t / N} \sum_{\{\ell;~m\}} \mathrm{e}^{2 \pi \mathrm{i} t m / N} \chi_{\{\ell;~m\}} \overline\chi_{\{\ell;~m-2r\}}.
\end{equation}

The gapless boundary of $\mathbb{Z}_N$ topological order is described by a set of $\mathbb{Z}_N$ parafermion partition functions on the $H^{\alpha,\beta}_{\text{bdy}}$. In physical, it comes from the fact that the boundary partition function will respond to the different topological defects in the bulk. For example, pushing the $\bfe^{\alpha}$ particle of $\mathbb{Z}_N$ topological order to the boundary can change the $\bZ_N$ symmetric boundary Hilbert space from $0$ symmetry sector to $\alpha$ sector (here, symmetry sector $\alpha$ corresponds to the charge $m$ of $\widehat{\mathrm{U}}(1)_{2N}$ CFT in the ${\{\ell;~m\}}$ notation). This is reflected in the change of the partition functions of the boundary theory. Therefore in contrast to the pure 1+1D parafermion CFT with trivial bulk that is described by only one modular invariant partition function, the boundary theory with a topological order as bulk requires a set of partition functions. Obviously, because $\CH_{\text{bdy}}=\bigoplus_{\alpha,\beta}\CH^{\alpha,\beta}_{\text{bdy}}$, these boundary partition functions can be labeled by $\Zb_N$ topological excitations: $\mathsf{Z}({\bfe^{\alpha}\bfm^\beta})$.
%where $\bfe$ and $\bfm$ correspond to elementary $\mathbb{Z}_N$ electric charge $\bfe$, and the elementary magnetic flux $\bfm$, respectively. Then we have $N^2$ types of anyons that can be labelled as $(\alpha,~ \beta)=\bfe^{\alpha}\times \bfm^{\beta}$, where $\alpha$ and $\beta$ can be chosen from $0,...(N-1)$.

We know that $\bfm^\beta$ determines the spatial direction with the $\beta$ twisted condition and $\bfe^\alpha$ determines the symmetry sector($\mathbb{Z}_N$ charge) $\alpha$ mod $N$. Thus, the partition function with topological defect $\bfe^{\alpha}$ simply needs to collect all characters $\chi_{\{\ell;~m\}}$ with the charge $\alpha$,
\begin{equation}
\mathsf{Z}(\bfe^\alpha)=\sum_{\{\ell;~m\}\atop \&~m=\alpha\text{ mod } N}\chi_{\{\ell;~m\}} \overline\chi_{\{\ell;~m\}}.
\end{equation}
For the spatial $\beta$ twisted case, the $\mathbb{Z}_N$ symmetry acts on the coset field $\{\ell;~m~,\overline\ell;~m-2\beta\}_{\delta_{\ell,\overline\ell}}$ will get the charge $(m-\beta)$. Thus, 
\begin{equation}
\mathsf{Z}(\bfe^{\alpha} \bfm^\beta)=\sum_{\{\ell;~m\}\atop \&~m=(\alpha+\beta)\text{ mod } N}\chi_{\{\ell;~m\}}\overline\chi_{\{\ell;~m-2\beta\}}.
\label{eq:partition_function_p}
\end{equation}
Obviously, the partition functions in the anyon basis can also be obtained through a basis transformation from the twisted partition functions $\mathsf{Z}^{\mathrm{para}}_{r,t}$. To make this clearer, we reorganized the partition functions based on the $\mathbb{Z}_N$ charge basis $\alpha$,
\begin{equation}
\begin{aligned}
\mathsf{Z}_{0,t} &=\sum_{\alpha \in \mathbb{Z}_N}\mathrm{e}^{2\pi\mathrm{i}t\alpha/N}\mathsf{Z}(\bfe^\alpha),\\
\mathsf{Z}_{r, t}&=\sum_{\alpha,~\beta \in \mathbb{Z}_N}\mathrm{e}^{2\pi\mathrm{i}t\alpha/N}\mathsf{Z}(\bfe^\alpha \bfm^\beta)\delta_{\beta,r}.
\end{aligned}\label{Fourier}
\end{equation}   
Since Eq.\eqref{Fourier} above is the discrete Fourier series expansion, the $r,t$ twisted basis and the anyon $
\bfe^\alpha,\bfm^\beta$ basis are related by a discrete Fourier transformation,
\begin{equation}
\begin{aligned}
 \mathsf{Z}(\bfe^\alpha)&=1/N\sum_{s\in \mathbb{Z}_N}\mathrm{e}^{-2\pi\mathrm{i}t\alpha/N}\mathsf{Z}_{0,t},\\
 \mathsf{Z}(\bfe^\alpha\bfm^\beta)&=1/N\sum_{r,~s\in \mathbb{Z}_N}\mathrm{e}^{-2\pi\mathrm{i}t\alpha/N}\mathsf{Z}_{r, t}\delta_{\beta,r}.
\end{aligned}
\end{equation}

So far, we have recovered the partition functions $\mathsf{Z}(\bfe^{\alpha}\bfm^{\beta})$ directly from the lattice model of critical point. These partition functions Eq.~\eqref{eq:partition_function_p} all agree with the result Eq.~\eqref{eq:partition_function_m} from the mathematical theory, as expected. Therefore, the boundary model Eq.~\eqref{Hbdy} with Fateev-Zamolodchikov parameters indeed can be described by a gappable gapless boundary of $\bZ_N$ TO. 
%We say it is gappable because it is a critical point between gapped $\bfm$-$condensed$ boundary and  $\bfe$-$condensed$ boundary. 

In the following, we discuss the behavior of these partition functions $\mathsf{Z}(\bfe^\alpha\bfm^\beta)$ under modular transformations. By 
\begin{equation}
\begin{aligned}
\chi_{\{\ell;~m\}}=&\mathcal{S}^{\widehat{\mathrm{SU}}(2)_N/\widehat{\mathrm{U}}(1)_{2N}}_{\{\ell;~m\},\left\{\ell^{\prime} ;~m^{\prime}\right\}}\chi_{\{\ell';~m'\}}\\
\chi_{\{\ell;~m\}}=&\mathcal{T}^{\widehat{\mathrm{SU}}(2)_N/\widehat{\mathrm{U}}(1)_{2N}}_{\{\ell;~m\},\left\{\ell^{\prime} ;~m^{\prime}\right\}}\chi_{\{\ell';~m'\}},
\end{aligned}
\end{equation}
 it is not difficult to verify that $\mathsf{Z}(\bfe^\alpha\bfm^\beta)$ satisfy the following equations under the modular transformation,
\begin{equation}
\begin{aligned}
\mathsf{Z}(\bfe^\alpha\bfm^\beta)=&\mathcal{S}^{\Zb_N}_{\{\alpha;~\beta\},\{\alpha^{\prime} ;~\beta^{\prime}\}}\mathsf{Z}(\bfe^{\alpha'}\bfm^{\beta'})\\
\mathsf{Z}(\bfe^\alpha\bfm^\beta)=&\mathcal{T}^{\Zb_N}_{\{\alpha;~\beta\},\{\alpha^{\prime} ;~\beta^{\prime}\}}\mathsf{Z}(\bfe^{\alpha'}\bfm^{\beta'})\\
\mathcal{S}^{\Zb_N}_{\{\alpha;~\beta\},\{\alpha^{\prime} ;~\beta^{\prime}\}}&=\frac{1}{N}\mathrm{e}^{\frac{2\pi\mathrm{i}}{N}(\alpha\beta'+\beta\alpha')}\\
\mathcal{T}^{\Zb_N}_{\{\alpha;~\beta\},\{\alpha^{\prime} ;~\beta^{\prime}\}}&=\mathrm{e}^{-\frac{2\pi\mathrm{i}}{N}(\alpha\beta)}\delta_{\alpha,\alpha'}\delta_{\beta,\beta'},
\end{aligned}
\end{equation}
where $\mathcal{S}^{\Zb_N}$ and $\mathcal{T}^{\Zb_N}$ are the modular $S$ and $T$ matrices of the $\Zb_N$ topological order. In contrast to the modular invariant partition function of the pure 1+1D parafermion CFT, the partition functions $\mathsf{Z}(\bfe^\alpha\bfm^\beta)$ is modular covariant. In fact, for some simple cases, the modular-covariant condition provides us with another algebraic number theoretical approach: we can obtain finite non-negative integer solutions of the partition functions using the additional constraints derived from the modular-covariant condition. This approach is called modular bootstrap \cite{CLY2018,KLPM2021,LS2021}.

To confirm the validity of our result, here we give a few examples of specific $N$.
\subsubsection{\texorpdfstring{$N=2$}{N=2}, Ising model}
For $N=2$ case, it is the well-known Ising model. Thus $0\leq \ell \leq 2$, $-1 \leq m \leq 2$ and $0 \leq \alpha,\beta\leq 1$. There are six distinct parafermion coset fields,

\begin{equation}
\begin{array}{ll}
\{0;~0\}=\{2;~2\}	&  h = 0~\text{mod}~1\\
\{0;~2\}=\{2;~0\}	&  h = \frac{1}{2}~\text{mod}~1\\
\{1;-1\}=\{1;-1\} 	&  h = \frac{1}{16}~\text{mod}~1.\\
\end{array}	
\end{equation}
We identify the $\bZ_2$ parafermion coset characters $\chi_{\{\ell;~m\}}$ with Virasoro characters of Ising CFT: $\{\chi^{\text{Is}}_{{\mathbb{1}}}, \chi^{\text{Is}}_{\psi}, \chi^{\text{Is}}_{\sigma}\}$,
\begin{equation}
    \begin{array}{l}
        \chi_{\{0;~0\}}=\chi^{\text{Is}}_{\mathbb{1}}\\
        \chi_{\{0;~2\}}=\chi^{\text{Is}}_{\psi}\\
        \chi_{\{1;-1\}}=\chi^{\text{Is}}_{\sigma}.
    \end{array}
\end{equation}
And using Eq.~\eqref{eq:partition_function_p}, we can obtain the partition functions in the anyon basis $(\bfe^\alpha\bfm^\beta)$,
\begin{equation}
\begin{aligned}
    &  \mathsf{Z}(\mathbb{1})=\chi_{\{0;~0\}}\overline\chi_{\{0;~0\}}+\chi_{\{0;~2\}}\overline\chi_{\{0;~2\}}=\lvert\chi^{\text{Is}}_{\mathbb{1}}\rvert^2+\lvert\chi^{\text{Is}}_{\psi}\rvert^2\\
    &  \mathsf{Z}(\bfe)=\chi_{\{1;-1\}}\overline\chi_{\{1;-1\}}=\lvert\chi^{\text{Is}}_{\sigma}\rvert^2\\
    &  \mathsf{Z}(\bfm)=\chi_{\{1;-1\}}\overline\chi_{\{1;-1\}}=\lvert\chi^{\text{Is}}_{\sigma}\rvert^2\\
    &  \mathsf{Z}(\bfe\bfm)=\chi_{\{0;~0\}}\overline\chi_{\{2;~0\}}+\chi_{\{0;~2\}}\overline\chi_{\{0;~0\}}=\chi^{\text{Is}}_{\mathbb{1}}\overline\chi^{\text{Is}}_{\psi}+\chi^{\text{Is}}_{\psi}\overline\chi^{\text{Is}}_{\mathbb{1}}.
\end{aligned}
\end{equation}
This recovers the result in \cite{CJKYZ2020}.
\subsubsection{\texorpdfstring{$N=3$}{N=3}, Three potts model}
For $N=3$ case, $\bZ_3$ parafermion CFT can be realized by the so-called three Potts model. Thus $0\leq \ell \leq 3$, $-2 \leq m \leq 3$ and $0 \leq \alpha,\beta\leq 1$. There are six distinct parafermion coset fields,
\begin{equation}
\begin{array}{ll}
\{0;~0\}=\{3;~3\}	&  h = 0~\text{mod}~1\\
\{0;-2\}=\{3;~1\}	&  h = \frac{2}{3}~\text{mod}~1\\
\{0;~2\}=\{3;-1\}	&  h = \frac{2}{3}~\text{mod}~1\\
\{1;~3\}=\{2;~0\}	&  h = \frac{2}{5}~\text{mod}~1\\
\{1;~1\}=\{2;-2\}	&  h = \frac{1}{15}~\text{mod}~1\\
\{1;-1\}=\{2;~2\}	&  h = \frac{1}{15}~\text{mod}~1.\\
\end{array}	
\end{equation}
We identify the $\bZ_3$ parafermion coset characters $\chi_{\{\ell;~m\}}$ with Virasoro characters of the minimal model: $M(6,5)$ $\{\chi^{\text{(6,5)}}_{{\mathbb{1}}}, \chi^{\text{(6,5)}}_{{\mathbb{\psi}}}, \chi^{\text{(6,5)}}_{{\mathbb{\varepsilon}}}, \chi^{\text{(6,5)}}_{X},  \chi^{\text{(6,5)}}_{{Y}},  \chi^{\text{(6,5)}}_{{Z}}\}$,
\begin{equation}
\begin{array}{l}
    \chi_{\{0;~0\}}=\chi^{(6,5)}_{\mathbb{1}}+\chi^{(6,5)}_{Y}\\
	\chi_{\{1;~3\}}=\chi^{(6,5)}_{\varepsilon}+\chi^{(6,5)} _{X}\\
    \chi_{\{0;-2\}}=\chi_{\{0;~2\}}=\chi^{(6,5)}_{Z}\\
	\chi_{\{1;-1\}}=\chi_{\{1;~1\}}=\chi^{(6,5)}_{\sigma}.	
\end{array}
\end{equation}

And using Eq.~\eqref{eq:partition_function_p}, we can obtain the partition functions in the anyon basis $(\bfe^\alpha\bfm^\beta)$, 
\begin{align}
&\nonumber  \mathsf{Z}(\mathbb{1})=\chi_{\{0;~0\}}\overline\chi_{\{0;~0\}}+\chi_{\{1;~3\}}\overline\chi_{\{1;~3\}}=\lvert\chi^{(6,5)}_{\mathbb{1}}+\chi^{(6,5)}_{Y}\rvert^2+\lvert\chi^{(6,5)}_{\varepsilon}+\chi^{(6,5)}_{X}\rvert^2\\
&\nonumber  \mathsf{Z}(\bfe)=\chi_{\{0;-2\}}\overline\chi_{\{0;-2\}}+\chi_{\{1;~1\}}\overline\chi_{\{1;~1\}}=\lvert\chi^{(6,5)}_{Z}\rvert^2+\lvert\chi^{(6,5)}_{\sigma}\rvert^2\\
&\nonumber  \mathsf{Z}(\bfe^2)=\chi_{\{0;~2\}}\overline\chi_{\{0;~2\}}+\chi_{\{1;-1\}}\overline\chi_{\{1;-1\}}=\lvert\chi^{(6,5)}_{Z}\rvert^2+\lvert\chi^{(6,5)}_{\sigma}\rvert^2\\
&\nonumber  \mathsf{Z}(\bfm)=\chi_{\{0;-2\}}\overline\chi_{\{0;~2\}}+\chi_{\{1;~1\}}\overline\chi_{\{1;-1\}}=\lvert\chi^{(6,5)}_{Z}\rvert^2+\lvert\chi^{(6,5)}_{\sigma}\rvert^2\\
&\nonumber  \mathsf{Z}(\bfe\bfm)=\chi_{\{0;~2\}}\overline\chi_{\{0;~0\}}+\chi_{\{1;-1\}}\overline\chi_{\{1;~3\}}=\chi^{(6,5)}_{Z}\overline\chi^{(6,5)}_{\mathbb{1}}+\chi^{(6,5)}_{Z}\overline\chi^{(6,5)}_{Y}+\chi^{(6,5)}_{\sigma}\overline\chi^{(6,5)}_{\varepsilon}+\chi^{(6,5)}_{\sigma}\overline\chi^{(6,5)}_{X}\\
&\nonumber  \mathsf{Z}(\bfe^2\bfm)=\chi_{\{0;~0\}}\overline\chi_{\{0;-2\}}+\chi_{\{1;~3\}}\overline\chi_{\{1;~1\}}=\chi^{(6,5)}_{\mathbb{1}}\overline\chi^{(6,5)}_{Z}+\chi^{(6,5)}_{Y}\overline\chi^{(6,5)}_{Z}+\chi^{(6,5)}_{\varepsilon}\overline\chi^{(6,5)}_{\sigma}+\chi^{(6,5)}_{X}\overline\chi^{(6,5)}_{\sigma}\\
&\nonumber  \mathsf{Z}(\bfm^2)=\chi^{(6,5)}_{\{0;~2\}}\overline\chi^{(6,5)}_{\{0;-2\}}+\chi^{(6,5)}_{\{1;-1\}}\overline\chi^{(6,5)}_{\{1;~1\}}=\lvert\chi^{(6,5)}_{Z}\rvert^2+\lvert\chi^{(6,5)}_{\sigma}\rvert^2\\
&\nonumber  \mathsf{Z}(\bfe\bfm^2)=\chi_{\{0;~0\}}\overline\chi_{\{0;~2\}}+\chi_{\{1;~3\}}\overline\chi_{\{1;-1\}}=\chi^{(6,5)}_{\mathbb{1}}\overline\chi^{(6,5)}_{Z}+\chi^{(6,5)}_{Y}\overline\chi^{(6,5)}_{Z}+\chi^{(6,5)}_{\varepsilon}\overline\chi^{(6,5)}_{\sigma}+\chi^{(6,5)}_{X}\overline\chi^{(6,5)}_{\sigma}\\
&  \mathsf{Z}(\bfe^2\bfm^2)=\chi_{\{0;-2\}}\overline\chi_{\{0;~0\}}+\chi_{\{1;~1\}}\overline\chi_{\{1;~3\}}=\chi^{(6,5)}_{Z}\overline\chi^{(6,5)}_{\mathbb{1}}+\chi^{(6,5)}_Z\overline\chi^{(6,5)}_{Y}+\chi^{(6,5)}_{\sigma}\overline\chi^{(6,5)}_{\varepsilon}+\chi^{(6,5)}_{\sigma}\overline\chi^{(6,5)}_{X}.
\end{align}

\subsubsection{\texorpdfstring{$N=4$}{N=4}, \texorpdfstring{$c=1$}{c=1} \texorpdfstring{$\bZ_2$}{Z 2} orbifold CFT with radius \texorpdfstring{$R_{\text{orbifold}}=\sqrt{3/2}$}{R=sqrt3/2}}
%A set of $\bZ_2$ orbifold CFTs correspond to a free boson field compactified on an orbifold with different radius $R$. 
$\bZ_4$ parafermion CFT is the $\bZ_2$ orbifold CFT with radius $R_{\text{orbifold}}=\sqrt{3/2}$. For $N=4$ case, $0\leq \ell \leq 4$, $-3 \leq m \leq 4$ and $0 \leq \alpha,\beta\leq 3$. There are ten distinct parafermion coset fields,
\begin{equation}
\begin{array}{ll}
\{0;~0\}=\{4;~4\}	&  h = 0~\text{mod}~1\\
\{0;-2\}=\{4;~2\}	&  h = \frac{3}{4}~\text{mod}~1\\
\{0;~2\}=\{4;-2\}	&  h = \frac{3}{4}~\text{mod}~1\\
\{0;~4\}=\{4;~0\}	&  h = 1~\text{mod}~1\\
\{1;-3\}=\{3;~1\}	&  h = \frac{9}{16}~\text{mod}~1\\
\{1;-1\}=\{3;~3\}	&  h = \frac{1}{16}~\text{mod}~1\\
\{1;~1\}=\{3;-3\}	&  h = \frac{1}{16}~\text{mod}~1\\
\{1;~3\}=\{3;-1\}	&  h = \frac{9}{16}~\text{mod}~1\\
\{2;-2\}=\{2;~2\}	&  h = \frac{1}{12}~\text{mod}~1\\
\{2;~0\}=\{2;~4\}	&  h = \frac{1}{3}~\text{mod}~1.\\
\end{array}	
\end{equation}

And using Eq.~\eqref{eq:partition_function_p}, we can obtain the partition functions in the anyon basis $(\bfe^\alpha\bfm^\beta)$, 
\begin{align}
&\nonumber  \mathsf{Z}(\mathbb{1})=\chi_{\{0;~0\}}\overline\chi_{\{0;~0\}}+\chi_{\{0;~4\}}\overline\chi_{\{0;~4\}}+\chi_{\{2;~0\}}\overline\chi_{\{2;~0\}}\\
&\nonumber  \mathsf{Z}(\bfe)=\chi_{\{1;-3\}}\overline\chi_{\{1;-3\}}+\chi_{\{1;~1\}}\overline\chi_{\{1;~1\}}\\
&\nonumber  \mathsf{Z}(\bfe^2)=\chi_{\{0;-2\}}\overline\chi_{\{0;-2\}}+\chi_{\{0;~2\}}\overline\chi_{\{0;~2\}}+\chi_{\{2;-2\}}\overline\chi_{\{2;-2\}}\\
&\nonumber  \mathsf{Z}(\bfe^3)=\chi_{\{1;-1\}}\overline\chi_{\{1;-1\}}+\chi_{\{1;~3\}}\overline\chi_{\{1;~3\}}\\
&\nonumber  \mathsf{Z}(\bfm)=\chi_{\{1;-3\}}\overline\chi_{\{1;~3\}}+\chi_{\{1;~1\}}\overline\chi_{\{1;-1\}}\\
&\nonumber  \mathsf{Z}(\bfe\bfm)=\chi_{\{0;-2\}}\overline\chi_{\{0;~4\}}+\chi_{\{0;~2\}}\overline\chi_{\{0;~0\}}+\chi_{\{2;-2\}}\overline\chi_{\{2;~0\}}\\
&\nonumber  \mathsf{Z}(\bfe^2\bfm)=\chi_{\{1;-1\}}\overline\chi_{\{1;-3\}}+\chi_{\{1;~3\}}\overline\chi_{\{1;~1\}}\\
&\nonumber  \mathsf{Z}(\bfe^3\bfm)=\chi_{\{0;~0\}}\overline\chi_{\{0;-2\}}+\chi_{\{0;~4\}}\overline\chi_{\{0;~2\}}+\chi_{\{2;~0\}}\overline\chi_{\{2;-2\}}\\
&\nonumber  \mathsf{Z}(\bfm^2)=\chi_{\{0;-2\}}\overline\chi_{\{0;~2\}}+\chi_{\{0;~2\}}\overline\chi_{\{0;-2\}}+\chi_{\{2;-2\}}\overline\chi_{\{2;-2\}}\\
&\nonumber  \mathsf{Z}(\bfe\bfm^2)=\chi_{\{1;-1\}}\overline\chi_{\{1;~3\}}+\chi_{\{1;~3\}}\overline\chi_{\{1;-1\}}\\
&\nonumber  \mathsf{Z}(\bfe^2\bfm^2)=\chi_{\{0;~0\}}\overline\chi_{\{0;~4\}}+\chi_{\{0;~4\}}\overline\chi_{\{0;~0\}}+
\chi_{\{2;~0\}}\overline\chi_{\{2;~0\}}\\
&\nonumber  \mathsf{Z}(\bfe^3\bfm^2)=\chi_{\{1;-3\}}\overline\chi_{\{1;~1\}}+\chi_{\{1;~1\}}\overline\chi_{\{1;-3\}}\\
&\nonumber  \mathsf{Z}(\bfm^3)=\chi_{\{1;-1\}}\overline\chi_{\{1;~1\}}+\chi_{\{1;~3\}}\overline\chi_{\{1;-3\}}\\
&\nonumber  \mathsf{Z}(\bfe\bfm^3)=\chi_{\{0;~0\}}\overline\chi_{\{0;~2\}}+\chi_{\{0;~4\}}\overline\chi_{\{0;-2\}}+\chi_{\{2;~0\}}\overline\chi_{\{2;-2\}}\\
&\nonumber  \mathsf{Z}(\bfe^2\bfm^3)=\chi_{\{1;-3\}}\overline\chi_{\{1;-1\}}+\chi_{\{1;~1\}}\overline\chi_{\{1;~3\}}\\
&\nonumber  \mathsf{Z}(\bfe^3\bfm^3)=\chi_{\{0;-2\}}\overline\chi_{\{0;~0\}}+\chi_{\{0;~2\}}\overline\chi_{\{0;~4\}}+\chi_{\{2;-2\}}\overline\chi_{\{2;~0\}}\\\end{align}.

\section{Conclusion and Discussions}
In summary, we study a pure boundary phase transition of two specific gapped boundaries, $\rep(\bZ_N)$ and $\rVec_{\bZ_N}$ of the $\bZ_N$ TO in this paper. 
The critical point of this boundary phase transition corresponds to a gappable non-chiral gapless boundary, which is mathematically described by the pair
\begin{align}
    (V_{\mathrm{PF}_N}\otimes_{\bC}\overline{V_{\mathrm{PF}_N}},\,^{\fZ_1(\mathrm{PF}_N)}\fZ_1(\mathrm{PF}_N)_B).
\end{align}
The $\bZ_N$ parafermion FFA $V_{\mathrm{PF}_N}\otimes_{\bC}\overline{V_{\mathrm{PF}_N}}$ encodes the local quantum symmetry and the enriched fusion category $^{\fZ_1(\mathrm{PF}_N)}\fZ_1(\mathrm{PF}_N)_B$ encodes the information of topological defects. 
All the ingredients of the enriched fusion category $^{\fZ_1(\mathrm{PF}_N)}\fZ_1(\mathrm{PF}_N)_B$, i.e., $M_{x,y}$ Eq.~\eqref{eq:expansion} match up with the low-energy effective theory Eq.~\eqref{eq:partition_function_p} based on the lattice model construction Eq.~\eqref{Hbdy}. As an example, this work shows that the mathematical theory \cite{KZ2020,KZ2021} of gapless edges of 2d topological orders is effective in studying general phase transitions.

However, for a general $N$, there are more than two kinds of gapped boundaries of $\bZ_N$ topological order, unlike the $\bZ_2$ case \cite{bravyi1998,KK2012,beigi2011}. They are described by the UFCs $\rep(\bZ_N)_{F_H\mid F_H}$ for different subgroups $H$ of $\bZ_N$, and $F_H:=\mathrm{Fun}(\bZ_N/H)$ is a function algebra in $\rep(\bZ_N)$ \cite{beigi2011}. In Laudau's paradigm, these gapped boundaries can be viewed as the partially symmetry broken phase. Partial-symmetry-breaking phase transitions may occur at the boundary. Indeed, the concrete lattice model Eq.~\eqref{Hbdy} that realizes these boundaries also has a rich phase diagram and includes many different critical points. We believe that all these phase transitions can also be described by the gappable non-chiral gapless boundaries. The methods and results in this paper can be further generalized to the other boundary phase transitions of the $\bZ_N$ TO. 

After this paper appeared on arXiv, we are aware of a few recent related works \cite{chatterjee2022emergent,pace2023}.
%And it is also natural to generalize $\bZ_N$ to finite Abelian groups, because each finite Abelian group can be decomposed to a direct sum of cyclic groups.

%We know that the critical point of topological phase transition of 1d gapped boundaries should be described by a gappable non-chiral gapless boundary.
%It is natural to ask in reverse: Given a gappable non-chiral gapless boundary, is there a topological phase transition of 1d gapped boundaries described by this gapless boundary?
%If such a gappable non-chiral gapless boundary exists, then it must be Morita equivalent to all of the gapped boundaries and all of the gappable non-chiral gapless boundaries that describe a phase transition of gapped boundaries. 
%(Recall that two (gapped or gapless) boundaries are Morita equivalent if they have the same bulk.)
%We see that the language of Morita equivalence fails in the description and classification of the topological phase transitions.
%It is still an open question to find a suitable language to classify the phase transitions.

%What we do in this paper is to find one gappable non-chiral gapless boundary to describe a critical point of a topological phase transition.
%It is natural to ask, is that gapless boundary the unique one to describe this critical point of the phase transition?
%Are there other gappable non-chiral gapless boundaries describing the same process?
%These are also open questions.

\bigskip
\noindent {\bf Acknowledgement}: 
We would like to thank Zhi-Hao Zhang, Liang Kong and Wei-Qiang Chen for helpful discussions.
YL is supported by NFSC (No.~12174164, No.~91963201). 
HY is supported by NSFC under Grant No. 11971219 and Guangdong Basic and Applied Basic Research Foundation under Grant No. 2020B1515120100 and Guangdong Provincial Key Laboratory (Grant No.2019B1212e03002).

\appendix
\appendixpage

\section{Derivation of category \texorpdfstring{$\mathrm{PF}_N$}{PF N}}\label{sec:math_PFN}
In this section, we first review the parafermion VOA, then, we give a detailed description of $\mathrm{PF}_N$, the category of modules over the parafermion VOA.

\subsection{Parafermion vertex operator algebras}
It is known that a rational CFT can be described by a rational VOA and its modules \cite{MS1989}.
Let $V$ be a rational VOA, the category of $V$-modules is a MTC \cite{Huang2008_modular}.
The parafermion CFT \cite{ZF1985} is described by the so called the parafermion VOA \cite{DL2012,DW2011}.

The parafermion CFT can be given by the following coset construction \cite{goddard1986,francesco2012}
\begin{align*}
    \dfrac{\widehat{\mathrm{SU}}(2)_N}{\widehat{\mathrm{U}}(1)_{2N}},
\end{align*}
where $\widehat{\mathrm{SU}}(2)_N$ is the affine Lie algebra $\widehat{\mathfrak{sl}}(2)$ at level $N$.
For readers in physical background, please refer to Appendix.~\ref{coset}.

It is known that for $\widehat{\mathfrak{sl}}(2)$, its Weyl module $V_{\widehat{\mathfrak{sl}}(2)_N}$ has a VOA structure with central charge $c_{\widehat{\mathfrak{sl}}(2)_N}=\frac{3N}{N+2}$ \cite{LL2004}.
It is worth mentioning that $V_{\widehat{\mathfrak{sl}}(2)_N}$ has a sub-VOA $V_{H}$ with a different central charge, called the Heisenberg VOA.
The central charge of $V_{H}$ is $1$.
There is a theorem \cite{FZ1992} tells us that the commutant $C_{V_{\widehat{\mathfrak{sl}}(2)_N}}(V_H)$ also has a VOA structure with the central charge
\begin{align*}
    c_{PF}=\dfrac{3N}{N+2}-1=\dfrac{2(N-1)}{N+2}.
\end{align*}

We are close enough to get the parafermion VOA.
Technically, after quotient the commutant $C_{V_{\widehat{\mathfrak{sl}}(2)}}(V_H)$ by the maximal ideal, we obtain a new VOA $V_{\mathrm{PF}_N}$, called the $parafermion$ $VOA$.
Unfortunately, the category of modules of the Heisenberg VOA is not finite \cite{LL2004}. However, there is another lattice VOA $V_L$ whose module category is finite and $C_{V_{\widehat{\mathfrak{sl}}(2)_N}}(V_L)=V_{\mathrm{PF}_N}$ \cite{ALY2019}.
Therefore, we can also obtain the parafermion VOA by replacing $V_L$ with $V_H$.

%consider the coset construction of $V_{\widehat{\mathfrak{sl}}(2)}$ with $V_L$ instead of $V_H$.

\subsection{UMTC \texorpdfstring{$\mathrm{PF}_N$}{PF N}}
It was proved that $V_{\mathrm{PF}_N}$ is a rational VOA \cite{DL2017}, thus the category $\mathrm{Mod}_{V_{\mathrm{PF}_N}}$ is a MTC.
Since the affine VOA with integral level and Heisenberg VOA is unitary, and the centralizer of a unitary sub VOA in a unitary VOA is still unitary \cite{DL2014}, the parafermion VOA is unitary. 

We use $\rep_N(\widehat{\mathfrak{sl}}(2))$ to denote the representation category of the affine Lie algebra $\widehat{\mathfrak{sl}}(2)$ at level $N$.
The coset construction can be given by the category of local right $A$-modules $(\rep_N(\widehat{\mathfrak{sl}}(2))\boxtimes \overline{\mathrm{Mod}_{V_L}})_A^{loc}$ of the category $\rep_N(\hat{\mathfrak{sl}}(2))\boxtimes \overline{\mathrm{Mod}_{V_L}}$ for some condensable algebra $A$ \cite{FFRS2004}.
Here $\mathrm{Mod}_{V_L}$ is the category of modules over the lattice VOA $V_L$, $\boxtimes$ is the Deligne tensor product whose physical meaning is stacking two topological orders together.
In order to write the category $\mathrm{PF}_N$ more explicitly,
we begin with the category $\rep_N(\hat{\mathfrak{sl}}(2))$.

From \cite{DNO2013}, the category $\rep_N(\hat{\mathfrak{sl}}(2))$ has the following data: 
\begin{itemize}
    \item simple objects: $\mathfrak{U}_\ell$, $0\leq \ell\leq N$,
    \item the fusion rules of simple objects:
    \begin{align*}
        \fU_\ell\otimes \fU_{\ell'}=\bigoplus^{\min(\ell,\ell')}_{r=\max(\ell+\ell'-N,0)}\fU_{\ell+\ell'-2r},
    \end{align*}
    \item the double braiding of $\fU_{\ell}$ and $\fU_{\ell'}$ is:
    \begin{align*}
        c_{\ell,\ell'}\circ c_{\ell',\ell}=\bigoplus_{s=\max(\ell'+\ell-N,0)}^{\min(\ell',\ell)}\mathrm{e}^{2\pi \mathrm{i}(h_{\ell'+\ell-2s}-h_{\ell}-h_{\ell'})},
    \end{align*}
    where $h_t=\frac{t(t+2)}{4(N+2)}$.
    \item the twist of simple objects:
    \begin{align}
        \theta_{\ell}=\mathrm{e}^{\ell(\ell+2)\frac{\pi\mathrm{i}}{2(N+2)}},
    \end{align}
\end{itemize}
The quantum dimension of simple objects are
    \begin{align*}
        \mathrm{dim}(\fU_{\ell})=\dfrac{q^{\ell+1}-q^{-\ell-1}}{q-q^{-1}},
    \end{align*}
    where $q=\mathrm{e}^{\frac{\pi\mathrm{i}}{N+2}}$.

The category $\mathrm{Mod}_{V_L}$ has the following data
\begin{itemize}
    \item simple objects: $W_m$, $0\leq m<2N$;
    \item the fusion rule of simple objects:
    \begin{align*}
        W_m\otimes W_{m'}\simeq W_{m+m'};
    \end{align*}
    \item the double braiding of simple objects $W_m$ and $W_{m'}$:
    \begin{align}
        c_{m',m}\circ c_{m,m'}=\mathrm{e}^{\frac{\pi\mathrm{i}}{N}mm'};
    \end{align}
    \item the twist of simple objects:
    \begin{align*}
        \theta_m=\mathrm{e}^{m^2\frac{\pi\mathrm{i}}{2N}}.
    \end{align*}
\end{itemize}

The condensable algebras in $\rep_N(\hat{\mathfrak{sl}}(2))\boxtimes\overline{\mathrm{Mod}_{V_L}}$ are classified by the quintuple $(A_1,A_2,\CC_1,\CD_1,\phi)$ \cite{DNO2013}, where $A_1$ is a condensable algebra in $\rep_N(\hat{\mathfrak{sl}}(2))$, $A_2$ is a condensable algebra in $\mathrm{Mod}_{V_L}$, $\CC_1$ is a fusion subcategory of $(\rep_N(\hat{\mathfrak{sl}}(2)))_{A_1}^{loc}$, $\CD_1$ is a fusion subcategory of $(\mathrm{Mod}_{V_L})_{A_2}^{loc}$ and $\phi$ is a braided equivalence between $\CD_1$ and $\overline{\CC_1}$.
If we find such a quintuple, we can write the corresponding condensable algebra explicitly:
\begin{align}
    A=\otimes^R(1)=\bigoplus_{x\in\mathrm{Irr}(\CC_1)}x^*\boxtimes x,
\end{align}
where $\otimes^R$ is the right adjoint of the tensor functor $\otimes:\CC_1\boxtimes \overline{\CC_1}\to \CC_1$.

Let $\CC_1$ be the fusion subcategory of $\rep_N(\hat{\mathfrak{sl}}(2))$ generated by the simple objects $\fU_0$ and $\fU_N$ and let $\CD_1$ be the fusion subcategory of $\overline{\mathrm{Mod}_{V_L}}$ generated by the simple objects $\overline{W_0}$ and $\overline{W_N}$.
Notice that the braiding of $\overline{W_N}$ is $\mathrm{e}^{-\frac{N\pi\mathrm{i}}{2}}$ and the braiding of $\fU_N$ is $\mathrm{e}^{\frac{-N\pi\mathrm{i}}{2}}$, then there is a natural braided equivalence $\phi: \CC_1\to\overline{\CD_1}$ which gives a condensable algebra
\begin{align}
    A=(\fU_0\boxtimes \overline{W_0})\oplus (\fU_N\boxtimes \overline{W_N}).
\end{align}

Now we need to find all local right $A$-modules of $\rep_N(\hat{\mathfrak{sl}}(2))\boxtimes\mathrm{Mod}_{V_L}$.
First we examine all so-called free $A$-modules $x\otimes A$.
Consider the adjunction
  \begin{align*}
    \hom_{(\rep_N(\hat{\mathfrak{sl}}(2))\boxtimes\mathrm{Mod}_{V_L})_A}(x\otimes A,y)\simeq \hom_{\rep_N(\hat{\mathfrak{sl}}(2))\boxtimes \mathrm{Mod}_{V_L}}(x, G(y)),
  \end{align*}
where $G$ is the forgetful functor.
It is not hard to see that all free modules are simple, and two free modules $(\fU_{\ell}\boxtimes \overline{W_m})\otimes A$ and $(\fU_{\ell'}\boxtimes \overline{W_{m'}})\otimes A$ are equivalent iff $\ell=\ell'$, $m=m'$ or $\ell=N-\ell'$, $m=N+m'$.
So there are $N(N+1)$ inequivalent simple objects in $(\rep_N(\hat{\mathfrak{sl}}(2))\otimes \overline{\mathrm{Mod}_{V_L}})_A$.

Let $\fM_{\ell,m}=(\fU_k\boxtimes \overline{W_m})\otimes A$ denote the simple objects in $(\rep_N(\hat{\mathfrak{sl}}(2))\boxtimes\overline{\mathrm{Mod}_{V_L}})_A$. 
We know that for two free modules $x\otimes A$ and $y\otimes A$, their relative tensor product $(x\otimes A)\otimes_A(y\otimes A)$ is isomorphic to $(x\otimes y)\otimes A$. 
Thus the fusion rule of $(\rep_N(\hat{\mathfrak{sl}}(2))\boxtimes\overline{\mathrm{Mod}_{V_L}})_A$ is actually the fusion rule of $\rep_N(\hat{\mathfrak{sl}}(2))\boxtimes\overline{\mathrm{Mod}_{V_L}}$, i.e.
\begin{align}
    \fM_{\ell,m}\otimes \fM_{\ell',m'}=\bigoplus\limits^{\min(\ell,\ell')}_{r=\max(\ell+\ell'-N,0)}\fM_{\ell+\ell'-2r,m+m'}.
\end{align}
Since simple objects of $\mathrm{PF}_N$ are the simple objects of $ (\rep_N(\hat{\mathfrak{sl}}(2))\boxtimes\overline{\mathrm{Mod}_{V_L}})_A$ that are local, we can say we have known the fusion rule of $\mathrm{PF}_N$, we just need to figure out all local free $A$-modules.
A free module $x\otimes A$ is local iff the double braiding $c_{A,x}\circ c_{x,A}$ is trivial.
The double braiding of $\rep_N(\hat{\mathfrak{sl}}(2))\boxtimes \overline{\mathrm{Mod}_{V_L}}$ is given by \cite{DNO2013}:
\begin{align}
    c_{\ell'm',\ell m}\circ c_{\ell m;\ell'm'}=\bigoplus_{s=\max(\ell+\ell'-N,0)}^{\min(\ell,\ell')}\mathrm{e}^{2\pi \mathrm{i}(h_{\ell+\ell'-2s}-h_{\ell}-h_{\ell'})}\mathrm{e}^{-\frac{\pi\mathrm{i}}{N}mm'}\id_{\ell+\ell'-2s,m+m'},
\end{align}
where $h_t:=\frac{t(t+2)}{4(N+2)}$.
Notice that the double braiding of $\fU_{\ell}\boxtimes \overline{W_m}$ with $A$ is trivial iff $\ell+m$ is even.
Then we find all simple objects $\fM_{\ell,m}$ of $\mathrm{PF}_N$ labeled by $0\leq \ell\leq N$, $0\leq m\leq 2N-1$ and $\ell+m\equiv 0 \mod 2$.
There are $N(N+1)/2$ inequivalent simple objects in $\mathrm{PF}_N$.

\section{Anyon condensation theory}\label{sec:any_cond_maths}
We give a mathematical description of anyon condensation theory \cite{Kong2014,KO2002}.
\begin{dfn}
    Let $\CC$ be a UMTC, an $algebra$ $A$ in $\CC$ is an object equipped with two morphisms $m:A\otimes A\to A$ and $\iota: 1_{\CC}\to A$ satisfying
    \begin{align*}
        m\circ(m\otimes\id_A)=m\circ(\id_A\otimes m),\\
        m\circ(\iota\otimes\id_A)=\id_A=m\circ(\id_A\otimes\iota).
    \end{align*}
    The algebra is called $commutative$ if $m=m\circ c_{A,A}$.
\end{dfn}

\begin{dfn}
    An algebra $A$ is called 
    \begin{itemize}
        \item $separable$ if $m:A\otimes A\to A$ splits as a $A$-$A$-bimodule homomorphism.
        \item $connected$ if $\dim \hom_{\CC}(1_{\CC},A)=1$;
        \item $condensable$ if $A$ is commutative connected separable;
        \item $lagrangian$ if $A$ is condensable and $\dim(A)^2=\dim(\CC)$.
    \end{itemize}
\end{dfn}

\begin{rmk}
    The connectedness means the vacuum is unique.
\end{rmk}

\begin{dfn}
    Let $A$ be an algebra in $\CC$.
    A right $A$-$module$ in $\CC$ is a pair $(M,\mu_M)$, where $M$ is a object in $\CC$ and $\mu_M:M\otimes A\to M$ is a morphism in $\CC$ such that 
    \begin{align*}
        \mu_M\circ(\mu_M\otimes\id_A)=\mu_M\circ(\id_M\otimes m),\\
        \id_A=\mu_M\circ(\id_M\otimes\iota).
    \end{align*}
\end{dfn}

\begin{dfn}
    A right $A$-module $M$ in $\CC$ is called a $local$ $A$-module if $\mu_M\circ c_{A,M}\circ c_{M,A}=\mu_M$.
\end{dfn}

Let $A$ be a condensable algebra in a UMTC $\CC$, we denote the category of right $A$-modules in $\CC$ by $\CC_A$ and denote the category of local right $A$-modules in $\CC$ by $\CC_A^{loc}$.
Then we have the following theorem \cite{KO2002}\cite{BEK2000}:
\begin{thm}
    Let $\CC$ be a UMTC, $A$ be a condensable algebra in $\CC$.
    Then $\CC_A$ is a UFC and $\CC_A^{loc}$ is a UMTC. 
\end{thm}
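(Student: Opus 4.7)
The plan is to verify the standard fusion/modular axioms for $\CC_A$ and $\CC_A^{loc}$ in turn, leveraging the three defining properties of a condensable algebra: commutativity (needed for the locality condition and for a well-behaved tensor on $\CC_A^{loc}$), connectedness (needed so that the tensor unit is simple), and separability (needed for semisimplicity).

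First I would equip $\CC_A$ with a monoidal structure. The tensor product of two right $A$-modules $(M,\mu_M)$ and $(N,\mu_N)$ is defined as the coequalizer of the two maps $M\otimes A\otimes N\rightrightarrows M\otimes N$ given by $\mu_M\otimes\id_N$ and $(\id_M\otimes\mu_N)\circ (\id_M\otimes c_{A,N})$; commutativity of $A$ is what lets the second map make sense as a right action. Separability guarantees that this coequalizer exists as a direct summand via an idempotent that splits because $m:A\otimes A\to A$ splits as an $A$--$A$-bimodule map; this is also the key input for showing $\CC_A$ is semisimple (pull any short exact sequence of $A$-modules back to $\CC$, split it there, then apply the idempotent). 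Connectedness $\dim\hom_\CC(1_\CC,A)=1$ gives $\dim\hom_{\CC_A}(A,A)=1$, so $A$ is a simple unit of $\CC_A$. Rigidity is inherited from $\CC$: the dual of $(M,\mu_M)$ is $M^*$ equipped with the action obtained by dualising $\mu_M$ and using the duality of $A$ (which exists because $A$ is a Frobenius algebra thanks to separability plus connectedness). Finite semisimplicity follows from finiteness of $\CC$ and the observation that each simple $A$-module is a summand of some free module $x\otimes A$.

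Next I would treat $\CC_A^{loc}$. The locality condition is closed under the relative tensor product: if $M$ and $N$ are local, then so is $M\otimes_A N$, because the double braiding with $A$ can be moved through the coequalizer using naturality of $c$ and locality of $M,N$ separately. Hence $\CC_A^{loc}$ is a fusion subcategory of $\CC_A$ containing $A$. The braiding of $\CC$ then descends: the map $c_{M,N}:M\otimes N\to N\otimes M$ for $M,N$ local becomes an $A$-module map after passing to $M\otimes_A N$, precisely because of the locality condition on $M$ (the two $A$-actions agree modulo the coequalizer). A routine check of the hexagon axioms at the level of $\CC$ transfers them to $\CC_A^{loc}$, giving a braided fusion category. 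Unitarity descends as well, because the $\dagger$-structure on $\CC$ restricts to $A$-module maps and is compatible with the splitting idempotent of $m$.

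The main obstacle — and in my view the nontrivial step — is the modularity, i.e.\ non-degeneracy of the induced braiding on $\CC_A^{loc}$. My plan is to reduce this to a dimension count rather than compute the $S$-matrix directly. Concretely I would prove (or cite) the identity
\begin{equation*}
    \dim(\CC_A)=\frac{\dim(\CC)}{\dim(A)},\qquad \dim(\CC_A^{loc})=\frac{\dim(\CC)}{\dim(A)^2},
\end{equation*}
the first by writing every simple of $\CC_A$ as a summand of $x\otimes A$ for $x\in\mathrm{Irr}(\CC)$ and using Frobenius reciprocity with the forgetful functor, and the second by observing that pulling back along the forgetful functor $\CC_A^{loc}\to\CC_A$ plus the Frobenius--Perron dimension argument of Kirillov--Ostrik and Davydov--Müger--Nikshych--Ostrik gives the stated formula. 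Müger's criterion then says that a braided fusion category $\CD$ containing a non-degenerate braided fusion subcategory $\CE$ satisfies $\CD\simeq \CE\boxtimes \CE'$ with $\CE'$ the centralizer of $\CE$; applied with $\CD=\CC$ and the embedding $\CC_A^{loc}\hookrightarrow \fZ_1(\CC_A)$ combined with the bulk-to-boundary dimension formula forces $\CC_A^{loc}$ to be non-degenerate. Together with the unitarity already established, this upgrades $\CC_A^{loc}$ to a UMTC and completes the proof.
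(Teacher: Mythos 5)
First, note that the paper does not actually prove this theorem: it is quoted with citations to Kirillov--Ostrik and B\"ockenhauer--Evans--Kawahigashi, so your sketch is being measured against the standard literature proof rather than against anything in the text. Most of your outline matches that standard proof and is sound: the coequalizer tensor product $\otimes_A$ (with commutativity of $A$ making the right-hand action legitimate and separability splitting the coequalizer), connectedness giving simplicity of the unit, duals inherited from $\CC$ via the Frobenius structure on $A$, finite semisimplicity from the fact that every simple module is a summand of a free module $x\otimes A$, and closure of the locality condition under $\otimes_A$ with the braiding of $\CC$ descending to $\CC_A^{loc}$.

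The gap is in the modularity step, which you yourself flag as the nontrivial one. Two problems. First, non-degeneracy of a braiding cannot be ``forced'' by a dimension count: the global dimension of a braided fusion category does not detect whether the $S$-matrix is invertible (for instance $\rep(\bZ_2)$ and the semion category have equal dimension but only the latter is modular), so the formula $\dim(\CC_A^{loc})=\dim(\CC)/\dim(A)^2$ is in the standard treatments a \emph{consequence} of modularity, not a route to it. Second, your application of M\"uger's decomposition theorem is scrambled: you set $\CD=\CC$, but $\CC_A^{loc}$ is not a subcategory of $\CC$, and in any case M\"uger's theorem requires the distinguished subcategory $\CE$ to be \emph{already known} non-degenerate --- which is exactly what you are trying to establish for $\CC_A^{loc}$. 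The correct deployment runs the other way: embed $\CC$ (which is known non-degenerate) into $\fZ_1(\CC_A)$ via $x\mapsto x\otimes A$ with its induced half-braiding, apply M\"uger's theorem with $\CD=\fZ_1(\CC_A)$ and $\CE$ the image of $\CC$ to obtain $\fZ_1(\CC_A)\simeq\CC\boxtimes\fZ_2(\CC,\fZ_1(\CC_A))$, prove the separate and genuinely nontrivial lemma that this centralizer is braided-equivalent to $\overline{\CC_A^{loc}}$, and then conclude non-degeneracy of $\CC_A^{loc}$ because a Deligne tensor factor of the non-degenerate category $\fZ_1(\CC_A)$ is non-degenerate; the dimension formula then drops out at the end. (Alternatively, follow Kirillov--Ostrik and compute the $S$-matrix of $\CC_A^{loc}$ directly in terms of that of $\CC$.) Either way, the step you hoped to dispatch by dimension counting is precisely where the real work lives.
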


\begin{thm}\label{thm:dimension}
    Let $\CC$ be a UMTC, $A$ be a condensable algebra in $\CC$.
    \begin{align*}
        \dim(\CC_A)=\dfrac{\dim(\CC)}{\dim(A)},\\
        \dim(\CC_A^{loc})=\dfrac{\dim(\CC)}{\dim(A)^2}.
    \end{align*}
\end{thm}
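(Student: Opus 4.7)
The plan is to use the free-module functor $F = -\otimes A \colon \CC \to \CC_A$ together with its right adjoint, the forgetful functor $U \colon \CC_A \to \CC$. The decisive preliminary is the quantum-dimension rescaling
\[
\dim_{\CC_A}(M) \;=\; \frac{\dim_{\CC}(U(M))}{\dim(A)}
\]
valid for every $A$-module $M$. Because $A$ is condensable (connected, commutative, separable), it inherits the structure of a symmetric Frobenius algebra in $\CC$; the evaluation and coevaluation morphisms in $\CC_A$ are obtained from those of $\CC$ by post-composition with the Frobenius counit/unit, and the normalization of the Frobenius form contributes exactly the factor $1/\dim(A)$.

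For the first identity I would then perform a double-count over multiplicities. By Frobenius reciprocity $\hom_{\CC_A}(x \otimes A, M) \cong \hom_{\CC}(x, U(M))$, the multiplicity $n_{x,M}$ of the simple $M \in \mathrm{Irr}(\CC_A)$ in $x \otimes A$ equals the multiplicity of $x \in \mathrm{Irr}(\CC)$ inside $U(M)$. Taking $\CC$-dimensions of $x \otimes A \cong \bigoplus_M n_{x,M}\, M$ yields $\dim(x)\dim(A) = \sum_M n_{x,M} \dim_\CC(M)$. Multiplying by $\dim(x)$, summing over $x \in \mathrm{Irr}(\CC)$, and interchanging the order of summation using $\dim_\CC(M) = \sum_x n_{x,M}\dim(x)$ gives
\[
\dim(\CC)\,\dim(A) \;=\; \sum_{M \in \mathrm{Irr}(\CC_A)} \dim_{\CC}(M)^2.
\]
Dividing by $\dim(A)^2$ and applying the rescaling formula converts the right-hand side into $\sum_M \dim_{\CC_A}(M)^2 = \dim(\CC_A)$, giving the first equality.

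For the local piece the cleanest route is via the Drinfeld center. I would invoke the Schauenburg--Kirillov--Ostrik equivalence
\[
\fZ_1(\CC_A) \;\simeq\; \CC \boxtimes \overline{\CC_A^{loc}},
\]
which is available because $\CC$ is modular and $A$ is condensable. Combined with M\"uger's identity $\dim(\fZ_1(\CD)) = \dim(\CD)^2$ for any spherical fusion $\CD$, this gives $\dim(\CC) \cdot \dim(\CC_A^{loc}) = \dim(\CC_A)^2$, and substituting the first identity yields $\dim(\CC_A^{loc}) = \dim(\CC)/\dim(A)^2$.

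The main obstacle, and the step I would devote the most care to, is the rescaling formula $\dim_{\CC_A}(M) = \dim_{\CC}(M)/\dim(A)$; once it is in place the rest of the argument is bookkeeping. A secondary subtlety is that the Drinfeld-center equivalence used for the local case is itself nontrivial, and if one wanted to avoid invoking it one could instead argue directly by restricting the double-count above to local modules, using the additional constraint that the monodromy $c_{A,M}\circ c_{M,A}$ is trivial to isolate the local summands and account for the extra factor of $\dim(A)$.
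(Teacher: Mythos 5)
The paper does not actually prove this theorem: it is quoted in Appendix~B as a known result from the anyon-condensation literature (Kirillov--Ostrik, B\"ockenhauer--Evans--Kawahigashi), so your proposal must be judged against the standard proofs there. For the first identity your argument is exactly the standard one and is correct. The rescaling $\dim_{\CC_A}(M)=\dim_{\CC}(U(M))/\dim(A)$ is Kirillov--Ostrik's Theorem~1.18, and you rightly flag it as the crux; your Frobenius-form justification is only a sketch, but the fact is solid. The double count is clean: $\hom_{\CC_A}(x\otimes A,M)\cong\hom_{\CC}(x,U(M))$ gives $n_{x,M}=[U(M):x]$, whence $\dim(\CC)\dim(A)=\sum_{M}\dim_{\CC}(U(M))^{2}$ and the claim follows. (In a UMTC all quantum dimensions are positive reals, so the manipulations are legitimate; note also that every simple $M$ enters the sum since $U(M)\neq 0$.)

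The second identity is where the gap sits. The equivalence $\fZ_1(\CC_A)\simeq\CC\boxtimes\overline{\CC_A^{loc}}$ is true, but its standard proof (DMNO, Cor.~3.30) establishes an embedding $\CC\boxtimes\overline{\CC_A^{loc}}\hookrightarrow\fZ_1(\CC_A)$ and then upgrades it to an equivalence by \emph{comparing global dimensions} --- i.e.\ by already knowing $\dim(\CC_A^{loc})=\dim(\CC)/\dim(A)^{2}$. Invoking that equivalence to derive the dimension formula is therefore circular unless you are careful about which proof of the equivalence you cite. The non-circular routes are: (i) M\"uger's centralizer-dimension theorem $\dim(\CD)\cdot\dim(\fZ_2(\CD,\CE))=\dim(\CE)$ applied to the image of $\CC$ inside the modular category $\fZ_1(\CC_A)$, whose centralizer is identified with $\overline{\CC_A^{loc}}$; or (ii) Kirillov--Ostrik's direct computation (their Thm.~4.5) using the $S$-matrix of the modular category $\CC$ to project onto local modules. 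Your fallback suggestion --- ``restrict the double count to local modules using triviality of the monodromy'' --- does not work as stated, because the adjunction argument only controls free modules, a free module $x\otimes A$ can contain local and non-local simple summands simultaneously, and nothing in your count isolates the local ones; making this precise is essentially route (ii) and genuinely requires modularity of $\CC$, not just the fusion-category bookkeeping that sufficed for the first identity.
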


\begin{crl}
    Let $\CC$ be a UMTC, $A$ be a lagrangian algebra in $\CC$.
    Then $\CC_A^{loc}\simeq \mathrm{Hilb}$ where $\dim(\mathrm{Hilb})=1$.
\end{crl}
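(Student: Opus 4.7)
The plan is to deduce this corollary directly from the dimension formula in Theorem \ref{thm:dimension} together with the definition of a lagrangian algebra. First I would compute: since $A$ is lagrangian we have $\dim(A)^2 = \dim(\CC)$ by definition, so substituting into the formula $\dim(\CC_A^{loc}) = \dim(\CC)/\dim(A)^2$ immediately gives $\dim(\CC_A^{loc}) = 1$.

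Next I would argue that a UMTC of total dimension $1$ must be equivalent to $\mathbf{H}$. Recall that for any UMTC $\CD$, one has $\dim(\CD) = \sum_{x \in \mathrm{Irr}(\CD)} (\fpdim x)^2$, where each simple object $x$ satisfies $\fpdim(x) \geq 1$, with equality when $x$ is invertible; moreover $\dim(\one) = 1$. Therefore $\dim(\CD) = 1$ forces $\mathrm{Irr}(\CD) = \{\one\}$, i.e.\ $\CD$ has a unique simple object, which is the tensor unit. A unitary fusion category with a single simple object (the unit) is equivalent as a unitary modular tensor category to $\mathbf{H}$, since every object is then a finite direct sum of copies of $\one$, the hom spaces are finite-dimensional Hilbert spaces, and the braiding/associator/twist are forced to be trivial by coherence (the twist of $\one$ is $1$ and the braiding $c_{\one,\one} = \id$).

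Applying this to $\CD = \CC_A^{loc}$, which is a UMTC by the theorem preceding the corollary, we conclude $\CC_A^{loc} \simeq \mathbf{H}$, i.e.\ $\mathrm{Hilb}$ in the notation of the corollary.

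I do not expect any genuine obstacle here; the corollary is essentially a one-line consequence of the dimension formula combined with the standard fact that the only UMTC of dimension one is the trivial one. The only mild point requiring care is the identification ``single simple object $\Rightarrow$ equivalence with $\mathbf{H}$,'' which is routine but worth stating to close the argument cleanly.
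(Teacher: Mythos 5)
Your proposal is correct and follows exactly the route the paper intends: the corollary is stated immediately after Theorem \ref{thm:dimension} precisely so that substituting the lagrangian condition $\dim(A)^2=\dim(\CC)$ into $\dim(\CC_A^{loc})=\dim(\CC)/\dim(A)^2$ yields total dimension $1$, which forces $\CC_A^{loc}\simeq\mathbf{H}$. Your extra remark justifying that a UMTC of dimension one is trivial (via $\fpdim(x)\geq 1$ for every simple $x$) is a standard fact the paper leaves implicit, and your treatment of it is fine.
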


\section{Proof of equivalence for odd case}\label{equivalence}
\subsection{Pre-metric group and pointed braided fusion categories}\label{Sec:PBFC}
We first review the notion of a pre-metric group \cite{DGNO2010}.

\begin{dfn}
    Let $G$ be an abelian group.
    A \textit{quadratic form} on $G$ is a map $q:G\to\bC^{\times}$ such that $q(g)=q(g^{-1})$ and $b(g,h):=\dfrac{q(gh)}{q(g)q(h)}$ is a bicharacter.
    A quadratic form $q$ is \textit{non-degenerate} if the associated bicharacter is non-degenerate.
\end{dfn}

\begin{dfn}
    A $pre$-$metric$ $group$ is a pair $(G,q)$ where $G$ is a finite abelian group and $q:G\to\bC^{\times}$ is a quadratic form.
    A $metric$ $group$ is a pre-metric group such that $q$ is non-degenerate.
\end{dfn}

\begin{dfn}
    An $orthogonal$ $homomorphism$ between pre-metric groups $(G,q)$ and $(G',q')$ is a group homomorphism $f: G\to G'$ such that $q'\circ f=q$.
\end{dfn}

\begin{prp}
    Pre-metric groups and orthogonal homomorphisms form a category.
\end{prp}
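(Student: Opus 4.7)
The plan is to verify the two defining data of a category (identities and composition) directly, and then inherit associativity and the unit laws from the category of abelian groups, in which pre-metric groups and orthogonal homomorphisms sit as a (non-full) subcategory.

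First I would check that orthogonal homomorphisms are closed under composition. Given orthogonal homomorphisms $f\colon (G,q)\to(G',q')$ and $g\colon (G',q')\to(G'',q'')$, the composite $g\circ f\colon G\to G''$ is a group homomorphism because the category of abelian groups has composition, and it satisfies the orthogonality condition since
\begin{equation*}
    q''\circ(g\circ f)=(q''\circ g)\circ f=q'\circ f=q,
\end{equation*}
using first the orthogonality of $g$ and then that of $f$. Next I would check that for each pre-metric group $(G,q)$ the identity map $\id_G\colon G\to G$ is an orthogonal homomorphism, which is immediate from $q\circ\id_G=q$.

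Finally, associativity $(h\circ g)\circ f=h\circ(g\circ f)$ and the unit laws $\id\circ f=f=f\circ\id$ hold because they already hold in the category of abelian groups, and the forgetful assignment $(G,q)\mapsto G$ sends orthogonal homomorphisms to underlying group homomorphisms without altering the underlying set-level data. Hence, no new verification is required at this step beyond citing the category structure of abelian groups.

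There is no real obstacle here; the statement is a routine bookkeeping check, and the only thing to be careful about is to confirm that the orthogonality condition $q'\circ f=q$ is strictly preserved (not just up to some equivalence) so that the subcategory structure is well-defined on the nose. Because $q$ and $q'$ are honest functions to $\bC^{\times}$ and the condition is an equality of functions, this is automatic.
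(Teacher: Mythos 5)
Your verification is correct and is exactly the routine check one would write: closure of the orthogonality condition under composition, identities being orthogonal, and associativity/unit laws inherited from the category of abelian groups. The paper states this proposition without proof (treating it as immediate), so your argument supplies precisely the standard omitted details and there is nothing to compare against.
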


Pre-metric groups can be used to describe $pointed$ $braided$ $fusion$ $categories$. 
\begin{dfn}
    A tensor category $\CC$ is $pointed$ if every simple object of 
    $\CC$ is invertible.
\end{dfn}
Let $\CC$ be a pointed braided fusion category.
The isomorphism class of simple objects of $\CC$ forms a finite abelian group $G$.
For $x\in G$, we define $q_{\CC}(x)=c_{x,x}$ where $c_{x,x}$ is the braiding of $x$ with itself.
  
\begin{lem}
      The function $q_{\CC}:G\to\bC^{\times}$ is a quadratic form.
\end{lem}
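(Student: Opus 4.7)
The goal is to verify the two defining properties of a quadratic form for the function $q_{\CC}(x) = c_{x,x}$: first, that $q_{\CC}$ takes values in $\bC^{\times}$ (and is well-defined); second, that $q_{\CC}(x^{-1}) = q_{\CC}(x)$; and third, that the associated map $b(x,y) := q_{\CC}(xy)/(q_{\CC}(x)q_{\CC}(y))$ is a bicharacter $G \times G \to \bC^{\times}$. The plan is to establish these in the order (well-definedness) $\to$ (bicharacter property) $\to$ (inverse invariance), because the last item will follow from the first two together with $q_{\CC}(1) = 1$.

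First, for an invertible object $x$, the tensor product $x \otimes x$ is again simple, so $\mathrm{End}_{\CC}(x\otimes x) \cong \bC$, and therefore $c_{x,x}\colon x\otimes x \to x\otimes x$ is a nonzero scalar. This makes $q_{\CC}\colon G \to \bC^{\times}$ well-defined; the value depends only on the isomorphism class of $x$, since for any isomorphism $\phi\colon x \xrightarrow{\sim} x'$ between invertible simples, naturality of the braiding forces $(\phi\otimes\phi)\circ c_{x,x} = c_{x',x'}\circ(\phi\otimes\phi)$, so the scalars agree.

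Second, I will apply the hexagon axioms to express $c_{xy,\,xy}$ in terms of the basic braidings. Using the standard identities $c_{a\otimes b,\,c} = (c_{a,c}\otimes \id_b)\circ(\id_a\otimes c_{b,c})$ and $c_{a,\,b\otimes c} = (\id_b\otimes c_{a,c})\circ(c_{a,b}\otimes\id_c)$, and the fact that on invertible objects every such composite reduces to a product of scalars, I obtain
\[
c_{xy,\,xy} \;=\; c_{x,x}\, c_{y,y}\, c_{x,y}\, c_{y,x},
\]
i.e.\ $q_{\CC}(xy) = q_{\CC}(x)\,q_{\CC}(y)\,b(x,y)$ with $b(x,y) := c_{x,y}\circ c_{y,x}$ regarded as a scalar. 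Applying the same hexagon identities again (now with all three arguments invertible), the double-braiding $b$ splits as $b(xy,z) = b(x,z)\,b(y,z)$ and similarly in the second argument, so $b$ is a bicharacter on $G$.

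Third, $q_{\CC}(1) = c_{1,1} = 1$ by the triangle coherence, so from $q_{\CC}(x\cdot x^{-1}) = q_{\CC}(x)\,q_{\CC}(x^{-1})\,b(x, x^{-1}) = 1$ combined with the bicharacter identity $b(x,x)\,b(x,x^{-1}) = b(x,1) = 1$ (whence $b(x,x^{-1}) = b(x,x)^{-1} = q_{\CC}(x)^{-2}$), a short substitution gives $q_{\CC}(x^{-1}) = q_{\CC}(x)$. The main obstacle, if any, is purely bookkeeping: translating the hexagon diagrams into scalar identities while keeping track of the associativity and unit constraints; none of the steps require the ribbon structure or unitarity, only the braided fusion hypothesis together with the pointedness of $\CC$.
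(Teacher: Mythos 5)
Your argument is correct, but note that the paper does not actually prove this lemma: its ``proof'' is the single line ``See \cite{EGNO2016}.'' What you have written is essentially the standard argument from that reference (the Eilenberg--MacLane-type computation), so you are supplying the details the paper delegates to the literature rather than diverging from it. The structure is right: $x\otimes x$ simple $\Rightarrow$ $c_{x,x}$ is a nonzero scalar; the hexagons give $q(xy)=q(x)q(y)b(x,y)$ with $b$ the double-braiding scalar; the hexagons again give multiplicativity of $b$ in each slot; and $q(x^{-1})=q(x)$ then follows from $q(\one)=1$ and $b(x,x)=q(x)^2$.

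The one thin spot is the phrase ``every such composite reduces to a product of scalars.'' On invertible objects the associator also contributes scalars (a $3$-cocycle $\omega$), so the naive identities $c_{xy,z}=c_{x,z}c_{y,z}$ and $c_{xy,xy}=c_{x,x}c_{y,y}c_{x,y}c_{y,x}$ hold only after these $\omega$-factors are seen to cancel. This is not a gap in the approach, but it does need a sentence: either invoke strictification (every braided fusion category is braided equivalent to one with trivial associators, and the scalars $c_{x,x}$ are invariants), or observe that in the bicharacter computation the $\omega$-contributions from the two hexagons are exactly inverse to each other, while the quadratic identity for $q$ additionally uses the cocycle condition on $\omega$. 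With that remark added, your proof is complete.
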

\begin{proof}
    See \cite{EGNO2016}.
\end{proof}
This lemma shows that each pointed braided fusion category gives a pre-metric group.
Conversely, we have
\begin{lem}
    For any pre-metric group $(G, q)$ there exists a unique up to a braided equivalence pointed braided fusion category $\CC(G, q)$ such that the group of isomorphism classes of simple objects is $G$ and the associated quadratic form is $q$
\end{lem}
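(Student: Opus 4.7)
The plan is to produce $\CC(G,q)$ explicitly as a skeletal pointed braided fusion category and then invoke the Eilenberg--Mac Lane classification of braided structures on pointed fusion categories to get both existence and uniqueness.

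First, I would fix the skeletal model. Let the simple objects of $\CC(G,q)$ be indexed by the elements of $G$, with
\[
\hom_{\CC(G,q)}(x,y) \;=\; \begin{cases}\bC & x=y,\\ 0 & x\ne y,\end{cases}
\]
and define $x\otimes y := xy$ on simples, extended bilinearly. A monoidal structure on such a category is the same as a normalized $3$-cocycle $\alpha\in Z^3(G,\bC^\times)$ supplying the associator $\alpha_{x,y,z}\in\bC^\times$, and a braiding compatible with $\alpha$ is the same as a function $c\colon G\times G\to \bC^\times$ making $(\alpha,c)$ an \emph{abelian $3$-cocycle} in the sense of Eilenberg--Mac Lane, i.e.\ satisfying the two hexagon identities together with the pentagon. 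So the problem reduces to constructing $(\alpha,c)$ whose associated self-braiding $x\mapsto c(x,x)$ equals $q$, and showing any two such data give braided equivalent categories.

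Next, I would invoke the Eilenberg--Mac Lane theorem: the trace map
\[
\mathrm{tr}\colon H^3_{\mathrm{ab}}(G,\bC^\times)\;\xrightarrow{\;\sim\;}\;\mathrm{Quad}(G,\bC^\times),\qquad [(\alpha,c)]\longmapsto \bigl(x\mapsto c(x,x)\bigr),
\]
is an isomorphism of abelian groups. Taking $[(\alpha,c)]$ to be a preimage of $q$ under $\mathrm{tr}$ yields $\CC(G,q)$ and proves existence. For uniqueness, suppose $\CC$ and $\CC'$ are two pointed braided fusion categories with the same group $G$ of isomorphism classes of simples and the same quadratic form $q$. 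Passing to skeletal models, each is determined by an abelian $3$-cocycle, and the two cocycles have the same image $q$ under $\mathrm{tr}$; hence they differ by an abelian $3$-coboundary, which is exactly the data needed to build a braided monoidal equivalence $\CC\simeq\CC'$ (the $2$-cochain providing the tensor structure on the identity functor).

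The only genuine content beyond bookkeeping is the surjectivity half of $\mathrm{tr}$: given a quadratic form $q$, one must exhibit an abelian $3$-cocycle with self-braiding $q$. I expect this to be the main obstacle in a self-contained write-up. The standard route is to decompose $G$ as a direct sum of cyclic groups $\bZ/n_i$ and treat each summand separately: for a cyclic factor $\bZ/n$ with generator $x$ and prescribed $q(x)=\zeta$ (a root of unity whose order divides $2n$), one verifies by direct calculation that the formulas
\[
\alpha(x^a,x^b,x^c) \;=\; \zeta^{\,a\lfloor (b+c)/n\rfloor},\qquad c(x^a,x^b) \;=\; \zeta^{\,ab},
\]
with indices taken in $\{0,\dots,n-1\}$, define an abelian $3$-cocycle with $c(x,x)=\zeta=q(x)$; the cross-terms between distinct cyclic factors are handled by the bicharacter $b(g,h)=q(gh)/(q(g)q(h))$. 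Checking the pentagon and two hexagons is routine but tedious, and this is exactly where the choice of $q$ enters. Once this is in place, existence and uniqueness follow as above, and the lemma is proved.
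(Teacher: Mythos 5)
Your overall strategy is the standard one: the paper itself offers no proof of this lemma (it is quoted from the Joyal--Street/Eilenberg--Mac Lane classification, cf.\ \cite{JS1993} and \cite{DGNO2010}), and what you propose --- skeletalize, identify braided structures with abelian $3$-cocycles, and use that the trace map $H^3_{\mathrm{ab}}(G,\bC^\times)\to \mathrm{Quad}(G,\bC^\times)$ is an isomorphism --- is exactly how the cited sources prove it. The reduction of uniqueness to ``same trace $\Rightarrow$ differ by an abelian $3$-coboundary $\Rightarrow$ braided equivalence inducing the identity on $G$'' is also correct.

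However, the one step you yourself single out as ``the only genuine content beyond bookkeeping'' --- exhibiting an abelian $3$-cocycle with prescribed trace on a cyclic factor --- is where your explicit formula fails. With $c(x^a,x^b)=\zeta^{ab}$ (exponents computed from representatives in $\{0,\dots,n-1\}$), the first hexagon forces
\[
\alpha(x^b,x^c,x^a)\,\alpha(x^a,x^b,x^c)\,\alpha(x^b,x^a,x^c)^{-1}
\;=\;\frac{c(x^a,x^b)\,c(x^a,x^c)}{c(x^a,x^{\,\overline{b+c}})}
\;=\;\zeta^{\,a\bigl((b+c)-\overline{b+c}\bigr)}\;=\;\zeta^{\,na\lfloor (b+c)/n\rfloor},
\]
whereas your $\alpha(x^a,x^b,x^c)=\zeta^{\,a\lfloor (b+c)/n\rfloor}$ makes the left-hand side equal to $\zeta^{\,a\lfloor (b+c)/n\rfloor}$; these differ by $\zeta^{\,(n-1)a\lfloor(b+c)/n\rfloor}$, which is nontrivial in general. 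The correct representative is $\alpha(x^a,x^b,x^c)=\zeta^{\,na\lfloor (b+c)/n\rfloor}$. A second, smaller imprecision: for $n$ odd the constraint on $\zeta=q(x)$ is that its order divide $n$, not merely $2n$ (well-definedness of $q(x^a)=\zeta^{a^2}$ requires $\zeta^{\,n\gcd(2,n)}=1$); as stated, $n=3$ and $\zeta=e^{2\pi\mathrm{i}/6}$ would satisfy your condition but does not define a quadratic form on $\bZ/3$. Both points are repairable without changing the architecture of the argument, but as written the verification of the pentagon and hexagons for your cocycle would not go through.
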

And we can eventually prove that 
\begin{thm}[\cite{JS1993}]
    The category of pointed braided monoidal categories is equivalent to the category of pre-metric groups.
\end{thm}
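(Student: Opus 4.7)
The plan is to establish the equivalence by constructing explicit functors in both directions and exhibiting natural isomorphisms witnessing that they are quasi-inverse. One direction is already built into the preceding two lemmas: sending a pointed braided fusion category $\CC$ to the pre-metric group $(G_\CC, q_\CC)$, where $G_\CC$ is the group of isomorphism classes of simple objects under tensor product and $q_\CC(x) = c_{x,x}$. On morphisms, a braided monoidal functor $F\colon \CC \to \CC'$ restricts to a map $G_\CC \to G_{\CC'}$ on isomorphism classes (since $F$ sends invertible simples to invertible simples and preserves tensor products up to isomorphism), and the braiding-preservation axiom $F(c_{x,x}) = c'_{F(x),F(x)}$ forces this map to be orthogonal. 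I would first verify carefully that this assignment respects composition and identities, giving a functor $\Phi$.

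Next, I would construct the reverse functor $\Psi$. Given $(G,q)$, take $\CC(G,q)$ from the previous lemma; on morphisms, given an orthogonal homomorphism $f\colon (G,q)\to(G',q')$, I would promote $f$ to a strict braided monoidal functor by setting $\Psi(f)(x) := f(x)$ on simple objects and extending linearly, with tensorator and braiding-compatibility isomorphisms chosen from the data of $\CC(G',q')$. The crucial point is that the associator and braiding of $\CC(G,q)$ are determined (up to unique-up-to-iso choice) by the Eilenberg--MacLane classification: braided monoidal structures on the pointed skeleton $\mathrm{Vec}_G$ are classified by the abelian cohomology group $H^3_{ab}(G,\bC^\times)$, and Eilenberg--MacLane's theorem identifies this with the group of quadratic forms $G \to \bC^\times$ via $(\alpha,c) \mapsto q(x) = c(x,x)$. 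Because $f$ is orthogonal, the pulled-back cocycle $f^*(\alpha', c')$ is cohomologous to $(\alpha,c)$, and any cohomology witnessing this gives the required monoidal coherence data for $\Psi(f)$.

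I would then exhibit natural isomorphisms $\Phi \circ \Psi \cong \id$ and $\Psi \circ \Phi \cong \id$. The first is essentially tautological from the construction of $\CC(G,q)$, since by design its group of simples is $G$ and its self-braiding quadratic form is $q$. The second is the content of the uniqueness clause of the preceding lemma: any pointed braided fusion category is braided equivalent to $\CC(G_\CC, q_\CC)$, and one must choose such equivalences naturally in $\CC$ (which can be done by picking representatives of isomorphism classes and transporting structure).

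The main obstacle is the coherence bookkeeping in the second direction, specifically showing that the cohomological choices involved in lifting an orthogonal homomorphism to a braided monoidal functor do not spoil functoriality of $\Psi$. This reduces to the Eilenberg--MacLane computation $H^3_{ab}(G,\bC^\times) \cong \mathrm{Quad}(G,\bC^\times)$ and the fact that different choices of coboundary yield monoidally isomorphic functors, so $\Psi$ is well-defined up to natural isomorphism; once framed this way, the functoriality and the two natural isomorphisms follow from standard 2-categorical arguments about skeleta of pointed braided fusion categories.
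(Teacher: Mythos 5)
The paper does not prove this statement; it is quoted from Joyal--Street \cite{JS1993} as a known classification result. Your proposal is essentially the standard argument underlying that citation: the forward functor $\CC \mapsto (G_{\CC}, q_{\CC})$ together with the Eilenberg--MacLane identification $H^3_{ab}(G,\bC^{\times}) \simeq \mathrm{Quad}(G,\bC^{\times})$ to build the quasi-inverse, and you correctly isolate the only delicate point, namely that the lift of an orthogonal homomorphism to a braided monoidal functor involves a non-canonical choice of coboundary. The one thing worth stating explicitly rather than burying in ``standard 2-categorical arguments'' is that the equivalence only holds as a statement of 1-categories if morphisms on the categorical side are taken to be isomorphism classes of braided monoidal functors (otherwise monoidal natural automorphisms of the identity, i.e.\ characters of $G$, have no counterpart on the pre-metric group side); with that convention fixed, your argument goes through and agrees with the cited source.
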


\begin{expl}
    $\rep(\bZ_N)$ is a pointed symmetric fusion category.
    So there is a pre-metric group $(\bZ_N,q_0)$ correspond to $\rep(\bZ_N)$ where $q_0(a)=1$ for each $a\in\bZ_N$ since $c_{x,x}=\id_x$ for each $x\in\rep(\bZ_N)$.
    So this pre-metric group is degenerate.
\end{expl}

Moreover if the pointed braided fusion category $\CC$ has a twist such that it becomes a pre-modular category, the corresponding pre-metric group $(G,q)$ should also be equipped with a character $\chi: G\to\bk^{\times}$ to capture the information of twist \cite{DGNO2010}.
The twist $\theta$ and character is related by the following formula
\begin{align}
    \theta(x)=q(x)\chi(x).
\end{align}
We call the triple $(G,q,\chi)$ the $twisted$ $pre$-$metric$ $group$ and denote the corresponding pointed pre-modular category as $\CC(G,q,\chi)$.
There is also a bijection between isomorphism classes of twisted pre-metric group and pointed pre-modular categories up to pre-modular equivalence.

When a pointed pre-modular category is non-degenerate, that is, a pointed modular tensor category, the corresponding twisted pre-metric group is a twisted metric group.
We can prove that if there is an isomorphism between two twisted metric groups, then the corresponding pointed modular tensor categories are equivalent.

\begin{expl}
    It is clear $\fZ_1(\rep(\bZ_N))$ is pointed, i.e. $\cO_{\alpha,\beta}$ has inverse $\cO_{N-\alpha,N-\beta}$, and all the simple objects and fusion rules of $\fZ_1(\rep(\bZ_N))$ is described by the abelian group $\bZ_{N}\times \bZ_N$.
    As the consequence, the pointed modular tensor category $\fZ_1(\rep(\bZ_N))$ is described by a twisted metric group $(\bZ_N\times \bZ_N,q,\tau)$ where $q(\alpha,\beta)=\mathrm{e}^{\frac{2\pi\mathrm{i}}{N}\alpha\beta}$ is a non-degenerate quadratic form on $\bZ_N\times \bZ_N$ and $\tau(\alpha,\beta)=1$ is a character.
    %The detailed definition is written in Appendix.~\ref{Sec:PBFC}.
    The correspondence is listed as follows:
    \begin{itemize}
        \item $\cO_{\alpha,\beta}\sim (\alpha,\beta)\in \bZ_N\times \bZ_N$;
        \item fusion rule of $\fZ_1(\rep(\bZ_N))$ is the group multiplication of $\bZ_N\times \bZ_N$;
        \item the braiding $c$ corresponds to the quadratic form $q$.
        \item the twist $\theta$ corresponds to the character $\tau$.
    \end{itemize} 
\end{expl}

\subsection{The main theorem}
In this section, we prove that there is an equivalence of MTCs, $\fZ_1(\mathrm{PF}_N)_B^{loc}\simeq \fZ_1(\rep(\mathbb{Z}_N))$ for $N=2k+1$.

\begin{dfn}
    Let $\CC$ be a braided monoidal category, $\CD$ be its subcategory, the $centralizer$ $\fZ_2(\CD,\CC)$ of $\CD$ in ${\CC}$ consists of all objects of $\CC$ that commutes with each object in $\CD$, i.e. $\fZ_2(\CD,\CC):=\{y\in\CC\mid c_{x,y}\circ c_{y,x}=\id_{y\otimes x},\,\, \forall x\in \CD\}$.
\end{dfn}

\begin{dfn}
    Let $\CC$ be a braided monoidal category, then the \textit{M{\"u}ger} $center$ $\fZ_2(\CC)$ of $\CC$ is the centralizer $\fZ_2(\CC,\CC)$.
\end{dfn}

\begin{dfn}
    A braided fusion category $\CC$ is called $non$-$degenerate$ if its M{\"u}ger center is trivial, i.e. $\fZ_2(\CC)\simeq \rVec$.
\end{dfn}

Let $\CX$ denote the braided fusion category generated by $\fU_0, \fU_N\in\rep_N(\hat{\mathfrak{sl}}(2))$.
\begin{prp}
    For $N=2k+1$, $\CX$ is non-degenerate.
\end{prp}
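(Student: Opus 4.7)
The plan is to observe that $\CX$ is a pointed braided fusion category on the group $\bZ_2$, and then to show via the explicit double-braiding formula that the non-unit object is not transparent when $N$ is odd. Consequently the M\"uger center contains only the unit, i.e.\ $\fZ_2(\CX)\simeq\rVec$.

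First I would identify $\CX$ as a pointed category. By the fusion rule
$\fU_N\otimes \fU_N=\fU_0$ (obtained from
$\fU_\ell\otimes\fU_{\ell'}=\bigoplus_{r=\max(\ell+\ell'-N,0)}^{\min(\ell,\ell')}\fU_{\ell+\ell'-2r}$
at $\ell=\ell'=N$, where the summation collapses to $r=N$),
the subcategory generated by $\fU_0$ and $\fU_N$ has exactly two isomorphism classes of simples, and each is invertible. Hence $\CX$ is pointed with underlying abelian group $\bZ_2=\{\fU_0,\fU_N\}$, and it corresponds, in the dictionary of Section~\ref{Sec:PBFC}, to a (twisted) pre-metric group $(\bZ_2,q,\chi)$.

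Next I would compute the self-double-braiding of $\fU_N$ directly from the formula
\[
c_{\ell,\ell'}\circ c_{\ell',\ell}=\bigoplus_{s=\max(\ell+\ell'-N,0)}^{\min(\ell,\ell')}\mathrm{e}^{2\pi\mathrm{i}(h_{\ell+\ell'-2s}-h_\ell-h_{\ell'})},\qquad h_t=\frac{t(t+2)}{4(N+2)},
\]
at $\ell=\ell'=N$, where again only $s=N$ contributes. This gives
$c_{\fU_N,\fU_N}\circ c_{\fU_N,\fU_N}=\mathrm{e}^{2\pi\mathrm{i}(h_0-2h_N)}=\mathrm{e}^{-\pi\mathrm{i}N}$.
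For $N=2k+1$ this equals $-1\neq 1$, so $\fU_N$ does not double-braid trivially with itself.

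Finally, to conclude non-degeneracy, I would argue that for a pointed braided fusion category on $\bZ_2$, the M\"uger center is trivial precisely when the non-trivial simple object fails to be transparent. Indeed, if $\fU_N$ belonged to $\fZ_2(\CX)$ then by definition its double braiding with every object, in particular with itself, would be the identity, contradicting the calculation above. Equivalently, in pre-metric-group language, the associated bicharacter
$b(g,h)=q(gh)/(q(g)q(h))$ on $\bZ_2$ satisfies $b(\fU_N,\fU_N)=q(\fU_N)^{-2}=c_{\fU_N,\fU_N}\circ c_{\fU_N,\fU_N}^{-1}\cdot(\ldots)$ and is non-degenerate iff $c_{\fU_N,\fU_N}\circ c_{\fU_N,\fU_N}\neq 1$. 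Either formulation yields $\fZ_2(\CX)\simeq\rVec$, as desired.

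There is essentially no obstacle here; the only mild care needed is to verify that the fusion-rule computation does force $\CX$ to be pointed of order two (so that checking a single double braiding suffices), and to note that the argument genuinely uses the parity of $N$: for $N=2k$ the same computation gives $+1$, reflecting the well-known fact that $\fU_N$ is transparent at even level and that $\CX$ degenerates to a symmetric subcategory in that case.
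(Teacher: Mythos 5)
Your proposal is correct and follows essentially the same route as the paper: the paper likewise observes that the only possibly non-trivial double braiding in $\CX$ is $c_{N,N}\circ c_{N,N}=\mathrm{e}^{2\pi\mathrm{i}\frac{N}{2}}\id_0$, which equals $-1$ for $N=2k+1$, so the M\"uger center is $\rVec$ (your $\mathrm{e}^{-\pi\mathrm{i}N}$ agrees with this for integer $N$). Your additional remarks identifying $\CX$ as pointed on $\bZ_2$ and explaining why one double braiding suffices merely make explicit what the paper leaves implicit.
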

\begin{proof}
    The only possibly non-trivial double braiding in $\CX$ is 
    \begin{align*}
        c_{N,N}\circ c_{N,N}=e^{2\pi\mathrm{i}\frac{N}{2}}\id_{0}.
    \end{align*}
    For $N=2k+1$, this double braiding equals to $-1$.
    Therefore the M{\"u}ger center of $\CX$ is equivalent to $\rVec$.
\end{proof}

\begin{prp}
    The centralizer of $\CX$ in $\rep_{N}(\hat{\mathfrak{sl}}(2))$ is the full subcategory  $\CR^{even}$ generated by simple objects $\fU_{2s}$, $s=0,\dots, k$. 
\end{prp}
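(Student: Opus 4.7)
The plan is to reduce the centralizer condition to a single computation with $\fU_N$, and then evaluate the double braiding scalar via the explicit formula for $h_t$.

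First, since centralizing is closed under tensor products and preserves the identity morphism, a simple object of $\rep_N(\hat{\mathfrak{sl}}(2))$ centralizes the whole subcategory $\CX$ if and only if it centralizes each generator. The unit $\fU_0$ is centralized by every object, so the only real condition is that the double braiding $c_{N,\ell}\circ c_{\ell,N}$ equals $\id_{\fU_\ell\otimes\fU_N}$.

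Next, I would use the fusion rule to see that $\fU_\ell\otimes\fU_N$ has a single simple summand: taking $\ell'=N$ in the fusion formula forces $r=\ell$, giving $\fU_\ell\otimes\fU_N\simeq\fU_{N-\ell}$. Hence the double braiding given in the excerpt collapses to the scalar
\begin{equation*}
c_{N,\ell}\circ c_{\ell,N}=\mathrm{e}^{2\pi\mathrm{i}(h_{N-\ell}-h_\ell-h_N)}\,\id_{\fU_{N-\ell}}.
\end{equation*}
A direct calculation using $h_t=\frac{t(t+2)}{4(N+2)}$ yields
\begin{equation*}
h_{N-\ell}-h_\ell-h_N=\frac{(N-\ell)(N-\ell+2)-\ell(\ell+2)-N(N+2)}{4(N+2)}=\frac{-2\ell(N+2)}{4(N+2)}=-\frac{\ell}{2},
\end{equation*}
so the double braiding equals $\mathrm{e}^{-\pi\mathrm{i}\ell}=(-1)^\ell$.

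Therefore $\fU_\ell$ centralizes $\fU_N$ precisely when $\ell$ is even. Since $0\le\ell\le N=2k+1$, the even values are exactly $\ell=2s$ for $s=0,1,\dots,k$, which are the simple objects generating $\CR^{\mathrm{even}}$. This shows the centralizer coincides with $\CR^{\mathrm{even}}$. There is no real obstacle here beyond the short arithmetic check for $h_{N-\ell}-h_\ell-h_N$; the key observation that makes the argument painless is that $\fU_\ell\otimes\fU_N$ is simple, which turns a potentially multi-component braiding into a single scalar.
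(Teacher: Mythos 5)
Your proposal is correct and follows essentially the same route as the paper: reduce to the double braiding of $\fU_\ell$ with $\fU_N$, note that $\fU_\ell\otimes\fU_N\simeq\fU_{N-\ell}$ is simple, and evaluate the scalar to $(-1)^\ell$. You simply spell out the arithmetic for $h_{N-\ell}-h_\ell-h_N=-\ell/2$ and the ``precisely when $\ell$ is even'' direction a bit more explicitly than the paper does.
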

\begin{proof}
    The double braiding of $\fU_{\ell}$ and $\fU_N$ in $\rep_N(\hat{\mathfrak{sl}}(2))$ is 
    \begin{align*}
        c_{\ell,N}\circ c_{N,\ell}=e^{2\pi (h_{N-\ell}-h_N-h_{\ell})}\id_{N-\ell}\\
        =(-1)^{\ell}\id_{N-\ell}.
    \end{align*}
    When $\ell=2s$, the double braiding is trivial.
\end{proof}

We need the following Centralizer Theorem.
\begin{thm}[\cite{Muger2003}]\label{thm:centralizer}
    Let $\CC$ be a MTC and $\CD$ be a modular tensor subcategory of $\CC$.
    Then there is an equivalence of ribbon categories: $\CC\simeq \CD\boxtimes \fZ_2(\CD,\CC)$. 
\end{thm}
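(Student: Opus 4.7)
The plan is to construct the equivalence explicitly as the tensor product functor, and then verify it is a ribbon equivalence by combining the centralizer property with a dimension count.

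First I would define a $\bC$-linear functor
\begin{align*}
F: \CD\boxtimes \fZ_2(\CD,\CC) \longrightarrow \CC, \qquad F(X\boxtimes Y):=X\otimes Y,
\end{align*}
with the obvious action on morphisms. The monoidal structure $F(X_1\boxtimes Y_1)\otimes F(X_2\boxtimes Y_2)\to F((X_1\otimes X_2)\boxtimes (Y_1\otimes Y_2))$ is defined using the braiding $c_{Y_1,X_2}$ in $\CC$; this is well-defined precisely because $Y_1\in\fZ_2(\CD,\CC)$ has trivial double braiding with $X_2\in\CD$, which makes the resulting associator coherences hold. A direct check then shows that $F$ is braided and compatible with the twists and dualities, hence is a ribbon functor.

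Next I would prove $F$ is fully faithful. The intersection $\CD\cap \fZ_2(\CD,\CC)$ is exactly $\fZ_2(\CD)$ inside $\CD$, and since $\CD$ is modular (non-degenerate), this M\"uger center is $\rVec$. Combined with the fact that $\CC$ is semisimple, this triviality of intersection lets one argue that for simple $X_i\in\CD$ and simple $Y_i\in\fZ_2(\CD,\CC)$, the hom spaces satisfy
\begin{align*}
\hom_{\CC}(X_1\otimes Y_1,X_2\otimes Y_2)\simeq \hom_{\CD}(X_1,X_2)\otimes_{\bC}\hom_{\fZ_2(\CD,\CC)}(Y_1,Y_2),
\end{align*}
so $F$ is fully faithful on simple objects, and hence fully faithful by additivity.

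Finally, I would invoke M\"uger's dimension formula for the centralizer of a non-degenerate subcategory in a modular category,
\begin{align*}
\dim(\CD)\cdot\dim(\fZ_2(\CD,\CC))=\dim(\CC),
\end{align*}
so that $\dim(\CD\boxtimes\fZ_2(\CD,\CC))=\dim(\CC)$. A fully faithful tensor functor between fusion categories of the same global dimension is essentially surjective, hence an equivalence; combined with the ribbon compatibilities established above, this promotes $F$ to a ribbon equivalence.

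The main obstacle is establishing the dimension identity $\dim(\fZ_2(\CD,\CC))=\dim(\CC)/\dim(\CD)$, which is the technical heart of M\"uger's theorem; it requires a non-trivial argument using the non-degeneracy of the $S$-matrix of $\CC$ to show that the regular representation of $\CD$ inside $\CC$ decomposes as predicted. Everything else, including checking the ribbon-functor axioms and the hom-space factorization, reduces to diagrammatic manipulations once the centralizer condition and modularity of $\CD$ are in hand.
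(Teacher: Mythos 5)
The paper does not prove this statement at all---it is imported verbatim from M\"uger's work \cite{Muger2003}---so there is no in-paper proof to compare against; measured against M\"uger's original argument, your proposal follows essentially the same route: the tensor-product functor $F(X\boxtimes Y)=X\otimes Y$, full faithfulness via the factorization of hom spaces (which reduces to $\CD\cap\fZ_2(\CD,\CC)=\fZ_2(\CD)\simeq\rVec$, i.e.\ modularity of $\CD$), and essential surjectivity from the dimension identity $\dim(\CD)\cdot\dim(\fZ_2(\CD,\CC))=\dim(\CC)$, which you correctly flag as the technical core (it is M\"uger's double-centralizer theorem, proved using non-degeneracy of the $S$-matrix of $\CC$). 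The only imprecision worth noting is the role you assign to the centralizer condition: the monoidal structure $\id\otimes c_{Y_1,X_2}\otimes\id$ on $F$ is coherent regardless, whereas the condition $c_{X,Y}\circ c_{Y,X}=\id$ for $X\in\CD$, $Y\in\fZ_2(\CD,\CC)$ is what makes $F$ \emph{braided} (and, together with $\theta_{X\otimes Y}=\theta_X\otimes\theta_Y$, ribbon).
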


\begin{crl}
    There is an equivalence of ribbon categories:
    \begin{align*}
        \mathrm{PF}_{2k+1}:=(\rep_{2k+1}(\hat{\mathfrak{sl}}(2))\boxtimes\overline{\mathrm{Mod}_{V_L}})_A^{loc}\simeq (\CR^{even}\boxtimes\CX\boxtimes\overline{\mathrm{Mod}_{V_L}})_A^{loc}.
    \end{align*}
\end{crl}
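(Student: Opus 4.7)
The plan is to read this corollary as a direct consequence of the two preceding propositions together with the Centralizer Theorem (Theorem \ref{thm:centralizer}) of M\"uger, followed by the functorial behaviour of the local-module construction $(-)_A^{loc}$ under ribbon equivalence. The two preceding propositions establish exactly the hypotheses needed: $\CX$ is a non-degenerate braided (hence modular) ribbon subcategory of the modular tensor category $\rep_{2k+1}(\hat{\mathfrak{sl}}(2))$, and its centralizer inside $\rep_{2k+1}(\hat{\mathfrak{sl}}(2))$ is precisely $\CR^{even}$. Consequently, the first step is to invoke Theorem \ref{thm:centralizer} to obtain a ribbon equivalence
\begin{align*}
    \rep_{2k+1}(\hat{\mathfrak{sl}}(2))\simeq \CX\boxtimes \CR^{even}.
\end{align*}

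The second step is to tensor both sides with $\overline{\mathrm{Mod}_{V_L}}$. Since $\boxtimes$ of ribbon categories is functorial in ribbon equivalences, this yields a ribbon equivalence
\begin{align*}
    F:\rep_{2k+1}(\hat{\mathfrak{sl}}(2))\boxtimes \overline{\mathrm{Mod}_{V_L}}\xrightarrow{\sim}\CR^{even}\boxtimes \CX\boxtimes \overline{\mathrm{Mod}_{V_L}}.
\end{align*}
I would then track the condensable algebra $A=(\fU_0\boxtimes\overline{W_0})\oplus(\fU_N\boxtimes\overline{W_N})$ under $F$. Because $\fU_0$ and $\fU_N$ generate $\CX$ and lie in the image of the canonical inclusion $\CX\hookrightarrow \CX\boxtimes\CR^{even}$ as $\fU_0\boxtimes\one_{\CR^{even}}$ and $\fU_N\boxtimes\one_{\CR^{even}}$, the image $F(A)$ is identified with $\one_{\CR^{even}}\boxtimes A$, i.e.\ the same condensable algebra viewed inside the $\CX\boxtimes\overline{\mathrm{Mod}_{V_L}}$ factor.

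The final step is to apply the general principle that a ribbon equivalence $F:\CC\xrightarrow{\sim}\CC'$ sending a condensable algebra $A\subset\CC$ to a condensable algebra $A'\subset\CC'$ induces ribbon equivalences $\CC_A\simeq \CC'_{A'}$ and $\CC_A^{loc}\simeq (\CC')^{loc}_{A'}$, since the property of being a (local) right module is stated entirely in terms of tensor product, braiding, and unit, all of which are preserved by $F$. Applying this to $F$ and the identification $F(A)=\one_{\CR^{even}}\boxtimes A$ gives the claimed ribbon equivalence
\begin{align*}
    \mathrm{PF}_{2k+1}=(\rep_{2k+1}(\hat{\mathfrak{sl}}(2))\boxtimes\overline{\mathrm{Mod}_{V_L}})_A^{loc}\simeq (\CR^{even}\boxtimes\CX\boxtimes\overline{\mathrm{Mod}_{V_L}})_A^{loc}.
\end{align*}

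The main obstacle is purely bookkeeping: checking that the M\"uger decomposition $\rep_{2k+1}(\hat{\mathfrak{sl}}(2))\simeq\CX\boxtimes\CR^{even}$ really does send $\fU_N$ to $\fU_N\boxtimes\one$ (so that the image of $A$ acquires no nontrivial component in the $\CR^{even}$ factor), and that this identification is compatible with the algebra multiplication on $A$. Once these coherence checks are in hand, the induced equivalence on local modules is automatic. No genuinely new calculation is required beyond the two pieces already proved (non-degeneracy of $\CX$ and computation of its centralizer).
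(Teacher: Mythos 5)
Your proposal is correct and follows essentially the same route as the paper: the corollary is obtained by applying M\"uger's Centralizer Theorem (using the two preceding propositions that $\CX$ is non-degenerate and that $\fZ_2(\CX,\rep_{2k+1}(\hat{\mathfrak{sl}}(2)))=\CR^{even}$), tensoring with $\overline{\mathrm{Mod}_{V_L}}$, and transporting the local-module construction along the resulting ribbon equivalence. The bookkeeping you flag — that $A$ maps to $\fU_0\boxtimes A_0$ with trivial $\CR^{even}$-component — is exactly the identification the paper makes in the paragraph immediately following the corollary.
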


Under the equivalence $\rep_{2k+1}(\hat{\mathfrak{sl}}(2))\simeq \CR^{even}\boxtimes \CX$, we have $\fU_0\mapsto \fU_0\boxtimes \fU_0$ and $\fU_{2k+1}\mapsto \fU_0\boxtimes \fU_{2k+1}$.
Respectively, the condensable algebra $A=(\fU_0\boxtimes \overline{W_0})\oplus (\fU_{2k+1}\boxtimes \overline{W_{2k+1}})$ in $\rep_{2k+1}(\hat{\mathfrak{sl}}(2))\boxtimes\overline{\mathrm{Mod}_{V_L}}$ becomes
\begin{align*}
    A'&=(\fU_0\boxtimes \fU_0\boxtimes \overline{W_0})\oplus (\fU_0\boxtimes \fU_{2k+1}\boxtimes \overline{W_{2k+1}})\\
    &=\fU_0\boxtimes((\fU_0\boxtimes \overline{W_0})\oplus (\fU_{2k+1}\boxtimes \overline{W_{2k+1}}))\\
    &=\fU_0\boxtimes A_0,
\end{align*}
where $A_0:=(\fU_0\boxtimes \overline{W_0})\oplus (\fU_{2k+1}\boxtimes \overline{W_{2k+1}})$ is a condensable algebra in $\CX\boxtimes\overline{\mathrm{Mod}_{V_L}}$.
So we can rewrite $\mathrm{PF}_{2k+1}$ as
\begin{align*}
    \mathrm{PF}_{2k+1}\simeq (\CR^{even}\boxtimes(\CX\boxtimes\overline{\mathrm{Mod}_{V_L}}))_{\fU_0\boxtimes A_0}^{loc}\simeq (\CR^{even})_{\fU_0}^{loc}\boxtimes (\CX\boxtimes \overline{\mathrm{Mod}_{V_L}})_{A_0}^{loc}.
\end{align*}
It is clear that $(\CR^{even})_{\fU_0}^{loc}\simeq \CR^{even}$ because $\fU_0$ is the trivial condensable algebra (tensor unit) in $\CR^{even}$.
And the modular tensor category $(\CX\boxtimes \overline{\mathrm{Mod}_{V_L}})_{A_0}^{loc}$ is actually a pointed modular tensor category and thus can be described by a twisted metric group.
\begin{prp}
    There is an equivalence of modular tensor categories
    \begin{align*}
        (\CX\boxtimes\overline{\mathrm{Mod}_{V_L}})_{A_0}^{loc}\simeq \CC(\bZ_{2k+1},p,\xi).
    \end{align*}
    where $p(a)=\mathrm{exp}(-\frac{2\pi\mathrm{i} }{2k+1}a^2)$ and $\xi(a)=1$ for $a\in\bZ_{2k+1}$.
\end{prp}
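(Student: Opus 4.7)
The plan is to construct all simple objects of $\CD:=(\CX\boxtimes\overline{\mathrm{Mod}_{V_L}})_{A_0}^{loc}$ explicitly, check that $\CD$ is pointed with $2k+1$ simples, and then read off the quadratic form and character to invoke the classification from Appendix \ref{Sec:PBFC}. First I would compute the dimension using Theorem \ref{thm:dimension}: since $\dim(\CX)=1^2+1^2=2$, $\dim(\overline{\mathrm{Mod}_{V_L}})=2N$, and $\dim(A_0)=1^2+1^2=2$, one gets $\dim(\CD)=2\cdot 2N/2^2=N=2k+1$.

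Next I would identify all simple local free $A_0$-modules by imitating the induction--restriction argument from Section \ref{sec:odd}. Locality of a free module $x\otimes A_0$ is equivalent to triviality of the monodromy of $x$ with each simple summand of $A_0$. Using the double-braiding formula in $\rep_N(\widehat{\mathfrak{sl}}(2))$ with $\ell=\ell'=N$ (forcing $s=N$) yields $c_{\fU_N,\fU_N}\circ c_{\fU_N,\fU_N}=e^{-\pi \mathrm{i}N}=-1$, while in $\overline{\mathrm{Mod}_{V_L}}$ one has $c_{\overline{W_m},\overline{W_N}}\circ c_{\overline{W_N},\overline{W_m}}=e^{-\pi \mathrm{i}m}=(-1)^m$. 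Combined with the identification $(\fU_0\boxtimes\overline{W_m})\otimes A_0\cong(\fU_N\boxtimes\overline{W_{m+N}})\otimes A_0$, locality forces the representative to satisfy $m$ even, so the local free modules are exactly
\begin{align*}
x_a \;:=\; (\fU_0\boxtimes\overline{W_{2a}})\otimes A_0 \;=\; (\fU_0\boxtimes\overline{W_{2a}})\oplus(\fU_N\boxtimes\overline{W_{2a+N}}),\qquad a=0,1,\dots,2k.
\end{align*}
Simplicity and pairwise inequivalence follow from the same adjunction used in Section \ref{sec:odd}. Since each $x_a$ has quantum dimension $1$ in $\CD$ and $\sum_{a=0}^{2k}1^2=2k+1=\dim(\CD)$, these exhaust the simples and $\CD$ is pointed.

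Then I would extract the group and quadratic data. Because induction is monoidal for a commutative algebra, $x_a\otimes_{A_0} x_b\cong(\fU_0\boxtimes\overline{W_{2a+2b}})\otimes A_0 = x_{(a+b)\,\mathrm{mod}\,(2k+1)}$, so the group of isomorphism classes of simples is $\bZ_{2k+1}$. The braiding and twist on $\CD$ are inherited from $\CX\boxtimes\overline{\mathrm{Mod}_{V_L}}$, so I can compute $p(a)=c_{x_a,x_a}$ on the chosen summand $\fU_0\boxtimes\overline{W_{2a}}$: the factor in $\CX$ is trivial, while in $\overline{\mathrm{Mod}_{V_L}}$ the reversed braiding gives $c_{\overline{W_{2a}},\overline{W_{2a}}}=\mathrm{e}^{-\pi \mathrm{i}(2a)^2/(2N)}=\mathrm{e}^{-2\pi \mathrm{i}a^2/(2k+1)}$. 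A parallel computation with the reversed twist gives $\theta(x_a)=\mathrm{e}^{-2\pi \mathrm{i}a^2/(2k+1)}=p(a)$, so the character $\xi(a)=\theta(x_a)/p(a)=1$. The classification of pointed pre-modular categories by twisted metric groups (Appendix \ref{Sec:PBFC}) then gives the asserted braided equivalence $\CD\simeq\CC(\bZ_{2k+1},p,\xi)$.

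The main obstacle I anticipate is bookkeeping rather than conceptual: one must verify that the identifications $m\sim m+N$ at the level of labels (combined with the fact that locality singles out a unique parity within each orbit) really produce pairwise distinct simple $x_a$'s and a fusion law matching $\bZ_{2k+1}$ (and not, say, a direct product of smaller cyclic groups), and that the braiding and twist computed on the chosen summand $\fU_0\boxtimes\overline{W_{2a}}$ coincide with the intrinsic braiding and twist on the local module $x_a$ in $\CD$. Both checks are routine consequences of the general properties of $\CC_A^{loc}$ recorded in \cite{KO2002}.
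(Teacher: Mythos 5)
Your proposal is correct and follows essentially the same route as the paper's (very terse) proof: identify the simple objects of $(\CX\boxtimes\overline{\mathrm{Mod}_{V_L}})_{A_0}^{loc}$ as the free modules $(\fU_0\boxtimes\overline{W_{2a}})\otimes A_0$, observe that the fusion is that of the $W_{2a}$, and read off the quadratic data. You merely make explicit the ``short calculation'' the paper omits (the dimension count, the parity/locality check, and the braiding and twist evaluations), and your index range $a=0,\dots,2k$ is the correct one (the paper's $a=0,\dots,2k+1$ is an off-by-one slip).
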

\begin{proof}
    The simple objects in $\CX\boxtimes \mathrm{Mod}_{V_L}$ are $\fU_0\boxtimes \overline{W_m}$ and $\fU_{2k+1}\boxtimes \overline{W_{m'}}$.
    After a short calculation we can check $\{(\fU_0\boxtimes \overline{W_{2a}})\otimes A_0\mid a=0,\dots, 2k+1\}$ exhausts all simple objects of $(\CX\boxtimes \overline{\mathrm{Mod}_{V_L}})_{A_0}^{loc}$.
    Notice that simple objects of $(\CX\boxtimes \overline{\mathrm{Mod}_{V_L}})_{A_0}^{loc}$ are all free $A_0$-modules, by the discussion in Appendix \ref{sec:math_PFN}, the fusion rules of $(\CX\boxtimes \overline{\mathrm{Mod}_{V_L}})_{A_0}^{loc}$ are actually the fusion rules of $W_{2a}\in\mathrm{Mod}_{V_L}$.
    Then there is an equivalence $(\CX\boxtimes \overline{\mathrm{Mod}_{V_L}})_{A_0}^{loc}\simeq \CC(\mathbb{Z}_{2k+1},p,\xi)$.
    
\end{proof}

\begin{crl}
    There is an equivalence of modular tensor categories
    \begin{align*}
        \mathrm{PF}_{2k+1}&\simeq \CR^{even}\boxtimes \CC(\mathbb{Z}_{2k+1},p,\xi)\\
        \fM_{2\ell,2m}&\mapsto \fU_{2\ell}\boxtimes W_{2m}.
    \end{align*}
\end{crl}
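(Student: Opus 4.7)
With the assembly already completed in the preceding corollary and proposition, the first claim $\mathrm{PF}_{2k+1}\simeq \CR^{even}\boxtimes \CC(\bZ_{2k+1},p,\xi)$ is immediate: substitute $(\CR^{even})_{\fU_0}^{loc}\simeq \CR^{even}$ and $(\CX\boxtimes \overline{\mathrm{Mod}_{V_L}})_{A_0}^{loc}\simeq \CC(\bZ_{2k+1},p,\xi)$ into the decomposition furnished by the previous corollary. Each constituent equivalence is one of ribbon (modular tensor) categories, so the composite is too. The real work is to pin down the explicit correspondence on simple objects.

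To establish $\fM_{2\ell,2m}\mapsto \fU_{2\ell}\boxtimes W_{2m}$, I would trace the object through each link of the chain. Starting from $\fM_{2\ell,2m} = (\fU_{2\ell}\boxtimes \overline{W_{2m}})\otimes A$ in $(\rep_{2k+1}(\hat{\mathfrak{sl}}(2))\boxtimes \overline{\mathrm{Mod}_{V_L}})_A^{loc}$, the Centralizer Theorem equivalence $\rep_{2k+1}(\hat{\mathfrak{sl}}(2))\simeq \CR^{even}\boxtimes\CX$ sends $\fU_{2\ell}\in\CR^{even}$ to $\fU_{2\ell}\boxtimes \fU_0$, and simultaneously identifies $A$ with $\fU_0\boxtimes A_0$. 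Under these identifications,
\[
(\fU_{2\ell}\boxtimes \overline{W_{2m}})\otimes A \;\longmapsto\; \fU_{2\ell}\boxtimes \bigl((\fU_0\boxtimes \overline{W_{2m}})\otimes A_0\bigr),
\]
and the second tensor factor is one of the simple local $A_0$-modules enumerated in the preceding proposition. Under the equivalence of that proposition with the pointed category $\CC(\bZ_{2k+1},p,\xi)$, this factor corresponds to the simple object denoted $W_{2m}$.

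Third, I would verify bijectivity on simples. The equivalence relation $\fM_{\ell,m}\sim \fM_{2k+1-\ell,\,m+2k+1}$ on $\mathrm{PF}_{2k+1}$ labels flips the parity of both indices, so combined with the branching condition $\ell+m\equiv 0\bmod 2$ each equivalence class has a unique representative with $\ell$ and $m$ both even, i.e.\ of the form $\fM_{2\ell,2m}$ with $0\le\ell\le k$ and $0\le m\le 2k$. This yields $(k+1)(2k+1)$ classes, matching $|\mathrm{Irr}(\CR^{even})|\cdot|\bZ_{2k+1}|$, so the map is a bijection on simples once the $\bZ_{2k+1}$-labeling on the right-hand side (read via the bijection $m\mapsto 2m\bmod(2k+1)$, valid because $2$ is a unit modulo an odd integer) is fixed accordingly.

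The main obstacle is not a structural theorem but the careful handling of bookkeeping: tracking the equivalence relation on $\fM_{\ell,m}$ labels, the index shifts induced by tensoring with $A$ and $A_0$, and the precise convention by which $W_{2m}$ denotes an element of $\CC(\bZ_{2k+1},p,\xi)$. Once these conventions are fixed, matching of fusion rules, braiding, and twist is automatic because each link in the chain is already an equivalence of modular tensor categories.
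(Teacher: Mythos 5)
Your proposal is correct and follows essentially the same route as the paper, which states this corollary without further proof as an immediate assembly of the preceding corollary ($\mathrm{PF}_{2k+1}\simeq(\CR^{even}\boxtimes\CX\boxtimes\overline{\mathrm{Mod}_{V_L}})_A^{loc}$ with $A\mapsto\fU_0\boxtimes A_0$) and the proposition identifying $(\CX\boxtimes\overline{\mathrm{Mod}_{V_L}})_{A_0}^{loc}$ with $\CC(\bZ_{2k+1},p,\xi)$; your tracing of free modules and the parity/counting argument for bijectivity on simples are correct supplements. One small caution: the identification of $W_{2a}$ with an element of $\bZ_{2k+1}$ that matches the quadratic form $p$ is $W_{2a}\leftrightarrow a$ (since $q(\overline{W_{2a}})=\mathrm{e}^{-2\pi\mathrm{i}a^2/(2k+1)}$), not $a\mapsto 2a\bmod(2k+1)$, though this convention does not affect the corollary as stated.
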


Hence we can write $\fZ_1(\mathrm{PF}_{2k+1})\simeq \fZ_1(\CR^{even})\boxtimes \fZ_1(\CC(\mathbb{Z}_{2k+1},p,\xi))$.
Under this equivalence, the condensable algebra 
$B=\bigoplus_{u=0}^{k}\fM_{2u,0}\boxtimes\overline{\fM_{2u,0}}$ in $\fZ_1(\mathrm{PF}_{2k+1})$ becomes $(\bigoplus_{u=0}^{k} \fU_{2u}\boxtimes\overline{\fU_{2u}})\boxtimes (W_0\boxtimes \overline{W_0} )$.
where $B_0:=\bigoplus_{u=0}^{k} \fU_{2u}\boxtimes\overline{\fU_{2u}}$
is a condensable algebra in $\fZ_1(\CR^{even})$, and $W_0\boxtimes \overline{W_0}$ is the trivial condensable algebra in $\fZ_1(\CC(\mathbb{Z}_{2k+1},p,\xi))$.
So we have 
\begin{align}
\fZ_1(\mathrm{PF}_{2k+1})_B^{loc}\simeq \fZ_1(\CR^{even})_{B_0}^{loc}\boxtimes \fZ_1(\CC(\mathbb{Z}_{2k+1},p,\xi)).
\end{align}
By computing quantum dimensions, it is not hard to see that $B_0$ is lagrangian in $\fZ_1(\CR^{even})$, so $\fZ_1(\CR^{even})_{B_0}^{loc}\simeq \rVec$.
Note that the twisted pre-metric group $(\bZ_{2k+1},p,\xi)$ is non-degenerate, then $\fZ_1(\CC(\mathbb{Z}_{2k+1},p,\xi))\simeq \CC(\mathbb{Z}_{2k+1}\times \mathbb{Z}_{2k+1},p\times \overline{p},\xi\times\overline{\xi} )$,
\begin{prp}
    Then there is an equivalence of MTCs
\begin{align*}
    \fZ_1(\mathrm{PF}_{2k+1})_B^{loc}\simeq \CC(\mathbb{Z}_{2k+1}\times \mathbb{Z}_{2k+1},p\times \overline{p},\xi\times\overline{\xi} ).
\end{align*}
\end{prp}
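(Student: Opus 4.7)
The plan is to assemble the equivalence from the two tensor factors that have already been isolated in the preceding paragraph. Starting from
\begin{align*}
\fZ_1(\mathrm{PF}_{2k+1})_B^{loc} \simeq \fZ_1(\CR^{even})_{B_0}^{loc} \boxtimes \fZ_1(\CC(\bZ_{2k+1},p,\xi)),
\end{align*}
I would handle the two factors separately and then multiply. For the first factor, one needs to verify that $B_0 = \bigoplus_{u=0}^{k}\fU_{2u}\boxtimes\overline{\fU_{2u}}$ is lagrangian in $\fZ_1(\CR^{even})\simeq \CR^{even}\boxtimes\overline{\CR^{even}}$. Commutativity and connectedness follow from the general construction $\otimes^R(\mathbf{1})$ from Section \ref{sec:wall}, while the lagrangian condition is a quantum-dimension count: $\dim(B_0)^2 = (\sum_u \dim(\fU_{2u})^2)^2 = \dim(\CR^{even})^2 = \dim(\fZ_1(\CR^{even}))$. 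Applying the corollary to Theorem \ref{thm:dimension} then gives $\fZ_1(\CR^{even})_{B_0}^{loc}\simeq \mathbf{H}$.

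For the second factor, the fact that $(\bZ_{2k+1},p,\xi)$ is a (twisted) metric group means that $\CC(\bZ_{2k+1},p,\xi)$ is a pointed modular tensor category. By the classification theorem for non-degenerate braided fusion categories, $\fZ_1(\CD)\simeq \CD\boxtimes\overline{\CD}$ for any non-degenerate braided fusion category $\CD$; applied to $\CD = \CC(\bZ_{2k+1},p,\xi)$ and using the functoriality of the construction $(G,q,\chi)\mapsto \CC(G,q,\chi)$ reviewed in Appendix \ref{Sec:PBFC}, this gives
\begin{align*}
\fZ_1(\CC(\bZ_{2k+1},p,\xi)) \simeq \CC(\bZ_{2k+1},p,\xi)\boxtimes \overline{\CC(\bZ_{2k+1},p,\xi)} \simeq \CC(\bZ_{2k+1}\times\bZ_{2k+1},\, p\times\overline{p},\,\xi\times\overline{\xi}),
\end{align*}
where the second equivalence uses that the Deligne tensor product of two pointed modular categories corresponds to the product of their twisted metric groups, with conjugation of $q$ and $\chi$ in the time-reversed factor.

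Combining the two factors and using $\mathbf{H}\boxtimes \CD\simeq \CD$ for any $\CD$, we obtain
\begin{align*}
\fZ_1(\mathrm{PF}_{2k+1})_B^{loc} \simeq \mathbf{H}\boxtimes \CC(\bZ_{2k+1}\times\bZ_{2k+1},\,p\times\overline{p},\,\xi\times\overline{\xi}) \simeq \CC(\bZ_{2k+1}\times\bZ_{2k+1},\,p\times\overline{p},\,\xi\times\overline{\xi}),
\end{align*}
as braided tensor categories; promotion to ribbon (MTC) equivalence then follows because both sides inherit their twists from the same parafermion data. The only real obstacle is the lagrangian check for $B_0$: once the dimension identity is in hand, everything else is a formal consequence of the general machinery in Appendices \ref{sec:any_cond_maths} and \ref{Sec:PBFC}. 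As a sanity check, the total number of simple objects of the right-hand side is $(2k+1)^2$, matching the count from Section \ref{sec:odd} of inequivalent simple local $B$-modules $\cX_{a,b}$, which also confirms the correspondence $\cO_{a+b,a-b}\sim \cX_{a,b}$ stated in the main theorem.
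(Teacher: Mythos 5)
Your proposal is correct and follows essentially the same route as the paper: the paper's justification is precisely the paragraph preceding the proposition, namely the splitting $\fZ_1(\mathrm{PF}_{2k+1})_B^{loc}\simeq \fZ_1(\CR^{even})_{B_0}^{loc}\boxtimes \fZ_1(\CC(\bZ_{2k+1},p,\xi))$, the quantum-dimension argument that $B_0$ is lagrangian (so the first factor is trivial), and the identification of $\fZ_1(\CC(\bZ_{2k+1},p,\xi))$ with $\CC(\bZ_{2k+1}\times\bZ_{2k+1},p\times\overline{p},\xi\times\overline{\xi})$ via non-degeneracy. You merely spell out the details (the explicit dimension count for $B_0$ and the appeal to $\fZ_1(\CD)\simeq\CD\boxtimes\overline{\CD}$) that the paper leaves implicit.
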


Recall that in Appendix.~\ref{Sec:PBFC}, we have shown that $\fZ_1(\rep(\bZ_{2k+1}))$ is a pointed modular tensor category $\CC(\bZ_{2k+1}\times \bZ_{2k+1},q,\tau)$.
Thus if there exists an isomorphism between two twisted metric groups, the corresponding pointed modular tensor categories should be equivalent to each other.
\begin{thm}

    There is an equivalence of modular tensor categories
    \begin{align}
        \fZ_1(\mathrm{PF}_{2k+1})_B^{loc}\simeq\fZ_1(\rep(\mathbb{Z}_{2k+1})).
    \end{align}
\end{thm}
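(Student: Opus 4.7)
The plan is to reduce the claimed equivalence of modular tensor categories to an isomorphism of twisted metric groups, invoking the classification of pointed pre-modular categories reviewed in Appendix~\ref{Sec:PBFC}. By the preceding proposition, we already know
\[
\fZ_1(\mathrm{PF}_{2k+1})_B^{loc}\;\simeq\;\CC(\bZ_{2k+1}\times\bZ_{2k+1},\,p\times\overline{p},\,\xi\times\overline{\xi}),
\]
and by the worked example in Appendix~\ref{Sec:PBFC} we have
\[
\fZ_1(\rep(\bZ_{2k+1}))\;\simeq\;\CC(\bZ_{2k+1}\times\bZ_{2k+1},\,q,\,\tau),
\]
with $q(\alpha,\beta)=\mathrm{e}^{\frac{2\pi\mathrm{i}}{2k+1}\alpha\beta}$ and $\tau(\alpha,\beta)=1$. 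Since isomorphic twisted metric groups give braided equivalent pointed pre-modular categories, it suffices to produce an orthogonal isomorphism of twisted metric groups between these two presentations.

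The candidate isomorphism is the group homomorphism
\[
\phi:\bZ_{2k+1}\times\bZ_{2k+1}\longrightarrow\bZ_{2k+1}\times\bZ_{2k+1},\qquad \phi(a,b)=(a+b,\;a-b).
\]
This map is an isomorphism of abelian groups because $2$ is invertible modulo $N=2k+1$, with explicit inverse $(\alpha,\beta)\mapsto(2^{-1}(\alpha+\beta),\,2^{-1}(\alpha-\beta))$. This is precisely the change of variables already used implicitly in the identification $\cO_{a+b,a-b}\sim\cX_{a,b}$ in Section~\ref{sec:odd}, so the proof is merely verifying that this combinatorial dictionary upgrades to an isomorphism of twisted metric groups.

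The key computation is then to check that $q\circ\phi=p\times\overline{p}$ and $\tau\circ\phi=\xi\times\overline{\xi}$. For the quadratic form, one computes
\[
(q\circ\phi)(a,b)=\mathrm{e}^{\frac{2\pi\mathrm{i}}{2k+1}(a+b)(a-b)}=\mathrm{e}^{\frac{2\pi\mathrm{i}}{2k+1}a^2}\cdot\mathrm{e}^{-\frac{2\pi\mathrm{i}}{2k+1}b^2}=p(a)\,\overline{p}(b),
\]
using the definition $p(x)=\mathrm{e}^{-\frac{2\pi\mathrm{i}}{2k+1}x^2}$ inherited from the Heisenberg/lattice factor (and hence $\overline{p}(x)=\mathrm{e}^{\frac{2\pi\mathrm{i}}{2k+1}x^2}$ for the time-reversed factor in the center). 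For the character, both $\tau$ and $\xi\times\overline{\xi}$ are identically $1$, so the check is trivial. Combined, these verify that $\phi$ is an orthogonal isomorphism of the underlying twisted metric groups, yielding the desired equivalence of modular tensor categories.

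The only genuine obstacle is bookkeeping: one must be careful with the sign conventions relating the quadratic form $p$, the braiding of the lattice VOA module category $\mathrm{Mod}_{V_L}$, and its time reversal $\overline{\mathrm{Mod}_{V_L}}$, to ensure the Drinfeld center of a pointed MTC $\CC(G,p,\xi)$ is correctly presented as $\CC(G\times G,p\times\overline{p},\xi\times\overline{\xi})$ with the sign of the second factor's quadratic form matching conventions. Once this is pinned down consistently, the rest of the argument is a direct substitution, and the compatibility with the anyon-level correspondence $\cO_{a+b,a-b}\sim\cX_{a,b}$ serves as a cross-check that the isomorphism $\phi$ is indeed the correct one.
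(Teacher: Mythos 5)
Your proposal is correct and follows essentially the same route as the paper: reduce to an isomorphism of twisted metric groups via the classification of pointed modular tensor categories, use the automorphism $(a,b)\mapsto(a+b,a-b)$ of $\bZ_{2k+1}\times\bZ_{2k+1}$ (invertible since $2$ is a unit mod $2k+1$), and verify compatibility of the quadratic forms and the trivial characters. The sign ambiguity you flag between $p\times\overline{p}$ and $q\circ\phi$ is present in the paper's own proof as well and is harmless, since it is absorbed by composing with the swap $(a,b)\mapsto(b,a)$.
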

\begin{proof}
    For $(a,b)\in\mathbb{Z}_{2k+1}\times \mathbb{Z}_{2k+1}$, we have $(p\times \overline{p})(a,b)=\mathrm{exp}(\frac{2\pi\mathrm{i}}{2k+1}(a^2-b^2))$.
    We know that $\fZ_1(\rep(\mathbb{Z}_{2k+1}))\simeq \CC(\mathbb{Z}_{2k+1}\times \mathbb{Z}_{2k+1},q,\tau)$, where $q(a,b)=\mathrm{exp}(\frac{2\pi\mathrm{i}}{2k+1} ab)$.

    It is not hard to find there is a group automorphism $f:(a,b)\mapsto (a+b,a-b)$ on $\mathbb{Z}_{2k+1}\times \mathbb{Z}_{2k+1}$ such that $f$ is an isomorphism of pre-metric groups $(\mathbb{Z}_{2k+1}\times\mathbb{Z}_{2k+1},p\times \overline{p})$ and $(\mathbb{Z}_{2k+1}\times\mathbb{Z}_{2k+1},q)$.
    It is also clear that $\xi\times\overline{\xi}(a,b)=\tau(a+b,a-b)=\tau\circ f(a,b)$.

    Then by the equivalence theorem of category of pointed modular tensor categories and category of twisted metric groups, this isomorphism induces an equivalence between $\fZ_1(\mathrm{PF}_{2k+1})_B^{loc}$ with $\fZ_1(\rep(\mathbb{Z}_{2k+1}))$.
\end{proof}

\section{Preliminaries of \texorpdfstring{$\hat{\mathrm{g}}_k/\hat{\mathrm{p}}_{x_e k}$}{gkpxek} coset construction}\label{coset}
Coset construction of Goddard–Kent–Olive \cite{goddard1986} is a well-known method to construct rational
conformal field theories with known chiral data, such as conformal weights, fusion rules, braiding and fusing matrices. Given an affine Lie algebra $\hat{\mathrm{g}}$ together with a choice $k$ of levels, i.e. a positive integer for each simple ideal of $\hat{\mathrm{g}}$, and subalgebra $\hat{\mathrm{p}}$  of $\hat{\mathrm{g}}$  with  level $x_e k$, that determined by the embedding index $x_e$, we can show that $ L_{m}^{(\mathrm{g} / \mathrm{p})} =L_{m}^{\mathrm{g}}-L_{m}^{\mathrm{p}}$ also satisfies the Virasoro algebra and further construct the other chiral data, where $L_{m}^{\mathrm{g}}$ and $L_{m}^{\mathrm{p}}$ are the Virasoro modes of $\hat{\mathrm{g}}$ and $\hat{\mathrm{p}}$, respectively. In the following, we directly list the basic ingredients of $\hat{\mathrm{g}}_k/\hat{\mathrm{p}}_{x_e k}$ coset construction.For more details, please refer to. Coset Virasoro algebra:
 
 \begin{equation}
 \begin{aligned}
 L_{m}^{(\mathrm{g} / \mathrm{p})}&=L_{m}^{\mathrm{g}}-L_{m}^{\mathrm{p}} \\
 \left[L_{m}^{(\mathrm{g} / \mathrm{p})}, L_{n}^{(\mathrm{g}/ \mathrm{p})}\right] &=\left[L_{m}^{\mathrm{g}}, L_{n}^{\mathrm{g}}\right]-\left[L_{m}^{\mathrm{p}}, L_{n}^{\mathrm{p}}\right] \\
 &=(m-n) L_{m+n}^{\mathrm{g}/ \mathrm{p}}+\left(c\left(\hat{\mathrm{g}}_{k}\right)-c\left(\hat{\mathrm{p}}_{x_{e} k}\right)\right) \frac{\left(m^{3}-m\right)}{12} \delta_{m+n, 0}
 \end{aligned}
 \end{equation}
 
 coset central charge:
 \begin{equation}
 \begin{aligned}
 c\left(\hat{\mathrm{g}}_{k} / \hat{\mathrm{p}}_{x_{e} k}\right)&=c\left(\hat{\mathrm{g}}_{k}\right)-c\left(\hat{\mathrm{p}}_{x_{e} k}\right) \\
 &=\frac{k \operatorname{dim} \mathrm{g}}{k+g}-\frac{x_{e} k \operatorname{dim} \mathrm{p}}{x_{e} k+p}
 \end{aligned}
 \end{equation}
 where $g/p$ is the dual Coxeter number of group g/p.
 
 Because the various representation $\hat{\lambda}$ of $\hat{\mathrm{g}}$ can decompose into a direct sum of representations $\hat{\mu}$ of $\mathrm{\hat{p}}$
 \begin{equation}
 \hat{\lambda} \mapsto \bigoplus_{\hat{\mu}} b_{\hat{\lambda} \hat{\mu}} \hat{\mu},
 \end{equation}
 the corresponding coset characters can be obtained from this decomposition,
 \begin{equation}
 \chi_{\mathcal{P} \hat{\lambda}}=\sum_{\hat{\mu}} \chi_{\{\hat{\lambda};~\hat{\mu}\}} \chi_{\hat{\mu}}.
 \end{equation}
 
 Conformal weight of coset primary field,
 \begin{equation}
 h_{\{\hat{\lambda};~\hat{\mu}\}}=h_{\hat{\lambda}}-h_{\hat{\mu}}+n,
 \end{equation}
 where $n$ is the grade in the representation of $\mathrm{g}$ which the tip of $\hat{\mu}$ representation lies at.
 
 Modular transformation matrices of coset characters $ \chi_{\{\hat{\lambda};~\hat{\mu}\}}$,
 \begin{equation}
 \begin{array}{l}
 \mathcal{S}_{\{\hat{\lambda} ; ~\hat{\mu}\},\left\{\hat{\lambda}^{\prime} ; ~\hat{\mu}^{\prime}\right\}}=\mathcal{S}_{\hat{\lambda} \hat{\lambda}^{\prime}}^{(k)} \overline{\mathcal{S}}_{\hat{\mu} \hat{\mu}^{\prime}}^{\left(k x_{e}\right)} \\
 \mathcal{T}_{\{\hat{\lambda} ; ~\hat{\mu}\},\left(\hat{\lambda}^{\prime} ; ~\hat{\mu}^{\prime}\right\}}=\mathcal{T}_{\hat{\lambda} \hat{\lambda}^{\prime}}^{(k)} \overline{\mathcal{T}}_{\hat{\mu} \hat{\mu}^{\prime}}^{\left(k x_{e}\right)}.
 \end{array}
 \end{equation}
 
 Modular invariant partition function in the coset theory is simply to take, for the coset mass matrix $\mathcal{M}$, the product
 
 \begin{equation}
 \mathcal{M}=\mathcal{M}^{(k)}\mathcal{M}^{(x_ek)}.
 \end{equation}
 However, the branching conditions will impose constraints on the summations of partition function. Furthermore, field identifications must be considered to divide the partition function by $N$ (If all orbits of field identifications have length $N$). The partition function is then
 \begin{equation}
 \mathsf{Z}=\frac{1}{N} \sum_{\hat{\lambda}, \hat{\lambda}^{\prime} \in P_{+}^{(k)} \hat{\mu}, \hat{\mu}^{\prime} \in P_{+}^{(kx_e)} \atop \mathcal{P} \lambda-\mu=\mathcal{P}\lambda^{\prime}-\mu^{\prime}=0~\text{mod}~\mathcal{Q}} \chi_{\{ \hat{\lambda} ;~\hat{\mu}\}}(\tau) \mathcal{M}_{\hat{\lambda}, \hat{\lambda}^{\prime}}^{(k)} \mathcal{M}_{\hat{\mu}, \hat{\mu}^{\prime}}^{\left(k x_{e}\right)} \overline{\chi}_{\{\hat{\lambda}^{\prime} ;~\hat{\mu}^{\prime}\}}(\bar{\tau}),
 \end{equation}
 
 where $\mathcal{P}\lambda-\mu=\mathcal{P}\lambda^{\prime}-\mu^{\prime}=0~\text{mod}~\mathcal{Q}$ is the branching condition, $\mathcal{Q}$ is the root lattice of g.

\bibliographystyle{alpha}
\bibliography{Zngapless}

\end{document}